\newcommand\polylog{{\rm polylog}}
\newcommand\argmin{{\rm argmin}}
\newcommand\argmax{{\rm argmax}}
\newcommand\opt{{\rm opt}}
\newcommand\sopt{{s_{\opt}}}
\newcommand\Max{{M_{\max}}}
\newcommand\Mis{{M_{\rm mis}}}
\newcommand{\pluseq}{\mathrel{+}=}
\newcommand{\dist}{\ensuremath{\mathit{dist}}\xspace}
\newcommand{\Matousek}{Matou{\v s}ek\xspace}
\newcommand{\mathset}[1]{\ensuremath {\mathbb {#1}}\xspace}
\newcommand{\eps}{\ensuremath{\varepsilon}\xspace}
\newcommand{\R}{\mathset {R}}
\newcommand{\mkmcal}[1]{\ensuremath{\mathcal{#1}}\xspace}
\newcommand{\A}{\mkmcal{A}}
\newcommand{\etal}{\textnormal{et al.}\xspace}
\newcommand{\CH}{\ensuremath{\it CH}\xspace}
\newcommand{\enumit}[1]{\textcolor{darkgray}{\sffamily\bfseries\upshape\mathversion{bold}#1}}
\title{Robust Classification of Dynamic Bichromatic Point Sets in $\R^2$}
\author{Erwin Glazenburg}{Department of Information and Computing
  Sciences, Utrecht University, The
  Netherlands}{e.p.glazenburg@uu.nl}{}{Supported by the Netherlands Organisation for Scientific Research
(NWO) under project OCENW.M20.135.
}
\author{Marc van Kreveld}{Department of Information and Computing Sciences, Utrecht University, The Netherlands}{m.j.vankreveld@uu.nl}{}{}
\author{Frank Staals}{Department of Information and Computing Sciences, Utrecht University, The Netherlands}{f.staals@uu.nl}{}{}
\authorrunning{E. Glazenburg, M. van Kreveld, F. Staals}
\keywords{classification, duality, data structures}
\begin{document}

\maketitle

\begin{abstract}
  Let $R \cup B$ be a set of $n$ points in $\R^2$, and let
  $k \in 1..n$. Our goal is to compute a line that ``best'' separates
  the ``red'' points $R$ from the ``blue'' points $B$ with at most $k$
  outliers. We present an efficient semi-online dynamic data structure
  that can maintain whether such a separator exists. Furthermore, we
  present efficient exact and approximation algorithms that compute a
  linear separator that is guaranteed to misclassify at most $k$,
  points and minimizes the distance to the farthest outlier. Our exact
  algorithm runs in $O(nk + n \log n)$ time, and our
  $(1+\eps)$-approximation algorithm runs in
  $O(\eps^{-1/2}((n + k^2) \log n))$ time. Based on our
  $(1+\eps)$-approximation algorithm we then also obtain a semi-online
  data structure to maintain such a separator efficiently.
\end{abstract}

\clearpage

\section{Introduction}

Classification is a well known and well studied problem: given a
``training'' set of $n$ data items with known classes, decide which
class to assign to a new query item. Support Vector Machines
(SVMs)~\cite{svm} are a popular method for binary classification in
which there are just two classes: \emph{red} and \emph{blue}. An SVM
maps the input data items to points in $\R^d$, and constructs a
hyperplane $s$ that separates the red points $R$ from the blue points
$B$ ``as well as possible''. Intuitively, it tries to minimize the
distance from $s$ to the set $X(s,B \cup R) \subseteq R\cup B$ of
points \emph{misclassified} by $s$ while maximizing the distance to
the closest correctly classified points. A red point $r \in R$ is
misclassified if it lies strictly inside the halfspace $s^+$ above
(left) of $s$, whereas a blue point $b \in B$ is misclassified if it
lies strictly inside the halfspace $s^-$ below $s$. See
Figure~\ref{fig:higherKLowerError} for an illustration. An SVM is typically
modeled as a convex quadratic programming problem with linear
constraints. However, this cannot provide guarantees on the number of
misclassifications nor on the running time\footnote{When we restrict
  the coefficients in the SVM formulation to be rational numbers with
  bounded bit complexity such a problem can be solved in polynomial
  time~\cite{goldfarb1990qp,monteiro1989qp,kozlov1980}. However, it is
  unclear if they can be extended to allow for arbitrary real valued
  costs.}. In practice, solving such optimization problems is
possible, but it is computationally expensive as it involves $n+d$
variables~\cite{hsieh14}. This problem is magnified as training a
high-quality classifier typically requires computing many classifiers,
each trained on a large subset of the input data, during
cross-validation. Similarly, in streaming settings, the labeled input
points arrive on the fly, and old data points should be removed due to
concept drift~\cite{schlimmer86beyond}. Each such update requires
recomputing the classifier. Hence, this limits the applicability of
SVMs in these settings, even when the input data is just
low-dimensional.

\begin{figure}
    \centering
    \includegraphics{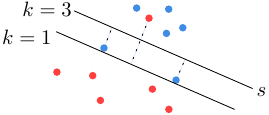}
    \caption{Red and blue points, and two optimal separators for
      $\Max$. Allowing more misclassifications could decrease the
      value of $\Max$.}
    \label{fig:higherKLowerError}
\end{figure}

\subparagraph{The goal.} We aim to tackle both these problems. That
is, we wish to develop an ``SVM-like'' linear classifier that can
provide guarantees on the number of misclassified points $k$, and can
be constructed and updated efficiently. As the problem of minimizing
$k$ is NP-complete in
general~\cite{amaldi95compl_approx_findin_maxim_feasib}, we restrict
our attention to the setting where the input points are
low-dimensional. As it turns out, even for points in the plane, this
is a challenging problem.

For a separator $s$, let $\Mis(s) = |X(s, B \cup R)|$ be the number of
points misclassified by $s$. Let
$S_k(B \cup R) = \{ s \mid \Mis(s) \leq k \}$ denote the set of
hyperplanes that misclassify at most $k$ points from $B \cup R$, and
let $\dist(p,q)$ denote the Euclidean distance between geometric
objects $p$ and $q$. When the point sets are linearly separable, we
want to compute a maximum-margin separator
$s_{\textit{strip}} \in S_0(R\cup B)$ that correctly classifies all
points and maximizes the distance
$M_{\mathit{strip}}(s_{\textit{strip}}) = \min_{p \in R\cup B}
\dist(s_{\textit{strip}},p)$ to the closest points, exactly as in an
SVM. Moreover, we would like to efficiently maintain such a separator
when we insert or delete a point from $B \cup R$. We can relatively
easily find such a separator by maintaining the convex hulls of $R$
and $B$ (Theorem~\ref{thm:2d_maxmargin}). So the main challenge occurs
when the point sets are not linearly separable. In this case, given a
maximum number of allowed misclassified points $k$, our aim is to find
a separator $s_\opt \in S_k(R\cup B)$ that minimizes the (Euclidean)
distance $\Max(s) = \max_{p \in X(s,B \cup R)} \dist(p,s)$ to the
furthest misclassified point. This thus asks for a minimum width strip
containing the $k$ outliers. We again would like to maintain such a
separator when points are inserted or deleted. Furthermore, we may
want to compute the smallest number $k_{\min}$ for which there exists
a separator $s_{\min}$ that misclassifies at most $k_{\min}$
points. Note that decreasing the number of outliers may increase the
value of $\Max$, i.e. when $k_{\min} < k$ we may have
$\Max(s_{\min}) > \Max(s_\opt)$, see
Figure~\ref{fig:higherKLowerError}.

By the above discussion we distinguish four general variations of the
problem:
\begin{description}
    \item[MaxStrip:] find a separator $s_{\textit{strip}} = \argmax_{s \in S_0(R\cup B)} M_{\mathit{strip}}(s)$
    \item[MinMax:] find a separator $s_{\max} = \argmin_s \Max(s)$
    \item[MinMis:] find a separator $s_{\textrm{mis}} = \argmin_s \Mis(s)$
    \item[$k$-mis MinMax:] given a value $k$, find a separator $s_{\opt} = \argmin_{s \in S_k(B \cup R)} \Max(s)$
\end{description}

\subparagraph{Related Work.} It is well known that for points in
$\R^d$, for constant $d$, we can test if $R$ and $B$ can be linearly
separated in $O(n)$ time by using linear programming
(LP)~\cite{megiddo84linear_progr_linear_time_when}. The problem
becomes much more challenging when we wish to allow for a limited
number of misclassifications. Everett, Robert, and van
Kreveld~\cite{lessThanK} showed that for point sets $R$ and $B$ in the
plane, one can find a line that separates $R$ and $B$ while allowing
for at most $k$ misclassifications in $O(n\log n + nk)$
time. \Matousek~\cite{matousek95geomet_optim_few_violat_const} showed
how to solve LP-type problems while allowing at most $k$ violated
constraints. In particular, for linear programming in $\R^2$, and thus
also for our problem of finding a separating line with at most $k$
misclassifications, his algorithm runs in $O(n\log n + k^3\log^2 n)$
time. Chan~\cite{chanLPViolations} improves 
this to $O((n+k^2)\log n)$ time, and can compute the smallest number
$k$ for which the points can be separated (the MinMis
problem) in the same time. Aronov
\etal~\cite{aronov12minim} considered computing optimal separators
with respect to other error measures as well. In particular, they
consider minimizing the distance $\Max(s)$ from $s$ to the furthest
misclassified point, as well as minimizing the average (squared)
distance to a misclassified point
$M^\beta_{\textrm{avg}}(s) = \sum_{p \in X(s,B \cup R)} (\dist(s,
p))^\beta$. For $n$ points in $\R^2$, their running times for
computing an optimal separator vary from $O(n\log n)$ for the $\Max$
measure (the MinMax problem), to $O(n^{4/3})$ for the
$M^1_{\textrm{avg}}$ measure, to $O(n^2)$ for $\Mis$ (the MinMis
problem) and the $M^2_{\textrm{avg}}$ measures. Some of their results
extend to points in higher dimensions as well. Har-Peled and
Koltun~\cite{harpeled} consider similar measures, and present both
exact and approximation algorithms. For example, they present an exact
$O(nk^{d+1}\log n)$ time algorithm to find a hyperplane that minimizes
the number of outliers (for points in $\R^d$), and an
$O(n(\eps^{-2}\log n)^{d+1})$ time algorithm to compute a
$(1+\eps)$-approximation of that number\footnote{Here and throughout
  the rest of the paper, $\eps > 0$ is an arbitrarily small
  constant.}. Their exact and approximation algorithms for computing a
hyperplane minimizing $\Max$ run in $O(n^d)$ and $O(n\eps^{(d-1)/2})$
time, respectively. Matheny and
Phillips~\cite{matheny21approx_maxim_halfs_discr} consider computing a
separating hyperplane $s$, so that the discrepancy (that is, the
fraction of red points in $s^-$ minus the fraction of blue points in
$s^-$) is maximized. They present an $O(n+\eps^{-d}\log^4(\eps^{-1}))$
time algorithm that makes an additive error of at most $\eps$
(and thus ``misclassifies'' at most $\eps n$ points more than an
optimal (with respect to discrepancy) classifier).

\subparagraph{Results.} As a warmup, in
Section~\ref{sec:fully_dynamic_1d}, we show that for points in $\R^1$
we can achieve both our goals: minimizing $\Max$ with a hard
guarantee on the number of outliers \emph{and} efficiently supporting
updates. In particular, we present an optimal linear space solution:

\begin{restatable}{theorem}{optimalOneD}
  \label{thm:1d_optimal}
  Let $B \cup R$ be a set of $n$ points in $\R^1$. There is an $O(n)$
  space data structure that, given a query value $k \in 1..n$ can
  compute an optimal separator $s_\opt \in S_k(R\cup B)$ with respect
  to $\Max$ in $O(\log n)$ time, and supports inserting or deleting a
  point in $O(\log n)$ time.
\end{restatable}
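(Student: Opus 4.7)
My plan is to reduce the problem, for each of the two possible orientations of the separator, to a closest-feasible-slot query on a one-dimensional sorted sequence, and to answer that query with an augmented balanced BST.

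Fix the orientation in which red points should lie left and blue points right of $s$, and let $r_{\max}:=\max R$ and $b_{\min}:=\min B$. The structural observation I would prove first is that whenever the input is not linearly separable in this orientation, i.e.\ $r_{\max}>b_{\min}$, one has the global closed form $\Max(s) = d + |s-c|$ for every $s\in\R$, where $c:=(r_{\max}+b_{\min})/2$ and $d:=(r_{\max}-b_{\min})/2$. The reason is that as soon as any red is misclassified, $r_{\max}$ must be the farthest misclassified red (and analogously for $b_{\min}$), so a direct case analysis on the position of $s$ relative to $b_{\min}$ and $r_{\max}$ collapses the various regime formulas into $d+|s-c|$. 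The separable case $r_{\max}\le b_{\min}$ is trivial: any $s\in[r_{\max},b_{\min}]$ gives $\Max=\Mis=0$.

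Given this identity, the problem reduces to minimizing $|s-c|$ subject to $\Mis(s)\le k$. Sorting the input as $p_1<\cdots<p_n$, $\Mis$ is piecewise constant, equal to some value $f_i$ on each slot $(p_i,p_{i+1})$, with $f_0=|R|$ and $f_i-f_{i-1}=+1$ if $p_i\in B$ and $-1$ if $p_i\in R$. The feasible set is then the union of those slots with $f_i\le k$, and the optimal $s$ is the one in this union nearest $c$. I would maintain the points in a balanced BST keyed by position, with each internal node storing the minimum $f$-value over the slots in its subtree via lazy additive tags, and track $r_{\max}$ and $b_{\min}$ on the side. Inserting or deleting a blue (resp.\ red) point at $p$ corresponds to a $\pm1$ additive update to the $f$-values of all slots lying to the right (resp.\ left) of $p$, which is a textbook lazy range update in $O(\log n)$. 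A query with parameter $k$ first checks separability in $O(1)$, then locates the slot $i^*$ containing $c$ by a BST descent; if $f_{i^*}\le k$ it outputs $s=c$; otherwise two standard ``leftmost slot with $f\le k$'' descents, guided by the subtree minima, find the nearest feasible slot on each side of $c$, and we return the boundary realizing the smaller $|s-c|$. Running the same procedure for the opposite orientation and returning the better of the two separators completes the query in $O(\log n)$ time.

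The hardest step I anticipate is establishing the global identity $\Max(s)=d+|s-c|$, particularly in the outer regimes $s<b_{\min}$ and $s>r_{\max}$ where only one color is misclassified: it is tempting to restrict attention to $s\in(b_{\min},r_{\max})$, but the global form is precisely what collapses the objective to a simple distance-to-$c$ and justifies the single closest-slot query. Once this is in hand, the BST machinery (lazy range addition and leftmost-below-threshold descent) and the handling of degenerate inputs (an empty $R$ or $B$, very small $k$, or $s$ coinciding with some $p_i$) are entirely routine.
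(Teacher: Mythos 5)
Your proposal is correct and takes essentially the same route as the paper: the same key identity $\Max(s)=\Max(s_{\max})+\|s-s_{\max}\|$ (your $d+|s-c|$) reduces a query to finding the valid point nearest $c=s_{\max}$, which both you and the paper answer with an augmented balanced BST over the sorted points plus extremal-point tracking, trying both orientations. The only divergence is bookkeeping inside the tree: you maintain global per-slot misclassification counts with lazy additive range updates and subtree minima, whereas the paper stores subtree-local optima $M^u_{\rm mis}$ together with $R_u,B_u$ and accumulates the outside contribution $k_u$ along the search path --- equivalent machinery with the same $O(n)$ space and $O(\log n)$ update/query bounds.
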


The main focus of our paper is to establish whether we can achieve
similar results for points in $\R^2$. If the points are separable, we
can maintain a maximum-margin
separator---essentially a maximum width strip---in $O(\log^2 n)$
time per update, see
Section~\ref{sec:preliminaries}. 

The problem gets significantly more complicated when the point sets
are not separable, and we thus wish to compute, and maintain, a
separator $s_\opt \in S_k(B\cup R)$ minimizing the distance $\Max$ to
the farthest misclassified point. We can test whether a
separator $s \in S_k(B \cup R)$ exists (and find
the smallest $k$ for which a separator exists) using LP with violations. In
Section~\ref{sec:Linear_Programming_with_Violations} we show
how to dynamize Chan's approach to maintain such a separator
when the set of points changes. In particular, given a static linear
objective function $f : \R^2 \to \R$ and a dynamic set $H$ of
halfplanes that is given in a semi-online manner, i.e. at the time we
insert a halfplane $h$ into $H$ we are told when we will delete $h$,
we show how to efficiently maintain a point $p$ minimizing $f$ that
lies outside at most $k$ halfplanes from $H$:

\begin{restatable}{theorem}{linearProgramming}
  \label{thm:linear_programming}
  Let $H$ be a set of $n$ halfspaces in $\R^2$, let $f$ be a linear
  objective function, and let $k \in 1..n$. There is an
  $O(n + k^2\log^2 n)$ space data structure that maintains a point
  $p$ that violates at most $k$ constraints of $H$ (if it exists)
  and minimizes $f$, and supports semi-online updates in expected amortized
  $O(k\log^3 n)$ time.
\end{restatable}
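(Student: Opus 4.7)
The plan is to dynamize Chan's static $O((n+k^2)\log n)$ algorithm for 2D LP with $k$ violations using the standard semi-online segment-tree framework. First, I would build a segment tree $T$ over the timeline of updates. Because the deletion time of each halfplane is known at insertion (the semi-online assumption), each halfplane $h$ with lifetime $[s_h,e_h]$ can be placed at the $O(\log n)$ canonical nodes whose intervals form a disjoint cover of $[s_h,e_h]$. The currently-active halfplane set at time $t$ is then the disjoint union $H_t = \bigcup_v H_v$ over the $O(\log n)$ nodes $v$ on the root-to-leaf-$t$ path.

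Second, at each node $v$ I would maintain a Chan-style static structure over $H_v$ that supports the primitives needed by the iterative algorithm: given a linear functional, return the extreme feasible point; given a candidate point $p$, return the count of halfplanes of $H_v$ violated by $p$ together with a witness. These are implemented via a convex-hull representation, an appropriate $(1/r)$-cutting, and precomputed $k$-levels, with size $O(|H_v| + k^2)$ per node. The backbone of the tree stores the halfplanes themselves in $O(n)$ space, while the auxiliary $k$-level/cutting information only needs to live at the $O(\log n)$ nodes actually visited by the most recent query path, giving the overall $O(n + k^2\log^2 n)$ space bound.

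Third, to answer a query I would run Chan's randomized prune-and-search algorithm, but each primitive is routed through the $O(\log n)$ per-node structures on the current path and aggregated: violation counts are summed, an arbitrary violator is lifted from any node whose count is positive, and extreme points are compared. This multiplies Chan's query cost by a $\log n$ factor. For updates, a halfplane insertion or deletion marks the $O(\log n)$ canonical nodes as dirty; their static structures are rebuilt lazily when a query descends through them. Charging rebuild cost level by level in the tree, each update pays $O(k\log^3 n)$ expected amortized time after absorbing Chan's $O(k\log n)$ per-primitive cost and the random-sampling factor.

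The main obstacle is step three, namely combining the $O(\log n)$ disjoint per-node structures into a single LP-with-violations query. Counting and pointing to violators is decomposable and causes no trouble, but the $k$-th level of $\bigcup H_v$ is not assembled from the individual $k$-levels, because the $k$ allowed violations can split across the pieces in any proportion. I would handle this by adapting Chan's random-sampling step so that a sample drawn (proportional to $|H_v|$) from the union acts as a single sample of $H_t$, and by implementing the "descend to the correct cell" step via a simultaneous parametric search over the $O(\log n)$ cuttings; this is where the extra $\log n$ factors relative to Chan's static bound come from. A secondary technical point is the amortization: one must argue that the expected size of a node's structure after a rebuild is geometrically decreasing up the tree so that the $k^2$ overhead accumulates only to $O(k^2\log^2 n)$, not $O(k^2 n)$, which I would do by a standard weight-balancing or Bentley--Saxe-style charging argument over the $O(\log n)$ levels.
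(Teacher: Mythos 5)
Your proposal correctly identifies the crux---the $k$ allowed violations can split arbitrarily across the pieces of any decomposition, so per-piece ``LP with violations'' structures do not compose---but the fix you offer for it (re-interpreting a proportional sample from the union as a single sample of $H_t$, plus a ``simultaneous parametric search'' over the $O(\log n)$ cuttings) is exactly the part that is not worked out, and as stated it does not go through. The deeper problem is that the optimum of $H_t=\bigcup_v H_v$ is in general determined by an \emph{interaction} between constraints stored at different nodes: by Lemma~\ref{lem:leftMostRedBlue} the leftmost valid point is a red--blue intersection, and the red line and the blue line involved typically live in different canonical sets. So decomposable primitives (violation counts, witnesses, extreme points per node) cannot even enumerate the candidate optima, let alone select among them; no aggregation of per-node answers recovers the cross-node intersection that realizes the optimum. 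This is a genuine gap, not a technicality.

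The paper's route avoids this by exploiting a composability property of \emph{shallow covers} rather than of the LP itself: if a line is among the $k$ lowest of $R$ at some abscissa, it is among the $k$ lowest of whichever subset $R_i$ contains it, so the union of the per-subset concave chain decompositions of $L_{\leq k}(R_i)$ (maintained with the Dobkin--Suri logarithmic method for semi-online updates) is an $O(k\log n)$-chain cover of $L_{\leq k}(R)$, and symmetrically for $B$. The candidate optima are then the $O(k^2\log^2 n)$ bichromatic intersections \emph{across all} chains (explicitly maintained, with cheap updates for new trivial chains and periodic expensive rebuilds), and their violation counts are maintained globally by a partition tree over these candidate points with lazily propagated buffers, supporting halfplane-update operations in $O(k\log^2 n)$ and leftmost-valid-point queries in $O(k\log^3 n)$ expected time. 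Your space-accounting and level-by-level charging sketch also glosses over when and how often per-node structures are rebuilt; in the paper this is the carefully balanced cheap/expensive update schedule (one $O(k^2\log^3 n)$ rebuild every $\Theta(k\log n)$ updates). If you want to salvage your segment-tree framing, the ingredient you must import is precisely this cover-composability plus an explicit global candidate set with cross-node intersections; without it, step three of your plan fails.
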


This then also allows us to maintain whether a separator that
misclassifies at most $k$ points exists in amortized $O(k\log^3 n)$
time per (semi-online) update, as well as maintain the minimum value
$k$ for which this is the case. Since linear programming queries have
many other applications, e.g.\ finding extremal points and tangents,
we believe this result to be of independent interest. For example,
given a threshold $\delta$, our data structure also allows us to
maintain a line that minimizes the number of points $k$ at vertical
distance exceeding $\delta$ from $\ell$ in amortized $O(k\log^3 n)$
time per update. Note that the best update time we can reasonably
expect with this approach is $O((1+k^2/n)\log n)$. For values of $k$
that are small (e.g.\ polylogarithmic) or very large (near linear) our
approach is relatively close to this bound.

In Section~\ref{sec:An_exact_algorithm_in_2d}, we then actually
incorporate finding the best separator from $S_k(B\cup R)$; i.e.\ a
separator that minimizes $M_{\max}$. We first tackle the algorithmic
problem of simply computing such an optimal separator. Our main result
here is:

\begin{restatable}{theorem}{algorithmTwoD}
  \label{thm:2d_algorithm}
  Let $B \cup R$ be a set of $n$ points in $\R^2$, and let
  $k \in 1..n$. We can compute a separator $s_\opt \in S_k(B \cup R)$
  minimizing $\Max$ in
  \begin{itemize}
  \item $O(nk + n\log n)$ time,
  \item $O((n+|S_k(B\cup R)|+n+k^3)\log^2 n)$ time, or
  \item when $k=k_{\min}$ in $O(k^{4/3} n^{2/3} \log n +
    (n + k^2) \log n)$ time.
  \end{itemize}
\end{restatable}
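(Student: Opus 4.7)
My plan is to combine point-line duality with techniques for $\le k$-levels in arrangements. Under the standard duality, each point $p \in B\cup R$ becomes a line $p^*$, each candidate separator $\ell$ becomes a dual point $\ell^*$, and $\ell$ misclassifies $p$ iff $\ell^*$ lies on a color-dependent side of $p^*$. Thus $S_k(B\cup R)$ corresponds in dual space to the $\le k$-level region of the arrangement $\mathcal{A}$ obtained by orienting each dual line toward its ``misclassified'' side. Once the orientation $\theta$ of $\ell$ is fixed, the problem becomes a 1D instance in the spirit of Theorem~\ref{thm:1d_optimal}: project all points onto the normal direction of $\theta$ and minimize $\Max$ over offsets $b$ subject to at most $k$ violations. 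So the remaining task is to efficiently vary $\theta$ (and the offset) across the feasible region.

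For the first bound $O(nk+n\log n)$, I would follow a rotational sweep extending Everett, Robert, and van Kreveld~\cite{lessThanK}. After $O(n\log n)$ sorting, I rotate $\theta$ over the $O(n^2)$ slope events (pairs of points); only the $O(nk)$ events that modify the $\le k$-level of $\mathcal{A}$ can change the optimal separator. Between such events, the current optimum is determined by the farthest misclassified red and farthest misclassified blue in the sorted projection, which I maintain in $O(1)$ amortized per event. Taking the minimum of the locally optimal $\Max$ values yields the claimed time.

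For the second bound, I would instead enumerate the $|S_k(B\cup R)|$ cells of $\mathcal{A}$ at level $\le k$. Inside each such cell $C$ the set of misclassified points is fixed, so $\Max(\ell^*)$ is a pointwise maximum of point-to-line distance functions whose minimum on $C$ can be computed in $O(\log^2 n)$ time using precomputed farthest-point structures. The $k^3\log^2 n$ term accounts for a Chan-style low-level cutting, analogous to the one used in Section~\ref{sec:Linear_Programming_with_Violations}, to enumerate the relevant cells efficiently. For the third bound with $k=k_{\min}$, it suffices to explore only the cells at level exactly $k_{\min}$, whose combinatorial complexity is $O(k^{4/3}n^{2/3})$ by the standard $k$-level bound; combining this with Chan's $O((n+k^2)\log n)$ algorithm~\cite{chanLPViolations} to identify $k_{\min}$ and a starting cell gives the claimed running time.

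The main obstacle will be handling $\Max$ inside a fixed cell of $\mathcal{A}$: although the set of misclassified points is constant, $\Max(\ell^*)$ is a nonlinear function of $\ell^* = (a,b)$ due to the $1/\sqrt{1+a^2}$ factor in point-to-line distance. I plan to show that its minimizer either lies on the cell boundary (captured by the sweep between adjacent cells) or is an interior critical point at which two or three farthest misclassified points are equidistant to $\ell$, and to design farthest-point structures over the misclassified red and blue subsets that can be queried and updated within the three claimed time budgets.
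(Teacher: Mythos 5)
Your plan correctly identifies the setting (duality, fixed-slope reduction, valid regions of complexity related to $k$), but it defers exactly the step that the theorem hinges on. You write that the ``main obstacle'' is minimizing $\Max$ inside a cell, where the function is nonlinear in the dual point because of the $1/\sqrt{1+a^2}$ factor, and that you \emph{plan} to show the minimizer is on the boundary or an equidistance critical point. That characterization, made precise, is the core of the proof, and without it none of your three running-time claims goes through. The paper resolves it by observing that for a \emph{fixed} slope the Euclidean and vertical distances differ by a slope-dependent constant factor (\cref{lem:vert}), so the constrained optimum at slope $m$ is the valid point vertically closest to the unconstrained MinMax optimum; this yields an $O(n)$-vertex ``MinMax curve'' and a finite candidate set of four explicit types (\cref{lem:optimum}, using \cref{lem:minMaxTowardsVertexBetter}), which are then enumerated against an explicitly constructed $S_k(B\cup R)$ (\cref{lem:compute_optimal,lem:optimal_simple}). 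Your sweep argument for the first bound relies on an unproved claim that the constrained optimum is ``determined by the farthest misclassified red and blue'' and maintainable in $O(1)$ amortized per event; but which points are misclassified depends on the offset being optimized, the valid offsets at a fixed slope form up to $O(k)$ disjoint intervals, and the optimum moves continuously between events, so the bookkeeping you describe is not established.

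The second and third bounds have additional concrete gaps. For the second, minimizing a pointwise maximum of distance functions over a valid region in ``$O(\log^2 n)$ using precomputed farthest-point structures'' is unsupported: the misclassified subset changes from face to face, valid regions need not be convex (they can even have $\Omega(k^2)$ holes, \cref{lem:k2ValidRegions_lowerbound}), and the $k^3\log^2 n$ term in the paper arises from a specific tracing scheme (starting from the $O(k^2)$ bichromatic intersections and paying $O(k)$ halfplane swaps at $O(\log^2 n)$ each in an Overmars--van Leeuwen structure, \cref{lem:construct_valid_regions_outputsensitive}), not from a cutting. For the third, $S_{k_{\min}}$ consists of up to $\Theta(k^2)$ \emph{disconnected} single faces, so ``a starting cell'' from Chan's algorithm does not let you explore them all, and the complexity bound $O(k^{4/3}n^{2/3}+n)$ is not ``the standard $k$-level bound'' but the Clarkson et al.\ many-faces bound applied to $O(k^2)$ faces (\cref{lem:k2ValidRegions}), with the faces actually constructed via Wang's many-faces algorithm (\cref{lem:construct_valid_regions_tightk}). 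As it stands, the proposal is a reasonable outline of intent but does not constitute a proof of any of the three stated bounds.
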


The key challenge here is that (the region in the dual plane
representing) $S_k(B \cup R)$, may consist of $\Theta(k^2)$ connected
components, and each one has very little structure. Where in the
linear programming approach we can efficiently find one local minimum
per connected component, that is no longer the case here. Instead,
explicitly construct the boundary of this region. Unfortunately, the
total complexity of $S_k(B \cup R)$ is rather large:
Chan~\cite{chan10bichromatic} gives an upper bound of
$O(nk^{1/3}+n^{5/6-\eps}k^{2/3+2\eps}+k^2)$. We give two different
algorithms to construct $S_k(B \cup R)$, and then efficiently find an
optimal separator. When we restrict to the case where $k=k_{\min}$,
i.e. finding an separator that minimizes $\Max$ among all lines that
misclassify the least possible number of outliers, each connected
component of $S_k(B\cup R)$ is a single face in an arrangement of
lines. This then gives us a slightly faster
$O(k^{4/3} n^{2/3} \log^{2/3} (n / k) + (n + k^2) \log n)$ time
algorithm as well.

Unfortunately, even when $k=k_{\min}$, dynamization turns out to be
extremely challenging. In Section~\ref{sec:insertion_only_tight_k} we
present an $O((k^{4/3} n^{2/3}+n) \log^5 n)$ space data structure that
supports insertions in amortized $O(kn^{3/4 + \eps})$
time, provided that the convex hulls of
$R$ and $B$ remain the same. While the applicability of this result is limited, we do use and develop an interesting combination of
techniques here. For example; we develop a near linear space data
structure that stores the lower envelope of surfaces, and allows
for sub-linear time vertical ray shooting queries.

In Section~\ref{sec:eps-Approximation}, we slightly relax our goal and
consider approximating the distance $\Max$ instead. Our key idea is to
replace the Euclidean distance by a convex distance function. This
avoids some algebraic issues, as the distance between a point and
a line now no longer has a quadratic dependency on the slope of the
line. Instead the dependency becomes linear. We now obtain a much more
efficient algorithm for finding a good separator:

\begin{restatable}{theorem}{approximationAlgorithm}
  \label{thm:2d_approximation_algorithm}
  Let $B \cup R$ be a set of $n$ points in $\R^2$, let $k \in 1..n$,
  and let $\eps > 0$.  We can compute a separator
  $s \in S_k(B \cup R)$ that is a $(1+\eps)$ approximation with
  respect to $\Max$ in $O(\eps^{-1/2}((n + k^2) \log n))$ time.
\end{restatable}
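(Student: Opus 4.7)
My plan is to trade Euclidean distance for an orientation-indexed, piecewise-linear surrogate, so that each choice of orientation yields a 2-dimensional LP with $k$ violations that Chan's $O((n+k^2)\log n)$ algorithm~\cite{chanLPViolations} solves in the stated time.

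\emph{Directional approximation.} I would fix $m=\Theta(\eps^{-1/2})$ unit vectors $\theta_1,\dots,\theta_m$ equispaced in angle on the upper half circle, and define the $\theta_i$-distance from a point $p$ to a line $\ell$ as the vertical distance after rotating the plane so that $\theta_i$ is vertical. If $\alpha$ denotes the angle between $\theta_i$ and the normal of $\ell$, then the $\theta_i$-distance equals $(1/\cos\alpha)\cdot(\text{Euclidean distance})$, so for $\alpha\le\pi/(2m)$ the two agree up to a factor $1+\eps$. Every line has some $\theta_i$ within $\pi/(2m)$ of its normal and $\theta_i$-distance pointwise dominates Euclidean distance, so running a subsolver that minimises $\theta_i$-$\Max$ for each $i$ and reporting the candidate with the smallest Euclidean $\Max$ delivers a $(1+\eps)$-approximation.

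\emph{Subproblem per orientation.} After rotating so that $\theta_i$ is vertical, I seek $s:y=ax+b$ minimising $w=\max_{p\in X(s,R\cup B)}|y_p-ax_p-b|$ subject to $|X(s,R\cup B)|\le k$. Treating $w$ as an auxiliary variable this reads
\[
    \min w \quad \text{s.t.}\quad y_r\le ax_r+b+w\ (\forall r\in R),\ \ y_{b'}\ge ax_{b'}+b-w\ (\forall b'\in B),
\]
augmented with the $n$ \emph{soft} halfplane constraints $y_r\le ax_r+b$ and $y_{b'}\ge ax_{b'}+b$ expressing correct classification of each point, of which at most $k$ may be violated. Every constraint is linear in $(a,b,w)$ and the soft constraints do not involve $w$; projecting $w$ out, the problem becomes minimising the convex piecewise-linear function $g(a,b)$ equal to the upper envelope of the $2n$ hard right-hand sides, over the ``at most $k$ soft violations'' region in $(a,b)$-space.

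\emph{Invoking LP with violations.} For any fixed $w_0$ the sublevel set $\{g\le w_0\}$ is an intersection of $2n$ halfplanes, so deciding whether it meets the $\le k$-violations region is exactly a 2D LP-with-violations feasibility query and is answered in $O((n+k^2)\log n)$ by Chan. I would drive the search for the optimum $w_0=w^*$ with parametric search (Megiddo), using Chan's feasibility algorithm as the parallelised comparison routine, which keeps the cost per orientation at $O((n+k^2)\log n)$. Summing over the $\Theta(\eps^{-1/2})$ orientations produces the claimed bound.

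\emph{Main obstacle.} The delicate part is completing the reduction without losing an additional $\log n$ factor, since Chan's algorithm is randomised and amortised, so parametric search demands that its comparisons be carefully parallelised. As a fall-back I would characterise the optimum combinatorially -- at $(a^*,b^*,w^*)$ the upper strip boundary is pinned by a red point, the lower by a blue point, and the $(k+1)$-st soft halfplane is tight -- and enumerate the $O(n+k^2)$ critical triples within Chan's framework, bypassing parametric search entirely.
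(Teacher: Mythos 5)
Your overall strategy matches the paper's: approximate Euclidean distance by a piecewise-linear directional distance with $\Theta(\eps^{-1/2})$ canonical orientations (the paper realizes this as a regular $t$-gon convex distance, yours as rotated vertical distances; these are equivalent), reduce each orientation to an LP-with-violations decision problem, and use parametric search to optimize. Your choice of $m=\Theta(\eps^{-1/2})$ orientations, the $(1/\cos\alpha)$ stretch bound, and the three-variable $\min w$ formulation that projects to a 2D feasibility query are all faithful to what the paper does in the dual plane.

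The gap is in the sentence claiming that, for fixed $w_0$, deciding whether the sublevel set $\{g\le w_0\}$ (an intersection of $2n$ \emph{hard} halfplanes) meets the $\le k$-violations region ``is exactly a 2D LP-with-violations feasibility query and is answered in $O((n+k^2)\log n)$ by Chan.'' Chan's algorithm has no notion of hard constraints: all $n$ input halfplanes are soft and exactly $k$ of them may be violated. Here you have $2n$ halfplanes that may never be violated on top of $n$ soft ones, and this mixture is not a direct instance of the theorem you cite. The paper explicitly flags this as the technical content of the section (``extending the algorithm to deal with both `soft constraints' \dots as well as `hard constraints' '') and resolves it with a structural lemma (their Lemma~\ref{lem:approxConvexConcaveOptima}): the hard-constraint region is a convex polygon bounded by two $\delta$-shifted envelopes, and if a valid point exists inside it, one exists at an intersection of a concave chain with a convex chain, where the chains now include the two $\delta$-chains. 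This yields $O(k^2)$ precomputable red-blue candidates plus only $O(k)$ query-dependent candidates involving the $\delta$-chains, giving a decision subroutine that runs in $O(k\log n)$ \emph{after} a one-time $O((n+k^2)\log n)$ preprocessing. Running full Chan from scratch at every parametric-search step, as your text suggests, would cost an extra $\log$ factor; and your fall-back of enumerating ``$O(n+k^2)$'' critical triples is neither justified (the stated count is not established and the combinatorial characterization with ``the $(k+1)$-st soft halfplane tight'' omits cases where the optimum is pinned by hard constraints only) nor cheap enough. To close the argument you need the decomposition into preprocessed soft-soft candidates and $O(k)$ on-the-fly hard-soft candidates, which is precisely where the paper's proof does its work.
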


Our approach essentially ``guesses'' the width $\delta$ of a strip
``separating'' the point sets. We then show that we can use the linear
programming machinery to efficiently test whether there exists such a
strip containing at most $k$ outliers. This
involves extending the algorithm to deal with both ``soft
constraints'' (that may be violated) as well as ``hard constraints''
that cannot be violated. We then find the smallest $\delta$ for which
such a strip exists using parametric
search~\cite{megiddo1983parametric}. This leads to the following
result:

\begin{restatable}{theorem}{dynamicApproximation}
  \label{thm:2d_approximate_ds}
  Let $B \cup R$ be a set of $n$ points in $\R^2$, let $k \in 1..n$,
  and let $\eps > 0$. There is an
  $O(\eps^{-1/2}(k^2 \log^2 n + n))$ space data structure that maintains
  a separator $s \in S_k(B \cup R)$ that is a $(1+\eps)$-approximation with respect to $\Max$, and supports semi-online
  updates in expected amortized $O(\eps^{-1/2}k\log^3 n \log \log k)$ time.
\end{restatable}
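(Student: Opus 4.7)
The plan is to make the static $(1+\eps)$-approximation of Theorem~\ref{thm:2d_approximation_algorithm} dynamic by running its machinery in parallel on top of the semi-online LP-with-violations structure of Theorem~\ref{thm:linear_programming}. Recall that the static algorithm replaces the Euclidean distance by a convex distance function whose unit ball is a convex polygon $Q$ with $m = O(\eps^{-1/2})$ edges; each edge of $Q$ determines a cone of slopes, and within each cone the task of finding a minimum-width strip with at most $k$ outliers reduces to a linear program with $k$ violations (a feasibility question of the form ``does there exist a strip of slope in $C_i$ and width at most $\delta$ that leaves at most $k$ points outside?''), together with an outer search for the smallest feasible $\delta$.

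For each of the $m$ slope cones $C_i$ I maintain a separate instance of the semi-online LP-with-violations data structure of Theorem~\ref{thm:linear_programming}. The red and blue points of $B\cup R$ are dualised cone by cone, producing a constant number of halfplanes per point in each instance (those that encode ``this point is correctly classified by a strip of width $\delta$ and slope in $C_i$''). A semi-online insertion or deletion of a point thus translates into a constant number of semi-online halfplane updates in each of the $m$ instances; deletions are scheduled at the same time as the inducing point deletion, preserving the semi-online property. Propagating an update through all $m$ instances costs $O(\eps^{-1/2} \cdot k \log^3 n)$ expected amortized time and uses $O(\eps^{-1/2}(n + k^2 \log^2 n))$ space, which matches the claimed bounds.

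The main obstacle, and the source of the extra $\log\log k$ factor, is producing the smallest feasible width $\delta_i$ in each cone after an update without rerunning the static parametric search from scratch. My plan is to sidestep parametric search entirely by using each LP structure as a decision oracle for fixed $\delta$: given a guess $\delta$, a single query of the $i$-th instance tells us whether a strip of slope in $C_i$ and width $\delta$ misclassifies at most $k$ points, in $O(k\log^3 n)$ time. The only candidate values of $\delta_i$ are the $O(k)$ ``active'' critical widths corresponding to the current $k$-level of the dualised arrangement maintained inside the $i$-th instance. These candidates change only locally per update, so I keep them in a balanced search tree next to the LP instance and, after each update, locate the smallest feasible one by a Frederickson--Johnson-style search that invokes the decision oracle $O(\log\log k)$ times. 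Summing over all $m$ cones yields the claimed $O(\eps^{-1/2}\, k \log^3 n \log\log k)$ amortized update time.

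The output at any moment is the separator corresponding to $\min_i \delta_i$. Correctness and the approximation guarantee are inherited directly from the static algorithm: since the polygonal distance induced by $Q$ is within a factor $(1+\eps)$ of the Euclidean distance, taking the best strip over the $m$ cones yields a $(1+\eps)$-approximation of the Euclidean optimum, and feasibility within each cone is certified by its LP-with-violations instance.
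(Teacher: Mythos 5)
Your high-level plan (one structure per slope cone, decide feasibility of a width $\delta$ with at most $k$ misclassifications, then search for the smallest feasible $\delta$) matches the paper's, but two of your concrete steps have genuine gaps. First, you cannot realize the decision oracle by preloading halfplanes into the LP-with-violations structure of Theorem~\ref{thm:linear_programming}: the constraints expressing ``this point contributes error at most $\delta$'' depend on the query value $\delta$, so they are not a fixed set $H$ of halfplanes that can be inserted semi-online. In the paper the soft constraints (the misclassification halfplanes) are indeed $\delta$-independent, but the width bound is enforced by \emph{hard} constraints given by the envelopes $L_0(R^*)$ and $L'_0(B^*)$ shifted vertically by $\delta$ (the $\delta$-chains), which are kept implicit and re-evaluated at query time. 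This is why the paper dynamizes the dedicated decision structure of Lemma~\ref{lem:decisionDelta} (the chain decompositions, chromatic ply structures, the precomputed best red-blue intersection $p_{\min}$, and the envelopes, maintained via Lemmas~\ref{lem:semiOnlineChains} and~\ref{lem:semiOnlineIntersections}) rather than invoking Theorem~\ref{thm:linear_programming} as a black box.

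Second, your search over $\delta$ is not justified. You assert that the only candidates for $\delta_i$ are $O(k)$ ``active critical widths'' attached to the $k$-level, changing only locally per update, and that a Frederickson--Johnson-style search finds the smallest feasible one with $O(\log\log k)$ oracle calls. Neither claim holds as stated: the optimal $\delta$ is determined by a contact between a shifted envelope (complexity $O(n)$) and one of the $O(k\log n)$ chains, or by the best of the $O(k^2\log^2 n)$ bichromatic chain intersections, so the set of potential critical widths is not $O(k)$, is not maintained by local changes, and even if it were a sorted list of $O(k)$ values a search would cost $O(\log k)$ decision calls, not $O(\log\log k)$; Frederickson--Johnson applies to selection in implicitly sorted matrices, a setup you have not established here. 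The paper avoids enumerating critical widths altogether: after each update it reruns Megiddo-style parametric search, using the sequential decision query ($T_s = O(k\log^2 n)$, since there are now $O(k\log n)$ chains) together with a parallel decision algorithm ($T_p = O(\log n)$ with $p = O(k\log n)$ processors), and the extra sub-logarithmic factor in the update bound comes from the $T_pT_s\log p$ term of parametric search, not from a candidate-list search. To repair your argument you would need either to prove a small, locally maintainable candidate set of widths with the claimed search complexity, or to fall back on the parametric-search route as the paper does.
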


\subparagraph{Applications.} Our data structure from
Theorem~\ref{thm:2d_approximate_ds} can reduce the total time in a
leave-out-one cross validation process by roughly a linear factor in
comparison to Theorem~\ref{thm:2d_approximation_algorithm}. For
$m$-fold cross validation we gain a factor $m$. Similarly, in a
streaming setting in which we maintain a window of width $w$, we gain
a factor of roughly $w$. Note that in both these settings, the
semi-online updates indeed suffice.



\subparagraph{Open Problems.} There are several remaining interesting
open problems. For example, can we improve the running time of the
algorithm from Theorem~\ref{thm:2d_algorithm}? In particular, can we
achieve $O((n+k^2)\log n)$ time (the time it takes to test if a valid
separator exists)? This will also make it easier to then turn the
algorithm into a dynamic data structure. Furthermore, it would be
interesting to develop a fully dynamic data structure for LP with
violations. And of course, can we obtain similar results in higher
dimensions?



\section{A fully dynamic solution for points in \texorpdfstring{$\R^1$}{R1}}
\label{sec:fully_dynamic_1d}

In this section we prove Theorem~\ref{thm:1d_optimal}. We present a
linear space data structure to maintain an optimal separator $s_\opt$
that that classifies points to its left as red and points to its right
as blue. If we maintain both this data structure and its mirrored
version, which classifies points to its left as blue and points to its
right as red, the best of the two will be the optimal separator.

We first describe how to compute and maintain a separator
$s_{\max} \in S_n(B \cup R)$ that minimizes $\Max$ (i.e. the MinMax
problem). We then describe how to solve the MinMis problem:
i.e. maintain a separator $s_{\min}$ that minimizes the number of
outliers. In both cases we obtain linear space solutions with $O(\log
n)$ update time. We then combine these results to maintain an optimal
separator $s_\opt \in S_k(B \cup R)$.

\subparagraph{Minimizing $\Max$.} We start with the simple case in
which we wish to compute a separator
$s_{\max} \in S_n(B \cup R)$ that minimizes $\Max$.

If $R$ and $B$ are separable, this means $R$ lies fully to the left of
$B$. The optimal separator has the largest distance to both
sets. Clearly the optimal separator then lies exactly between the
rightmost point of $R$ and the leftmost point of $B$. By maintaining
the extremal points of $R$ and $B$ in a heap, we can thus find an
optimal separator in $O(1)$ time: we simply query the heaps and take
the average.

If $R$ and $B$ are inseparable, the rightmost red point is to the
right of the leftmost blue point, but the optimal separator $s_{\max}$
that minimizes the maximum distance to any misclassified point is
still exactly in the middle of them. This means the above approach
still works without any modifications.

\begin{figure}
\centering
\includegraphics{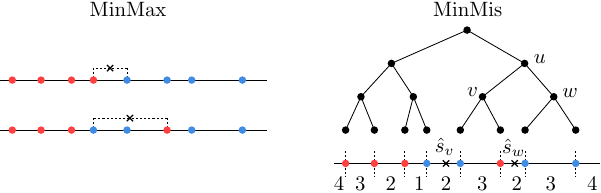}
\caption{Optimal separators for the MinMax problem (left), and the
  binary tree for the MinMis problem (right). Node $u$ has children
  $v$ and $w$ with a local optima $\hat{s}_v$ and $\hat{s}_w$.}
\label{fig:1D_intervals}
\end{figure}

\subparagraph{Minimizing $\Mis$.} Next, we consider computing a
separator $s_{\min}$ that minimizes $\Mis$. We again present an $O(n)$
space solution supporting updates in $O(\log n)$ time.

The red and blue points $P = R\cup B$ partition $\R$ into $n + 1$
intervals. All separators within a given interval misclassify the same
number of points, regardless of where in the interval they lie
exactly. So we can sort all points in $O(n \log n)$ time to obtain all
$O(n)$ intervals. A separator in the leftmost interval lies left of
all points in $P$. It misclassifies all red points and correctly
classifies all blue point, so it misclassifies $|R|$ points. When we
move $s$ to the right over a red point, the number of
misclassifications decreases by one, whereas moving over a blue point
increases the number of misclassifications. We can thus simply scan
through the points while maintaining the number of misclassifications.

We can maintain this minimum by maintaining a balanced binary search
tree of $P$ with some additional information.

Firstly in each node $u$ with subtree $T_u$ we store the number of
blue points $B_u$ and the number of red points $R_u$ in $T_u$.

Secondly in each node $u$ with subtree $T_u$ we consider the local
problem of finding a separator $s_u$ within this subtree (including
the intervals adjacent to the leftmost and rightmost points of $T_u$)
that minimizes the number of misclassified points in $T_u$. Note that
if $u$ is the root this problem is equal to MinMis. We store the
optimal value for this problem in each node as $M^u_{\rm mis}$.

With this tree we can find the minimum value of $\Mis(s)$ in $O(1)$
time, since this is stored in the root of the tree. We can also find a
minimum separator $\hat{s}$ in $O(\log n)$ time by going down the
tree, always choosing a child that contains a minimum separator. 

Dynamically maintaining $R_u$ and $B_u$ is easy, but to maintain
$M^u_{\rm mis}$ we first need to prove the following lemma.

\begin{lemma}
\label{lem:childOptimum}
Let $u$ be a node with left child $v$ and right child $w$, and let $\hat{s}_v$ and $\hat{s}_w$ be optimal local separators for $v$ and $w$ respectively. Then either $\hat{s}_v$ or $\hat{s}_w$ is also an optimal local separator for $u$.
\end{lemma}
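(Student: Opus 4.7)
The plan is to argue by case analysis on where an optimal local separator $s^{*}$ for $T_u$ lies. Write $p_i$ for the rightmost point of $T_v$ and $p_{i+1}$ for the leftmost point of $T_w$; since $v$ is the left child of $u$, every point of $T_v$ lies strictly to the left of every point of $T_w$. Consequently $s^{*}$ falls into one of two (overlapping) cases: either $s^{*}$ sits in the range of $T_v$ (at or left of $p_{i+1}$), or $s^{*}$ sits in the range of $T_w$ (at or right of $p_i$).

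Next I would establish a clean decomposition of the misclassification count. Whenever a separator $s$ lies weakly left of $p_{i+1}$, every point of $T_w$ is placed to its right and classified as blue, so the contribution of $T_w$ to $\Mis(s)$ is exactly $R_w$ and depends only on the side, not on the exact position of $s$. Hence for such a separator the total splits as (number of points in $T_v$ misclassified by $s$) plus $R_w$, which is minimized at $M^v_{\rm mis} + R_w$. This value will be attained by $\hat{s}_v$ once we choose a representative within its optimal interval that still lies weakly left of $p_{i+1}$: if that interval extends past $p_i$ we explicitly place $\hat{s}_v$ inside $(p_i, p_{i+1})$, which is harmless because shifting inside a fixed interval does not change the misclassification count in $T_v$. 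A symmetric decomposition with roles reversed shows that $\hat{s}_w$ attains $B_v + M^w_{\rm mis}$.

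The lemma then follows immediately: if $s^{*}$ lies in $T_v$'s range, then $\Mis(s^{*}) \geq M^v_{\rm mis} + R_w = \Mis(\hat{s}_v)$, and optimality of $s^{*}$ forces equality, so $\hat{s}_v$ is also optimal for $T_u$; the symmetric argument gives $\hat{s}_w$ optimal in the other case. Since every separator falls into at least one of the two cases, one of $\hat{s}_v, \hat{s}_w$ is always an optimal local separator for $u$, and as a byproduct we get the recurrence $M^u_{\rm mis} = \min(M^v_{\rm mis} + R_w,\; B_v + M^w_{\rm mis})$, which is exactly what we need in order to maintain the augmented tree.

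The main subtlety I expect is the placement of the representatives $\hat{s}_v$ and $\hat{s}_w$. An optimal local separator is determined only up to its optimal interval, and that interval may extend outside the range of its own subtree (it might even extend past all points of $T_u$). Being explicit about the freedom to reposition within the optimal interval, so that the bookkeeping of the neighbouring subtree's contribution stays the clean $R_w$ or $B_v$, is the step that has to be handled carefully to make the case analysis tight.
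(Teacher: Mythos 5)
Your proposal is correct and rests on exactly the same key observation as the paper's proof: for any separator in $T_v$'s range, the contribution of $T_w$ to the misclassification count is the constant $R_w$ (and symmetrically $B_v$), so optimality transfers from the child to the parent. The only difference is presentational — you argue directly with a case split and derive the recurrence $M^u_{\rm mis} = \min(M^v_{\rm mis} + R_w,\; B_v + M^w_{\rm mis})$ as a byproduct, whereas the paper argues by contradiction; the placement subtlety you flag is already handled by the paper's definition of a local separator, which confines $\hat{s}_v$ to $T_v$'s range together with its adjacent intervals, hence weakly left of $T_w$.
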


\begin{proof}
Proof by contradiction: assume neither $\hat{s}_v$ nor $\hat{s}_w$ are optimal local optimal separators for $T_u$, but there is some optimal $s'$ which lies in $T_v$ (or symmetrically in $T_w$). Note that the number of extra misclassifications due to adding $T_w$ is constant for any local separator $s$ in $T_v$, since $T_w$ will always be fully on the right of $s$. So if $s'$ is a better separator than $\hat{s}_v$ for $T_u$, it is also a better separator than $\hat{s}_v$ for $T_v$. This is a contradiction since $\hat{s}_v$ should be an optimal separator for $v$.
\end{proof}

For a node $u$ with children $v$ and $w$, whenever $M^v_{\rm mis}$ or
$M^w_{\rm mis}$ changes we can update $M^u_{\rm mis}$ as follows. By
\cref{lem:childOptimum} we know that either $\hat{s}_v$ or $\hat{s}_w$
is an optimal separator for $T_u$. For $\hat{s}_v$ the total number of
misclassified points in $T_u$ is $M^v_{\rm mis} + R_w$, and similarly
for $\hat{s}_w$ it is $M^w_{\rm mis} + B_v$. We store the minimum of
the two as $M^u_{\rm mis}$.

This update can be done in constant time per node so the total update
time will remain $O(\log n)$, since we only need to update nodes along
the update path and nodes that are adjusted during rebalancing.

\subparagraph{Minimizing $\Max$ with at most $k$ misclassifications.}
Let $s_{max}$ be the optimal separator for the MinMax problem with
value $\Max(s_{max})$. Then any separator $s'$ has value
$\Max(s_{max}) + ||s' - s_{max}||$, since it is moved
$\|s' - s_{max}\|$ farther away from one of the two extremal points.

\begin{observation}
  The valid separator $s_\opt$ (i.e. $s_\opt \in S_k(B \cup R)$) with
  the smallest distance to $s_{max}$ is an optimal separator for
  $k$-mis MinMax.
\end{observation}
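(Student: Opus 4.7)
The plan is to exploit the monotonicity identity $\Max(s') = \Max(s_{\max}) + \|s' - s_{\max}\|$ that was stated in the sentence immediately preceding the observation. Since $\Max(s_{\max})$ is a constant independent of $s'$, this identity expresses $\Max$ as a strictly increasing function of the 1D distance from $s_{\max}$. My first step would therefore be to remark that minimizing $\Max$ over the constraint set $S_k(B \cup R)$ is equivalent to minimizing $\|s' - s_{\max}\|$ over that same set, since the two objectives differ only by an additive constant.

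From that equivalence the observation follows immediately: the valid separator $s_\opt \in S_k(B \cup R)$ closest to $s_{\max}$ is a minimizer of $\Max$ over $S_k(B \cup R)$, which is by definition an optimal $k$-mis MinMax separator; its optimal value equals $\Max(s_{\max}) + \|s_\opt - s_{\max}\|$.

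The main thing left to justify is the monotonicity identity itself, which I would dispatch with a short one-dimensional case analysis. Writing $r^*$ for the rightmost red point and $b^*$ for the leftmost blue point, we have $s_{\max} = (r^* + b^*)/2$ and, in the inseparable case, $\Max(s_{\max}) = (r^* - b^*)/2$. For any $s' \geq s_{\max}$ the point $b^*$ still lies to the left of $s'$ and so is misclassified at distance $s' - b^* = \|s' - s_{\max}\| + \Max(s_{\max})$, while any misclassified red point lies in $(s', r^*]$ and hence contributes distance at most $r^* - s_{\max} = \Max(s_{\max})$; taking the maximum over misclassified points gives exactly the claimed identity. The case $s' \leq s_{\max}$ is symmetric via $r^*$, and the separable case is trivial because then $s_{\max}$ itself already lies in $S_0 \subseteq S_k$, so $s_\opt = s_{\max}$.
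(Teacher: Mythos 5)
Your proposal is correct and follows essentially the same route as the paper: the paper states the observation as an immediate consequence of the displacement identity $\Max(s') = \Max(s_{\max}) + \|s' - s_{\max}\|$ given in the preceding sentence, which is exactly the equivalence you use. Your added case analysis verifying that identity (and your separate treatment of the separable case, where the identity as literally stated fails but the observation holds trivially since $s_{\max} \in S_0 \subseteq S_k$) is a careful filling-in of what the paper leaves implicit, not a different argument.
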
 

So, if $s_{max}$ is valid then $s_{max}$ itself is the optimal
separator. Furthermore, recall that points exactly on the separator
are always classified correctly, both red and blue. This means the
optimal separator will either be $s_{max}$ or be on an input point
between a valid and an invalid interval; a point fully within a valid
interval can be moved towards $s_{max}$ to decrease its distance to
the extremal point.

Hence, this suggests that we can compute $s_\opt$ (if it exists) in
$O(n\log n)$ time as follows. We first find the optimal MinMax
separator $s_{\max}$ in $O(n)$ time, and then as for the static MinMis
problem we compute for every interval if it is valid or not. If
$s_{\max}$ is valid, we return it. Otherwise for each point
$p \in B \cup R$ that lies between a valid and an invalid interval we
compute $\dist(p, s_{\max})$, and maintain and return the point with
the smallest distance. If no such point exists, $s_\opt$ does not exist.

To turn the above algorithm into a data structure, we maintain the
same heaps as for the MinMax problem. We also maintain the same
augmented balanced binary tree as for the MinMis problem, but include
an additional symbolic (colorless) point $-\infty$, such that every
interval has a point on its left. See
Figure~\ref{fig:kMisMinMaxWalk}. Updating these structures takes
$O(\log n)$ time.

To find an optimal separator $s_\opt \in S_k(B\cup R)$, we first we
query the MinMis data structure to check if there exists any separator
that misclassifies at most $k$ points. If not, $s_\opt$ does not
exist. Otherwise, we query the heaps for the optimal MinMax separator
$s_{\max}$ in $O(1)$ time. We then search for the point directly left
of $s_{\max}$ in our binary tree. The search path is shown in green in
Figure \ref{fig:kMisMinMaxWalk}. For each node $u$ along the search
path we compute $k_u$, the number of points outside of $T_u$
misclassified by $s_{\max}$. Observe that if $u$ is the root, then
$k_u = 0$, and if $u$ is a leaf, then $k_u = \Mis(s_{\max})$. Let $u$
be a node along the search path with children $v$ and $w$. If our
search path takes a left turn to $v$, then $s_{\max}$ misclassifies
all red points in $w$, so $k_v = k_u + R_w$. Similarly if our search
path takes a right turn to $w$, then $k_w = k_u + B_v$. Once we reach
a leaf $u$ we know that $\Mis(s_{\max}) = k_u$ ($+1$ if $u$ is a blue
point and $s_{\max}$ lies strictly to the right). If
$\Mis(s_{\max}) \leq k$ that means $s_{\max}$ is valid, so we return
it, otherwise the optimum is the closest valid point. We show below
how to find the closest valid point to the left of $s_{\max}$, and we
find the closest one to the right similarly. We return the closest one
of the two.

To find the closest valid point to the left we start by walking back
up the tree until we are at a node $u$ where the search path goes to
the right child and $k_u + M^u_{\rm mis} \leq k$, so $\hat{s}_u$
misclassifies $k$ or fewer points. We then walk down $u$'s left child,
choosing the right child if it contains a separator misclassifying $k$
or fewer points, otherwise choosing the left child. See the arrows in
Figure \ref{fig:kMisMinMaxWalk}. Similarly we find the closest valid
point to the right, and return the one closest to $s_{\max}$. Since
the tree is balanced, traversing this path takes $O(\log n)$ time. We
have thus established the following result:

\begin{figure}[tb]
\centering
\includegraphics{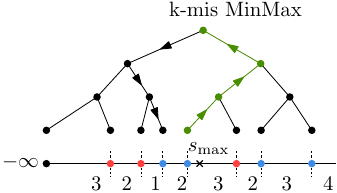}
\caption{The search path towards $s_{\max}$ in green. The walk to find
  the first valid point left of $s_{\max}$ for $k = 1$ is indicated
  with arrows.}
\label{fig:kMisMinMaxWalk}
\end{figure}

\optimalOneD*

\section{Preliminaries}
\label{sec:preliminaries}

\subparagraph{General definitions.} We use the standard point-line
duality that maps any point $p=(p_x,p_y)$ in the primal plane to a
line $p^* : y=p_x x-p_y$ in the dual plane, and any line
$\ell : y=mx + c$ in the primal plane into a point $(m,-c)$ in the
dual plane. 

Let $A$ be a set of $n$ lines, and let $k \in 1..n$. Let the
\emph{lower $\leq k$-level} $L_{\leq k}(A) \subset \R^2$ of $A$ be the set
of points for which there are at most $k$ lines below it. Similarly let the upper $\leq k$-level $L'{\leq k}(A)$ be set of points for which there are at most $k$ lines above it. Let $L_k(A)$ be the
\emph{$k$-level}, the boundary of $L_{\leq k}(A)$. Note that a
$k$-level lies exactly on existing lines in $A$. Although these terms
refer to a region in the plane, with a slight abuse of notation we
will also use them to refer to the part of the arrangement \A of the
lines in $A$ that lies in this region. The complexity of (\A
restricted to) $L_{\leq k}(A)$ is $O(nk)$, and it can be computed in
$O(nk + n \log n)$ time~\cite{lessThanK}. Note that the lower $0$-level
$L_0(A)$ and the upper $0$-level $L'_0(A)$ denote the \emph{lower envelope}
and the \emph{upper envelope} of the set of lines, respectively.

In $O(n \log k)$ time we can compute a \emph{concave chain
  decomposition}~\cite{chanLPViolations,chan16optim_deter_algor_shall_cuttin}
of $L_{\leq k}(A)$: a set of $O(k)$ chains of total complexity $O(n)$
that together cover all edges of \A in $L_{\leq k}(A)$. See Figure
\ref{fig:chainDecomposition}. A convex chain decomposition is defined
similarly.

\begin{figure}[tb]
    \centering
    \includegraphics{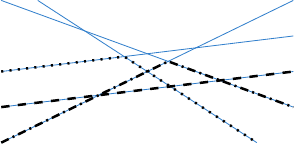}
    \caption{A concave chain decomposition of $L_{\leq 3}(B^*)$, with a dotted, dashed, and dashed-dotted chain.}
    \label{fig:chainDecomposition}
\end{figure}

Throughout this paper we assume points above a separating line $s$
should be blue, and points below should be red. In the dual this means
that lines above separating point $s^*$ should be red, and lines below
should be blue. In particular, we describe algorithms for finding the
optimal separator that classifies in this way. We can then repeat the
algorithm to find the best separator that classifies the other way
around, and finally output the best of the two. For ease of
description we assume all points in $R \cup B$ are in general
position, meaning that all coordinates are unique, and no three points
lie on a line.

\subparagraph{Valid separators.} Fix a value $k \in 1..n$. A separator
$s$ and its dual $s^*$ are \emph{valid} with respect to $k$ if (and
only if) $s \in S_k(B \cup R)$. Line $s$ misclassifies all red points
above $s$ and all blue points below $s$. In the dual, this means all
red lines below $s^*$ and all blue lines above $s^*$ are
misclassified. Consider the dual arrangement of lines $R^* \cup
B^*$. For any two separators $s_1$ and $s_2$ whose duals lie in the
same face of the arrangement, $\Mis(s_1) = \Mis(s_2)$. Let a face
containing valid points be a \emph{valid face}, and note that points
on the boundary of a valid face are also valid. A \emph{valid region}
is the union of a maximal set of adjacent valid faces. Now observe
that $S_k(B \cup R)$ thus corresponds to the union of these valid
regions. With some abuse of notation we use $S_k(B \cup R)$ to
refer to this union of regions in the dual plane as well.

\begin{lemma}[Chan~\cite{chan10bichromatic}]
  \label{lem:k2ValidRegions}
  The set $S_k(B \cup R)$ is contained in
  $L_{\leq k}(R^*) \cap L'_{\leq k}(B^*)$, consists of $O(k^2)$
  valid regions, and its total complexity is
  $O(nk^{1/3}+n^{5/6-\eps}k^{2/3+2\eps}+k^2)$.
\end{lemma}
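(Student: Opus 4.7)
The plan is to prove the three assertions of the lemma separately, in increasing order of difficulty.

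\textbf{Containment.} In the standard point-line duality, a red input point $r$ lies strictly above a separating line $s$ if and only if the dual line $r^*$ lies strictly below the dual point $s^*$, and symmetrically a blue point $b$ lies strictly below $s$ iff $b^*$ lies strictly above $s^*$. Consequently
\[
\Mis(s) \;=\; \#\{\text{red lines of } R^* \text{ strictly below } s^*\} \;+\; \#\{\text{blue lines of } B^* \text{ strictly above } s^*\}.
\]
If $s \in S_k(B\cup R)$, each of the two summands is at most $k$, which places $s^*$ simultaneously in $L_{\leq k}(R^*)$ and in $L'_{\leq k}(B^*)$, as claimed.

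\textbf{Number of valid regions.} Define the integer-valued step function $\mu(p)$ as the misclassification count at a dual point $p$, so that $S_k(B\cup R)=\{p:\mu(p)\leq k\}$. Then $\mu$ is constant on every face of the arrangement \A of $R^*\cup B^*$, and whenever $p$ crosses a red (resp.\ blue) line in the upward direction, $\mu$ changes by $+1$ (resp.\ $-1$). Each connected component of $S_k(B\cup R)$ must therefore contain a face on which $\mu$ attains a pointwise minimum over the whole plane. I would then argue, by the $\pm 1$ sign analysis around a vertex of \A, that such a locally-minimum face is incident to a bichromatic vertex (a red/blue crossing) whose four surrounding faces realize a specific $\mu$-pattern with a unique smallest value. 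Charging each component to such a witness vertex, the total number of witnesses with $\mu$-value at most $k$ is $O(k^2)$ by the standard Clarkson--Shor / bichromatic $k$-level counting argument that underlies Chan~\cite{chan10bichromatic}, giving the component bound.

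\textbf{Complexity bound.} The boundary of $S_k(B\cup R)$ is a subcomplex of \A whose edges all lie inside $L_{\leq k}(R^*)\cap L'_{\leq k}(B^*)$ and each of which separates a blue halfplane relation from a red one. The total complexity of such bichromatic shallow structure is exactly the quantity bounded by Chan's bichromatic $\leq k$-level theorem~\cite{chan10bichromatic}, giving $O(nk^{1/3}+n^{5/6-\eps}k^{2/3+2\eps}+k^2)$.

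\textbf{Main obstacle.} The containment and the structural description of $\mu$ through the arrangement are routine. The real difficulty lies in the quantitative statements: the $O(k^2)$ component count and the complexity bound are the heavy combinatorial content, and I would rely directly on Chan's bichromatic shallow-level machinery for those two steps rather than redevelop the underlying probabilistic/charging arguments.
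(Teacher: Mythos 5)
The paper does not actually prove this lemma: it is quoted verbatim from Chan~\cite{chan10bichromatic}, and your proposal ultimately does the same, deferring both the $O(k^2)$ region count and the $O(nk^{1/3}+n^{5/6-\eps}k^{2/3+2\eps}+k^2)$ complexity bound to Chan's machinery, so in substance your treatment matches the paper's. The one part you prove yourself, the containment in $L_{\leq k}(R^*) \cap L'_{\leq k}(B^*)$, is correct: $\Mis(s)\leq k$ forces each of the two monochromatic counts to be at most $k$. Two small remarks on your sketch of the component count: the witness face should be a face where $\mu$ attains its minimum \emph{within the component} (a local minimum of $\mu$ in the arrangement), not ``a pointwise minimum over the whole plane''---as written that is false, since a component of $\{\mu\leq k\}$ need not contain a global minimum; and components unbounded to the left have no leftmost bichromatic witness and need a separate (easy, $O(k)$) count. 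With those fixes, your charging scheme is exactly what the paper's own Lemmas~\ref{lem:leftMostRedBlue} and~\ref{lem:k2RedBlueIntersects} would supply if one wanted a self-contained proof of the $O(k^2)$ bound, while the full complexity bound genuinely requires Chan's result.
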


\begin{lemma}
  \label{lem:k2ValidRegions_lowerbound}
  There may be $\Omega(k^2)$ valid regions of total complexity
  $\Omega(k^2 + ne^{\sqrt{\log k}})$.
\end{lemma}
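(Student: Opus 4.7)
The lemma has two lower bounds that I plan to establish separately and combine by a disjoint-union argument. Throughout, for a dual point $p$, let $f_r(p)$ and $f_b(p)$ denote the number of red dual lines below $p$ and blue dual lines above $p$, respectively, so that $p$ is valid iff $F(p) := f_r(p) + f_b(p) \le k$.

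For the $\Omega(k^2)$ bound on the number of valid regions, my plan is to construct a configuration of $\Theta(k)$ red and $\Theta(k)$ blue points whose dual arrangement is engineered so that $F$ has $\Omega(k^2)$ isolated local minima at level $\le k$. The idea is to use two families of $\Theta(k)$ lines each (one red, one blue), with slopes and intercepts chosen so that in some affine coordinates on the dual plane $F$ behaves essentially as a separable sum $g(u) + h(v)$ of two ``sawtooth-like'' functions, each with $\Theta(k)$ shallow minima. The product then yields $\Theta(k^2)$ shallow minima of $F$, and with the amplitudes set correctly, $\Omega(k^2)$ of them lie at $F \le k$ with each surrounded by faces of $F > k$; hence each contributes its own connected component of $S_k(B \cup R)$. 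Constructions of this flavour appear in the bichromatic $k$-set literature~\cite{chan10bichromatic}.

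For the $\Omega(ne^{\sqrt{\log k}})$ bound on the total boundary complexity, the plan is to embed T\'oth's well-known $\Omega(n e^{\sqrt{\log k}})$ lower bound on the complexity of the $k$-level of an $n$-line arrangement. Take his $n$ lines and regard them as the duals $R^*$ of $n$ red points. Then add a small set $B$ of blue points whose $y$-coordinates are very negative; each dual blue line $b^* : y = b_x x - b_y$ then has a huge positive $y$-intercept, so within the bounded region of the dual plane where the $k$-level of $R^*$ lies we have $f_b \equiv 0$. Consequently $S_k(B \cup R)$ restricted to that region equals $\{f_r \le k\}$, whose boundary is precisely the $k$-level of $R^*$; this yields boundary complexity $\Omega(n e^{\sqrt{\log k}})$.

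To combine the two bounds, I would place the two constructions in disjoint, well-separated regions of the primal plane and rescale them so that their contributions to $\partial S_k$ live in disjoint bounded regions of the dual plane, then perturb into general position. The resulting configuration has both $\Omega(k^2)$ valid regions and boundary complexity $\Omega(n e^{\sqrt{\log k}})$, giving the claimed $\Omega(k^2 + n e^{\sqrt{\log k}})$ bound. The main obstacle I foresee is the $\Omega(k^2)$ construction: producing $\Omega(k^2)$ low-$F$ faces of the dual arrangement is straightforward, but verifying that they belong to $\Omega(k^2)$ distinct connected components of $\{F \le k\}$ (rather than merging through shared edges) requires carefully choosing slopes and intercepts so that each low-$F$ face is enclosed, across every incident edge, by strictly higher-$F$ faces.
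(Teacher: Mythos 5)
The crux of the lemma is the $\Omega(k^2)$ bound, and that is exactly where your proposal stops short of a proof: you describe a hoped-for ``separable sawtooth'' behaviour of $F$ but never exhibit a set of lines realizing it, and you yourself flag the isolation of the $\Omega(k^2)$ low-$F$ faces as an unresolved obstacle. The paper resolves this with a concrete gadget. Take $q$ thin strips, each bounded from above by a red dual line and from below by a blue dual line, pairwise crossing so that there are $\Omega(q^2)$ cells lying inside exactly two strips. A point inside a strip classifies both of its bounding lines correctly, while a point outside it misclassifies exactly one of them; hence a point inside $j$ strips misclassifies exactly $q-j$ lines. With $k=q-2$ only the $\Omega(k^2)$ double-intersection cells are valid, and each is automatically isolated: crossing any edge of such a cell means leaving one of its two strips, which raises the count to $q-1=k+1$. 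This is precisely the enclosure property your plan leaves unverified, and it comes for free from the gadget rather than from a delicate choice of slopes and intercepts. (Taking $k=q-1$ instead makes the union of the strips a single valid region with $\Omega(k^2)$ holes, which is how the paper also realizes the $\Omega(k^2)$ complexity term within one region.) So while your sketch is plausibly formalizable along these lines, as written the central construction is missing and the proof is incomplete.

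Two smaller points. In your T\'oth embedding the blue points are placed on the wrong side: a blue point with very negative $y$-coordinate dualizes to a line with a huge \emph{positive} intercept, i.e.\ lying far \emph{above} the region containing the red $k$-level, and blue dual lines above a candidate separator point are misclassified, so there $f_b \equiv |B|$, not $0$. You need the blue dual lines far \emph{below} that region (equivalently, blue points with very large positive $y$-coordinates), which is what the paper does (``all blue lines are placed sufficiently low''); with a constant number of blue points the argument can also be repaired by working with the $(k-|B|)$-level, but the claim ``$f_b\equiv 0$'' is false as stated. Finally, the step ``place the two constructions in disjoint regions and rescale'' needs more care than you give it, since lines are unbounded and each family sweeps through the other construction's dual region, potentially changing misclassification counts there; the paper sidesteps this by exhibiting the two bounds with separate configurations, using for given $n,k$ whichever construction realizes the dominant term.
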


\begin{proof}
  The $\Omega(ne^{\sqrt{\log k}})$ term follows from
  the fact that we can make an arrangement of red lines whose
  $k$-level has complexity
  $ne^{\Omega({\sqrt{\log
        k}})}$~\cite{toth01point_sets_many_sets}. All blue lines are
  placed sufficiently low so that they do not interfere. The whole red
  $k$-level shows up as boundary of one of the valid regions.

  \begin{figure}
    \centering
    \includegraphics{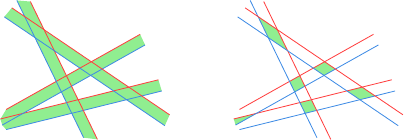}
    \caption{A construction of $q = 4$ red/blue pairs of lines. Left: $k = q-2 = 2$ resulting in $\Theta(k^2)$ valid cells. Right: $k = q-1 = 3$, resulting in a single valid cell of complexity $\Theta(k^2)$.}
    \label{fig:validRegionLowerBound}
  \end{figure}

  Constructions for the $\Omega(k^2)$ term are shown in
  \cref{fig:validRegionLowerBound}. They show that the valid regions
  can have $\Omega(k^2)$ connected components and that a single
  connected component can have complexity $\Omega(k^2)$, as it can
  have this many holes.

  We create $q$ thin strips, each bounded from above by a red line and
  from below by a blue line, such that there are $\Omega(q^2)$ cells
  where two strips intersect. Inside a strip both boundary lines are
  classified correctly, while above or below the strip exactly one of
  them is misclassified. Inside an intersection of two strips we are
  inside two strips and outside $q-2$, so we have $q-2$
  misclassifications. Similarly, inside a single strip we have $q-1$
  misclassifications, and in any 'face' between the strips $q$
  misclassifications. By setting $k = q-2$ only the intersections are
  valid, so we have $\Omega(k^2)$ valid regions of constant
  size.  By setting $k = q-1$ everything inside a strip is valid, and
  we have one large valid region with many holes of complexity
  $\Omega(k^2)$.
\end{proof}

In addition, we observe and will use the following useful properties.

\begin{lemma}
  \label{lem:k2RedBlueIntersects}
  There are $O(k^2)$ red-blue intersections in $L_{\leq k}(R^*) \cap L'_{\leq k}(B^*)$.
\end{lemma}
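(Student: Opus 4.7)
My plan is to bound the red--blue intersections by decomposing the relevant parts of the two monochromatic arrangements into concave and convex chains, and then exploit the fact that a concave chain and a convex chain cross at most twice. First, I would apply the concave chain decomposition recalled in the preliminaries to $L_{\leq k}(R^*)$, obtaining $O(k)$ concave chains $C_1,\dots,C_t$ whose union contains every edge of the red arrangement that lies inside $L_{\leq k}(R^*)$. Symmetrically, I would apply a convex chain decomposition to $L'_{\leq k}(B^*)$, yielding $O(k)$ convex chains $D_1,\dots,D_s$ covering every edge of the blue arrangement inside $L'_{\leq k}(B^*)$. Each $C_i$ is (a piece of) the lower envelope of some subset of $R^*$, hence a concave $x$-monotone curve, and each $D_j$ is (a piece of) the upper envelope of some subset of $B^*$, hence a convex $x$-monotone curve.

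The key step is to argue that every red--blue intersection $p = r \cap b$ with $r \in R^*$, $b \in B^*$, and $p \in L_{\leq k}(R^*) \cap L'_{\leq k}(B^*)$ lies on at least one pair $(C_i, D_j)$. A sufficiently small $x$-neighborhood of $p$ along $r$ consists of edges of the pure red arrangement (the blue line $b$ plays no role there), and since membership of $L_{\leq k}(R^*)$ depends only on $R^*$ and varies continuously in $x$ away from the red $k$-level, this neighborhood still lies in $L_{\leq k}(R^*)$; hence it is covered by some $C_i$. The symmetric argument produces a $D_j$ through $p$, so counting red--blue intersections inside the region reduces to counting crossings between the $C_i$'s and the $D_j$'s.

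Finally I would invoke the elementary fact that a concave $x$-monotone curve and a convex $x$-monotone curve cross in at most two points: their difference is concave and a non-constant concave function has at most two zeros. Summing $2$ crossings over the $t \cdot s = O(k^2)$ chain pairs yields the claimed $O(k^2)$ bound. The one delicate point I expect is handling intersections that land exactly on the $k$-level boundary or at shared vertices of several chains; but in each such case $p$ still lies on at least one chain from each decomposition, so only a constant-factor overcount is introduced and the asymptotic bound is unaffected.
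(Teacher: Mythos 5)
Your proof is correct and takes essentially the same approach as the paper: decompose $L_{\leq k}(R^*)$ into $O(k)$ concave chains and $L'_{\leq k}(B^*)$ into $O(k)$ convex chains, then bound the crossings of each concave--convex pair by two. The paper's version is a terse two-sentence statement of exactly this; the covering argument and the ``difference is concave'' justification you supply are the correct details it leaves implicit.
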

\begin{proof}
  Consider a concave chain decomposition of $L_{\leq k}(R^*)$
  consisting of $O(k)$ concave chains, and a convex chain
  decomposition of $L'_{\leq k}(B^*)$ consisting of $O(k)$ convex
  chains. Since a concave chain and a convex chain intersect at most
  twice, the $O(k)$ chains intersect at most $O(k^2)$ times.
\end{proof}

\begin{lemma}
  \label{lem:lineKIntersections}
  Any line $\ell$ has $O(k)$ intersections with $L_{\leq k}(R^*) \cap L'_{\leq k}(B^*)$.
\end{lemma}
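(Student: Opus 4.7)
The plan is to leverage the concave/convex chain decompositions introduced in the preliminaries, in a manner directly analogous to the proof of \cref{lem:k2RedBlueIntersects}. Informally, a line cannot cross a concave (or convex) chain more than twice, so any bound on the number of chains translates into a bound on the number of crossings made by $\ell$.

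First, I would apply the concave chain decomposition to $L_{\leq k}(R^*)$, obtaining $O(k)$ concave chains whose union covers the entire $k$-level $L_k(R^*) = \partial L_{\leq k}(R^*)$. Symmetrically, I would take a convex chain decomposition of $L'_{\leq k}(B^*)$ into $O(k)$ convex chains that cover $L'_k(B^*) = \partial L'_{\leq k}(B^*)$.

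Next I would invoke the elementary fact that a line intersects the graph of a concave (resp.\ convex) function in at most two points: otherwise three collinear points on a concave graph would force a strict violation of concavity. Applied to each of the $O(k)$ chains on each side, this gives at most $O(k)$ crossings of $\ell$ with $L_k(R^*)$ and at most $O(k)$ crossings of $\ell$ with $L'_k(B^*)$.

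Finally, since $\partial\bigl(L_{\leq k}(R^*) \cap L'_{\leq k}(B^*)\bigr) \subseteq L_k(R^*) \cup L'_k(B^*)$, the line $\ell$ crosses the boundary of the intersection region at most $O(k)$ times, so $\ell \cap \bigl(L_{\leq k}(R^*) \cap L'_{\leq k}(B^*)\bigr)$ consists of $O(k)$ maximal intervals, proving the lemma. There is no real obstacle here beyond the observation that the chain decompositions cover the boundary; the only mild care needed is in handling tangencies and overlaps, which are ruled out by the standing general-position assumption on $R \cup B$ (and hence on the dual lines).
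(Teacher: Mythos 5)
Your proof takes essentially the same approach as the paper's: both rely on the concave chain decomposition of $L_{\leq k}(R^*)$ and the convex chain decomposition of $L'_{\leq k}(B^*)$ into $O(k)$ chains, and the fact that a line meets each concave or convex chain at most twice. You simply spell out the final bookkeeping step (that $\partial(L_{\leq k}(R^*)\cap L'_{\leq k}(B^*))$ is contained in the union of the two $k$-levels, so $O(k)$ boundary crossings give $O(k)$ maximal intervals) that the paper leaves implicit.
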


\begin{proof}
  Line $\ell$ intersects a convex (concave) chain at most twice. Since
  $L_{\leq k}(R^*)$ can be covered by $O(k)$ concave chains, and
  $L'_{\leq k}(B^*)$ can be covered by $O(k)$ convex chains, the
  lemma follows.
\end{proof}

\begin{lemma}
  \label{lem:leftMostRedBlue}
  A valid region $V$ is bounded by red lines on the top and blue lines
  on the bottom. The leftmost point of $V$ is a red-blue intersection,
  or $V$ is unbounded towards the left. The rightmost point in $V$ is
  a red-blue intersection, or $V$ is unbounded to the right.
\end{lemma}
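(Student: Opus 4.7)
My plan is to prove the lemma by tracking how the misclassification count changes as one moves the dual point $s^*$ vertically. The core observation is that misclassification is monotone in a controlled way when crossing lines of each color, which forces the top and bottom of every valid region to be monochromatic.

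First I would set up the bookkeeping. Recall that a red line $r^*$ is misclassified at $s^*$ iff $s^*$ lies above $r^*$, and a blue line $b^*$ is misclassified iff $s^*$ lies below $b^*$. Now fix $s^*$ in the interior of some face of the arrangement and move it upward along a vertical segment. Crossing a red line $r^*$ (from below it to above it) flips $r^*$ from correctly classified to misclassified, so the total count \emph{increases} by $1$. Crossing a blue line $b^*$ flips $b^*$ from misclassified to correctly classified, so the count \emph{decreases} by $1$. Symmetrically, moving downward across a red line decreases the count and across a blue line increases it.

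Next I would use this to identify the colors on the boundary of a valid region $V$. Let $e$ be an edge on the boundary of $V$, say on the \emph{top} boundary, meaning that just below $e$ (locally) we are inside $V$ and just above $e$ we are outside $V$. The count inside $V$ is $\leq k$ and just outside $V$ is $\geq k+1$, so crossing $e$ upward must increase the count, forcing $e$ to be a piece of a red line. The same argument on the bottom boundary (moving downward across $e$ must increase the count) forces $e$ to be blue.

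Finally I would derive the leftmost/rightmost claims. Assume $V$ is bounded to the left and let $p^*$ be its leftmost point. Since $\overline{V}$ is closed, $p^*$ lies on the boundary, and because $V$ is two-dimensional (points on the boundary of valid faces are still valid), there are boundary edges incident to $p^*$ going into the half-plane $x > x(p^*)$. Among these, at least one edge emerges locally as a \emph{top} boundary edge and at least one as a \emph{bottom} boundary edge of $V$ (otherwise one could slide slightly further left while staying in $\overline{V}$, contradicting the choice of $p^*$). Under the general position assumption, vertices of the arrangement are exactly intersections of two lines, so these two incident edges lie on two distinct lines crossing at $p^*$; by the previous step the top edge is red and the bottom edge is blue, so $p^*$ is a red–blue intersection. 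The argument for the rightmost point is symmetric.

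The main obstacle I anticipate is the local picture at $p^*$: one needs to rule out degenerate cases where, for example, the leftmost point might be a cusp between two edges of the same color, or where $V$ pinches down. The color monotonicity from the first step rules out the former, and general position rules out the latter; I would make these two appeals explicit rather than hand-wave them.
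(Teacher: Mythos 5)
Your proof is correct and rests on essentially the same local-perturbation argument as the paper's: moving the dual point across a red (blue) line changes the misclassification count by $+1$ or $-1$ depending on direction, combined with maximality of valid regions and the absence of vertical dual lines, forces the leftmost/rightmost point to be a vertex whose upper right-going edge is red and lower one blue. If anything, your version is slightly more complete, since you explicitly derive the first sentence of the lemma (red on top, blue on bottom), which the paper's proof uses only implicitly when it concludes that the second line through the lowest leftmost point must be red.
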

\begin{proof}
  Let $V$ be bounded to the left (otherwise there is nothing to
  prove), and let $s \in V$ be a lowest leftmost point. Therefore, $s$
  must lie on a blue line $b$ (otherwise we could shift it further
  down without increasing the number of misclassifications). Since it
  is a leftmost point, we cannot shift it to the left along $b$ either
  (i.e. it is a local minimum in terms of $x$-coordinates). Therefore,
  it must lie on a second line $\ell$. Since $R^* \cup B^*$ contains
  no vertical lines, it follows that $\ell$ must be red.
\end{proof}

\subsection{Finding a maximum-margin strip}
\label{sub:Finding_a_maximum-margin_strip}

Let $S$ be a maximum margin-strip bounded by parallel lines
$\ell_B$ and $\ell_R$, with $\ell_B$ above $\ell_R$ that separates $B$
and $R$. It is easy to see that $\ell_B$ must contain a blue point and
$\ell_R$ must contain a red point. More precisely, we have:

\begin{lemma}
  \label{lem:maximum_margin_strip}
  Let $b \in \CH(B)$ and $r \in \CH(R)$ be the pair of points
  realizing the Euclidean distance between $\CH(R)$ and
  $\CH(B)$. There is a maximum width strip $S$ separating $B$ and $R$
  bounded by $\ell_B$ and $\ell_R$ such that: (i) $\ell_B$ contains
  $b$, (ii) $\ell_R$ contains $r$, and (iii) $S$ is perpendicular to
  the line segment $\overline{br}$.
\end{lemma}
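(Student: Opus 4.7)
My plan is a short two-step argument: first exhibit a concrete separating strip of width $d := \|b-r\|$ perpendicular to $\overline{br}$ through the points $b$ and $r$, then argue no separating strip can be wider than $d$.

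First, since the input is assumed linearly separable (this is the regime of Section~\ref{sub:Finding_a_maximum-margin_strip}), the convex hulls $\CH(R)$ and $\CH(B)$ are disjoint, so the closest pair $(b,r) \in \CH(B) \times \CH(R)$ satisfies $d > 0$. Let $\ell_R$ be the line through $r$ perpendicular to $\overline{br}$, and let $\ell_B$ be the line through $b$ perpendicular to $\overline{br}$. Let $S$ be the closed strip bounded by $\ell_R$ and $\ell_B$; then $S$ has width exactly $d$.

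Next I would verify that $S$ actually separates $R$ from $B$. The key ingredient is the classical first-order optimality condition for the projection onto a convex set: since $r$ minimizes $\|p - b\|$ over $p \in \CH(R)$, every point $p \in \CH(R)$ satisfies $\langle p - r,\, b - r\rangle \le 0$, which is exactly the statement that $\CH(R)$ lies in the closed halfplane bounded by $\ell_R$ that does not contain $b$. Symmetrically, $\CH(B)$ lies in the closed halfplane bounded by $\ell_B$ that does not contain $r$. Hence $R \subseteq \CH(R)$ lies on one side of $S$, $B \subseteq \CH(B)$ on the other, and (i)--(iii) hold by construction.

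For maximality, suppose $S'$ is any strip of width $w'$ separating $B$ from $R$. Then $\CH(R)$ and $\CH(B)$ lie in the two closed halfplanes complementary to the open interior of $S'$, so the segment $\overline{br}$, which connects a point of $\CH(R)$ to a point of $\CH(B)$, must cross the full width of $S'$; hence $\|b - r\| \ge w'$, i.e.\ $d \ge w'$. Thus $S$ achieves the maximum strip width.

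The main obstacle is really just making the separation step rigorous: one must invoke the projection/optimality property of closest points on a convex set in order to conclude that $\ell_R$ is a supporting line of $\CH(R)$ (and likewise for $\ell_B$). Everything else---the width calculation and the maximality via ``$\overline{br}$ crosses any separating strip''---is routine.
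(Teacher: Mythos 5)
Your proof is correct, and it reaches the conclusion by a genuinely different route for the separation step. The paper first invokes Edelsbrunner's structural lemma (one of $b,r$ is a hull vertex and the other lies on an edge), uses the edge's supporting line to conclude that $\CH(R)$ lies on one side of $\ell_R$, and then argues by convexity that $\CH(B)$ lies beyond $\ell_B$; you instead apply the first-order optimality (variational inequality) condition for the closest point on a convex set, symmetrically at both $r$ and $b$: since $r$ minimizes distance to $b$ over $\CH(R)$, every $p \in \CH(R)$ satisfies $\langle p-r, b-r\rangle \le 0$, so $\ell_R$ supports $\CH(R)$, and likewise $\ell_B$ supports $\CH(B)$. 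This avoids the vertex/edge case distinction and the external citation entirely, making the argument more self-contained and elementary; the paper's reliance on Edelsbrunner's lemma is natural there because the same structure is reused algorithmically (recomputing the strip from the maintained hulls in $O(\log n)$ time), but it is not logically necessary for the lemma itself. The maximality halves coincide: both observe that any separating strip of width $w'$ forces the segment $\overline{br}$, joining points of the two hulls on opposite sides, to span the strip, so $w' \le \dist(b,r)$, and the constructed strip attains this width. Your handling of the degenerate case is also fine, since the separable regime guarantees $\dist(b,r) > 0$.
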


\begin{proof}
  By Edelsbrunner~\cite[Lemma
  2.1]{edelsbrunner85comput_extrem_distan_between_two_convex_polyg},
  either $b$ or $r$ is a vertex, and the other point lies on an
  edge. Assume that $b$ is a vertex of $\CH(B)$, and $r$ lies on edge
  $e_r$ of $\CH(R)$. The case $r$ is a vertex of $\CH(R)$ is
  symmetric. By definition of the convex hull, $\CH(R)$ lies in one of
  the halfplanes bounded by $\ell_R$. Assume without loss of
  generality it is the halfplane below $\ell_R$. Similarly, it now
  follows by convexity that $\CH(B)$ (and thus $B$) lies above
  $\ell_B$. Hence, the strip $S$ is empty of points in $B \cup R$ and
  separates $B$ and $R$. Since $S$ is perpendicular to
  $\overline{br}$, the width $w$ of this strip equals the length,
  $\dist(b,r)$, of the line segment $\overline{br}$.

  A maximum width separating strip $S^*$ has width
  $w^* \geq w = \dist(b,r)$. However, since such a strip separates $B$
  and $R$ it follows that $\dist(\CH(B),\CH(R))=\dist(b,r)$ is at least
  $w^*$. Hence, it follows that
  $\dist(b,r)=w\leq w^* \leq \dist(b,r)$. Hence $S$ is actually an
  maximum width separating strip.
\end{proof}

By Lemma~\ref{lem:maximum_margin_strip} we can thus compute a maximum
width strip---and thus a separating line that maximizes
$M_{\textit{strip}}$---by finding the pair of points that realizes the
minimum distance between $\CH(B)$ and $\CH(R)$. Given two sets of
points $R$ and $B$, computing the distance between their convex hulls
$\CH(B)$ and $\CH(R)$ is an LP-type
problem~\cite{gartner95subex_algor_abstr_optim_probl}, and can be
solved in linear
time~\cite{chazelle96linear_time_deter_algor_optim}. Hence, we can
find a maximum margin separator in linear time. This result actually
extends to higher dimensions as well.

We can actually maintain a maximum margin separating line under
updates efficiently as well, by maintaining the convex hulls $\CH(B)$
and $\CH(R)$ in the data structure of Overmars and van
Leeuwen~\cite{overmars81maint}. Their data structure uses linear
space, and allows insertions and deletions of points in $O(\log^2 n)$
time. Since their structure explicitly maintains the upper and lower
hull of $\CH(R)$ and $\CH(B)$ in a binary search tree, we can directly
use the algorithm of
Edelsbrunner~\cite{{edelsbrunner85comput_extrem_distan_between_two_convex_polyg}}
to recompute a maximum width strip from scratch in $O(\log n)$ time
after every update.

\begin{theorem}
  \label{thm:2d_maxmargin}
  Let $B \cup R$ be a set of $n$ points in $\R^2$. There is an $O(n)$
  space data structure that maintains an optimal separator $s$ with
  respect to $M_\mathit{strip}$, and supports inserting or deleting a
  point in amortized $O(\log^2 n)$ time.
\end{theorem}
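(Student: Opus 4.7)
The plan is to reduce the problem of maintaining an optimal maximum-margin separator to the problem of maintaining the two convex hulls $\CH(R)$ and $\CH(B)$ together with the closest pair between them, and then plug in off-the-shelf dynamic convex-hull machinery. By Lemma~\ref{lem:maximum_margin_strip}, a maximum width separating strip $S$ is determined by the pair $(b,r)$ with $b \in \CH(B)$ and $r \in \CH(R)$ that realizes $\dist(\CH(B),\CH(R))$: the two bounding lines $\ell_B,\ell_R$ pass through $b$ and $r$ respectively, perpendicular to $\overline{br}$, and the midline of $S$ is the desired separator. So it suffices to maintain this closest pair dynamically.

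First, I would maintain $R$ and $B$ in two independent instances of the Overmars--van Leeuwen data structure~\cite{overmars81maint}. This uses $O(n)$ space in total, and handles a point insertion or deletion in $R$ or $B$ in amortized $O(\log^2 n)$ time. Crucially, this structure stores the upper and lower hulls of each of $\CH(R)$ and $\CH(B)$ in balanced binary search trees that support the standard hull navigation primitives (tangent queries, binary search on the hull) in $O(\log n)$ time.

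Second, after every update I would recompute the closest pair between $\CH(R)$ and $\CH(B)$ from scratch using Edelsbrunner's algorithm~\cite{edelsbrunner85comput_extrem_distan_between_two_convex_polyg}. That algorithm, given the two convex polygons represented by binary search trees on their hulls, performs $O(\log n)$ tangent and binary search operations and thus runs in $O(\log n)$ time in this representation. From the returned pair $(b,r)$ I then report the line perpendicular to $\overline{br}$ through the midpoint $\tfrac{1}{2}(b+r)$ as the current optimal separator $s$. The total amortized update time is therefore dominated by the $O(\log^2 n)$ bound of Overmars--van Leeuwen, and the space remains $O(n)$.

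The only point requiring care is that Edelsbrunner's algorithm is stated for explicitly given convex polygons, whereas here the hulls are implicit in the Overmars--van Leeuwen trees. However, as observed in the paragraph preceding the theorem, the primitives used by Edelsbrunner (walking along the hulls, binary searching on an edge chain, and computing tangents) are exactly the primitives supported by the Overmars--van Leeuwen representation, so no additional algorithmic work is needed. Since we also need to handle the case where $R$ and $B$ fail to be linearly separable after an update, I would additionally detect this (e.g.\ by testing whether $\CH(R)$ and $\CH(B)$ intersect while running Edelsbrunner's procedure) and report that no separator exists in that case; this adds only a constant factor and preserves the claimed bounds.
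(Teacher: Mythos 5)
Your proposal is correct and follows essentially the same route as the paper: maintain $\CH(R)$ and $\CH(B)$ via Overmars--van Leeuwen in $O(\log^2 n)$ amortized update time and $O(n)$ space, and after each update rerun Edelsbrunner's closest-pair-of-convex-polygons algorithm in $O(\log n)$ time on the explicitly stored hull trees, invoking Lemma~\ref{lem:maximum_margin_strip} to turn the closest pair into the maximum-margin strip. The extra remark about detecting non-separability is a sensible addition, but otherwise this matches the paper's argument.
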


Note that there are data structures to maintain the convex hull that
supports updates in $O(\log^{1+\eps} n)$
time~\cite{chanDynamicHalfplaneReporting}, or even in optimal
$O(\log n)$
time~\cite{brodal02dynam_planar_convex_hull,jacob19dynam_planar_convex_hull_arxiv}. However,
it is unclear whether they can be made to support computing the
distance between $\CH(R)$ and $\CH(B)$ efficiently, as these data
structures do not explicitly maintain the hull. Chan's
structure~\cite{chanDynamicHalfplaneReporting} does support testing
whether $\CH(R)$ and $\CH(B)$ intersect (a subroutine of
Edelsbrunner's algorithm). Hence, it would be interesting to try and
extend his approach to support computing the distance as well.

\section{Dynamic linear programming with violations}
\label{sec:Linear_Programming_with_Violations}

In this section we consider the following problem: given a set of $n$
constraints (halfplanes) $H$ in $\R^2$, an objective function $f$, and
an integer $k$, find a point $p$ that violates at most $k$ constraints
and minimizes $f(p)$. We assume without loss of generality that
$f(p)=p_x$, so we are looking for the leftmost \emph{valid} point,
that is, a point that violates at most $k$ constraints. Chan solves
this problem in $O((n + k^2) \log n)$ time~\cite{chanLPViolations}. In
the same time bounds, his approach can find the minimum number
$k_{\min}$ of constraints violated by any point. We give an overview
of his techniques below, and then show how to make the approach
dynamic. We maintain a leftmost valid point $p$ under semi-online insertions
and deletions of constraints. `Semi-online' means that when a
constraint is inserted we are told when it will be deleted. We first
do so for a given value $k$, and then extend the result to maintain
$k_{\min}$.

The above problem of linear programming with violations is a
generalization of (the dual of) our MinMis problem. A point $p$
violates a constraint (halfplane) $h \in H$ if it lies outside of the
halfplane. Let $R$ be the set of lines bounding lower halfplanes, and
$B$ be the set lines bounding upper halfplanes, and color them red and
blue respectively. Then $p$ violates all blue constraints below, and
all red constraints above, and thus $p$ violates exactly the lines in
$X(p, R \cup B)$, and thus violates $\Mis(p)$ constraints. This means
we can solve the MinMis problem and compute
$s_{\textrm{mis}} = \argmin_s \Mis(s)$ in $O((n + k^2)\log n)$ time.

\subsection{Chan's algorithm}
\label{sec:chansAlgorithm}

Chan considers the decision version of the problem: given an integer
$k$, find the leftmost point that violates at most $k$
constraints. Their algorithm actually generates all local minima that
violate fewer than $k$ constraints as well, so by guessing
$k = \sqrt{n}, 2 \sqrt{n}, 4\sqrt{n} \dots$ we can find the minimum
value $k_{\min}$ for which a valid point exists in the same time
bounds.

We first assume the optimum is bounded, i.e. there are no valid
regions that are unbounded towards the left, and handle this case
later. Then by Lemma \ref{lem:leftMostRedBlue}, the leftmost valid
point in a valid region must be a red-blue intersection, and by Lemma
\ref{lem:k2RedBlueIntersects} there are only $O(k^2)$ of them. We
construct the concave chain decomposition of $L_{\leq k}(R)$ and the
convex chain decomposition of $L'_{\leq k}(B)$ in $O(n \log n)$
time, and compute all their intersections in $O(k^2 \log n)$ time;
this gives us all candidate optima.

Consider a red chain $c_r$, as in Figure
\ref{fig:chansAlgorithm}. Every blue chain $c_b$ defines a (possibly
empty) interval on $c_r$, such that points inside the interval lie
above $c_b$ and points outside the interval lie below $c_b$. The
\emph{blue ply} of a point $p$ on $c_r$ is the number of blue chains above $p$ (and thus the number of violated blue constraints above $p$). This is the number of blue intervals
not containing $p$, and thus the number of intervals ending before $p$
or starting after $p$. By storing the start points (end points) of all blue
intervals in a balanced binary tree we
can thus find the blue ply of any point $p$ on $c_r$ in $O(\log k)$
time. We call this the \emph{chromatic ply data structure} of
$c_r$. The chromatic ply data structure of a blue chain $c_b$ is
defined symmetrically.

For an intersection point $p$ between red chain $c_r$ and blue chain
$c_b$ we can now calculate its $\Mis(p)$ value: query $c_r$ for the blue ply and query $c_b$ for the red ply, both in $O(\log k)$ time, and sum them up. For all $O(k^2)$ red-blue intersections this takes $O(k^2 \log k)$ time. Among them we then find the leftmost valid intersection and return it, if it exists. 

If the optimum was unbounded, then the leftmost segment of one of the chains must be valid. We can check this in $O(k \log k)$ time using the chromatic ply data structures. 

\begin{figure}[tb]
    \centering
    \includegraphics{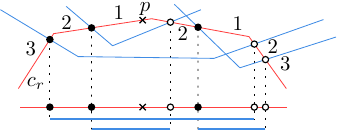}
    \caption{The blue chains create intervals on a red chain $c_r$. We
      can query the blue ply by comparing the number of start points
      (solid) and the number of end points (open).}
    \label{fig:chansAlgorithm}
\end{figure}

The above algorithm takes $O((n + k^2)\log n)$ time. Note that our measure $\Max$ is not linear, and therefor this result is not directly applicable to the $k$-mis MinMax problem.

\subsection{A semi-dynamic data structure for a fixed $k$}

We now make the above algorithm dynamic under semi-online insertions
and deletions: given a value $k$, we maintain the leftmost point that
violates at most $k$ constraints.

We first show how to maintain the concave chain decomposition of $L_{\leq k}(R)$ (and similarly the convex chain decomposition of $L'_{\leq k}(B)$) using an extension of the logarithmic method~\cite{dobkin1989logMethodSemiOnline, smid1990logarithmicExtension}, and then use these chains to actually maintain the leftmost valid separator. 

\subsubsection{Maintaining the concave chain decomposition}
First we maintain the concave chain decomposition of $L_{\leq k}(R)$
using Dobkin and Suri's extension of the logarithmic
method~\cite{dobkin1989logMethodSemiOnline,smid1990logarithmicExtension}. The idea is to
maintain a partition of $R$ into $z = O(\log n)$ subsets
$R_0, R_1.. R_z$, such that for each layer $i$ the following conditions hold:

\begin{description}
\item [(1)] none of the lines in set $R_i$ will be
  deleted for at least $2^i$ updates after the set is
  created.
\item [(2)] $|R_i| = O(2^i)$.
\end{description}

For each set $R_i$ we separately store the concave chain decomposition
of $L_{\leq k}(R_i)$. Since each such structure contains $O(k)$
chains, we have $O(k \log n)$ chains in total. The union of these
chains also covers $L_{\leq k}(R)$: if a line $\ell$ is among the
lowest $k$ lines in $R$ at some $x$-coordinate, it must also be among
the lowest $k$ lines in any subset $R' \subseteq R$, including the
subset $R_i$ containing $\ell$.

The basic idea is the following. After set $R_i$ is created, by
condition (1) no items will be deleted from it for at least $2^i$ updates, so
it remains fixed for $2^i$ updates and gets rebuilt after that. As
such, the smaller data structures are rebuilt quite often, and the
larger data structures remain fixed for a long time. By construction,
deletions happen only at layer $0$ from set $R_0$, which contains very few ($O(1)$)
lines. Lines are inserted in layer $0$, and gradually move to
higher layers where they remain fixed for an ever increasing
number of updates. When a line is to be deleted soon, it
gradually moves down to layer $0$ again. Next, we specify how to initialize
and update this data structure.

\subparagraph{Initialization.} Say we have some initial set of lines
$R$ with $|R| = n$. We put the first line to be deleted ($d = 1$) in
set $R_0$, and then iterate through all remaining lines $r \in R$ with
$d > 1$. If line $r$ is to be deleted after $d$ deletions,we place it at layer $\lceil\log(d)\rceil -1$. This indeed satisfies
condition (1) and (2). Afterwards, we build the concave chain
decomposition of each set $R_i$. This all takes $O(n \log n)$ time.

\subparagraph{Update.} Assume we have to perform $m = 2^z = \Omega(n)$
updates, numbered $0$ through $m-1$ (if $m \gg n$ we can rebuild the
entire data structure every $n$ updates). For every $0 \leq i \leq z$,
we divide the updates into consecutive \emph{blocks} of size $2^i$:
that is, block $j$ at layer $i$ consists of updates $j 2^i$ through
$(j+1)2^i-1$. See Figure \ref{fig:logarithmicMethodChains} for a
schematic illustration.

Suppose we are processing update $u$. We first perform the update on
$R_0$, i.e. insert a new line into $R_0$ or remove a line from
it. Then, we have to ensure that condition (1) and (2) still hold. Let $j(i,u)$ be the block at layer $i$ during update $u$, i.e. $j(i,u) = \lfloor u /2^i \rfloor$. Let
$i'$ be the highest layer for which the current block ends after update $u$ (i.e. $i'$ is the largest integer for which $u = (j(i',u)+1)2^{i'}-1$). That means that, for every layer $y \leq i'$, we have performed exactly $2^y$ updates since set $R_y$ was created, and items in set $R_y$ might be due to be deleted soon: set $R_y$ no longer satisfies condition (1). Therefore, we destroy all data structures at layers $0$ through $i'$, and redistribute all $O(2^{i'})$ items over sets $R_0$ through $R_{i'}$ as we did during initialization; lines that are to be deleted after $d > 2^{i'}$ deletions are put in $2^{i'}$. Afterwards, conditions (1) and (2) hold again.

\begin{figure}
    \centering
    \includegraphics{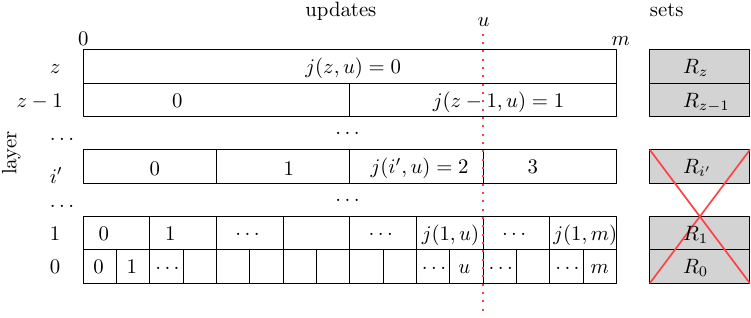}
    \caption{Updates $0$ through $m$ are divided into blocks at every layer. At update $u$, all data structures below layer $i'$ are destroyed and rebuilt.}
    \label{fig:logarithmicMethodChains}
\end{figure}

\subparagraph{Analysis.}  Consider set $R_i$ at layer $i$. This set is
rebuilt every $2^i$ iterations (whenever a block at layer $i$ ends),
after which we have to rebuild the concave chain decomposition of
$L_{\leq k}(R_i)$ in $O(2^i \log 2^i) = O(2^i i)$ time. During all
$2^i-1$ updates in between, this set remains fixed, and we do not have
to do anything. This results in an amortized update time for set $R_i$
of $O(2^i i / 2^i) = O(i)$. Summing this up over all layers results in
$O(\sum_{i = 0}^z i) = O(z^2) = O(\log^2 n)$ amortized update time. As
Smid shows~\cite{smid1990logarithmicExtension}, this amortized bound
can be turned into a worst-case bound.

The chain decomposition of set $R_i$ uses $O(|R_i|)$ space, so the total space usage is $O(n)$.

\begin{lemma}
  \label{lem:semiOnlineChains}
  We can maintain $O(k \log n)$ concave chains of total complexity
  $O(n)$ that cover $L_{\leq k}(R)$ under semi-online insertions and
  deletions in $O(\log^2 n)$ time.
\end{lemma}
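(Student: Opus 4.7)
The plan is to apply Dobkin and Suri's semi-online extension of the logarithmic method. I would partition the current set $R$ into $z + 1 = O(\log n)$ layers $R_0, R_1, \ldots, R_z$ with $|R_i| = O(2^i)$, arranged so that every line placed into $R_i$ is guaranteed by the semi-online promise to survive for at least $2^i$ further updates. For each layer I would independently store the concave chain decomposition of $L_{\leq k}(R_i)$, yielding $O(k)$ chains per layer and $O(k \log n)$ chains overall; summing $|R_i|$ across layers gives $O(n)$ total complexity, so the space bound is immediate.

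Before going further I would verify the covering property: if a line $\ell$ lies on $L_{\leq k}(R)$ at some $x$-coordinate, then at that $x$ it is one of the $k$ lowest among \emph{any} subset $R' \subseteq R$ containing $\ell$, and in particular among $R_i$. Hence every edge of the arrangement on $L_{\leq k}(R)$ appears in the chain decomposition of its home layer, so the union of the per-layer chains covers $L_{\leq k}(R)$ as required. For initialization with $|R|=n$, I would assign each line with scheduled deletion at the $d$-th future update to layer $\lceil \log d \rceil - 1$ (treating $d=1$ as $R_0$), and build each $L_{\leq k}(R_i)$ in $O(|R_i| \log |R_i|)$ time using the static algorithm, giving $O(n \log n)$ preprocessing.

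For a single update $u$, I would first perform the operation directly on $R_0$, then locate the largest index $i'$ such that a block of length $2^{i'}$ ends at $u$ (equivalently, the largest $i'$ with $2^{i'}$ dividing $u+1$), destroy layers $R_0, \ldots, R_{i'}$, and redistribute all $O(2^{i'})$ of their lines exactly as in initialization, placing any line whose deletion time exceeds $2^{i'}$ into $R_{i'}$ itself. Rebuilding layer $i$ costs $O(2^i \log 2^i) = O(2^i\, i)$ and happens once per $2^i$ updates, contributing amortized $O(i)$ per update; summing over layers gives $\sum_{i=0}^{z} O(i) = O(\log^2 n)$ amortized, and Smid's standard deamortization upgrades this to worst case. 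To prevent layer $R_z$ from being overwhelmed by an arbitrarily long update stream, I would trigger a global rebuild every $n$ updates, which is free in the amortized accounting.

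The main obstacle I expect is verifying condition~(1) after every rebuild: I must show that the redistribution rule based on $\lceil \log d \rceil - 1$ actually places each line into a layer whose next scheduled destruction is at least $2^i$ updates away, so that no line in $R_i$ is scheduled for deletion before the next full rebuild of that layer. This is the crux where the semi-online hypothesis is used, and where a careful off-by-one in indexing blocks versus rebuild times could silently break the invariant. The remaining routine checks---that insertions only touch $R_0$, that deletions happen only for lines that have been shepherded down to $R_0$ by successive rebuilds, and that the rebuilds do not disturb higher layers unnecessarily---should fall out directly from the block structure.
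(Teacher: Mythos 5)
Your proposal is correct and follows essentially the same route as the paper: the Dobkin--Suri logarithmic method with layers $R_i$ of size $O(2^i)$ satisfying the same two invariants, the identical covering argument, the same initialization and block-based rebuild rule (your ``$2^{i'}$ divides $u+1$'' criterion matches the paper's block-end condition), and the same $O(2^i\,i)$-per-rebuild amortization summing to $O(\log^2 n)$ with Smid's deamortization. The invariant-maintenance step you flag as the crux is handled in the paper exactly by the redistribution rule you describe (lines with deletion time $d > 2^{i'}$ go to layer $i'$, the rest are placed by $\lceil \log d\rceil - 1$ as at initialization), so no gap remains.
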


In fact, we can maintain a slightly altered version of the above
data structure within the same time and space bounds. Let $2^x$ be the
smallest power of two that is at least $k \log n$, i.e. $2^{x-1} < k
\log n \leq 2^x$. We store the $2^x$ lines that are the first to be
deleted in a separate list, the \emph{leftover} list. We do not build
a chain data structure on the leftover lines, but instead we let each
leftover line form a trivial chain, so we still have $O(k \log n)$ chains covering $L_{\leq k}(R)$. This way all sets $R_j$ with $j < x$ are empty. We can thus perform $2^x$ \emph{cheap} updates without having to modify any of the sets $R_j$: we can simply insert directly in (or delete directly from) the leftover list, without having to rebuild any data structure, in $O(1)$ time. Once every $2^x$ updates we perform an \emph{expensive} update, destroying all sets at layers $0 \dots i'$ as before, and redistributing all those lines and the leftover lines over layers $0$ through $i'$ again. Each set $R_i$ still has an amortized update time of $O(i)$, and thus the total amortized update time remains $O(\log^2 n)$.

\subsubsection{Maintaining intersections and chromatic ply data structures}
Next, we show how to maintain the chromatic ply data structures on the chains, which also gives us the $O(k^2)$ red-blue intersections in $L_{\leq k}(R) \cap L'_{\leq k}(B)$.

We maintain a concave chain decomposition of $L_{\leq k}(R)$ and a convex chain decomposition of $L'_{\leq k}(B)$ using Lemma \ref{lem:semiOnlineChains}, specifically the slightly altered version with $O(k\log n)$ leftover lines. We maintain the two simultaneously with a shared update counter $u$, such that we always perform the same `type' of update (cheap or expensive) on both structures. On each red chain $c_r$ we maintain a blue ply data structure, consisting of a list of startpoints and a list of endpoints of blue intervals induced on $c_r$ by blue chains. Similarly on each blue chain $c_b$ we maintain a red ply data structure. Additionally, we maintain a set $I$ of the $O(k^2 \log^2 n)$ red-blue intersections between the chains; all red-blue intersections in $L_{\leq k}(R) \cap L'_{\leq k}(B)$ are contained in $I$.

Consider the insertion of a line $r$. Depending on the type of update, we do the following:

\begin{description}
    \item[Cheap update:] line $r$ is added to the leftover list, and
      forms a trivial chain $c_r$. We compute all $O(k \log n)$
      intersections between $c_r$ and the blue chains, insert them in $I$, and build the blue ply data structure on $c_r$. For each intersected blue chain $c_b$ we insert the interval induced on $c_b$ by $c_r$ into the red ply data structure of $c_b$. Now $I$ and all chromatic ply data structures are up-to-date again. We handle a deletion of a line similarly. This update takes $O(k \log^2 n)$ time.
    
    \item[Expensive update:] some number of blue and red chains are
      destroyed and rebuilt. It would be difficult to update the
      chromatic ply data structures of all other chains efficiently. So
      instead we destroy all chromatic ply data structures and the set
      $I$, and rebuild them from scratch. Since we have $O(k \log n)$ red and blue chains, recomputing $I$ and the chromatic ply data structures takes $O(k^2 \log^3 n)$ time.
\end{description}

Every $2^x = O(k \log n)$ updates we have $2^x-1$ cheap updates,
taking $O(k \log^2 n)$ time each, and $1$ expensive update, taking
$O(k^2 \log^3 n)$ time. This takes
$O((k\log n) * (k \log^2 n) + 1 * (k^2 \log^3 n)) = O(k^2 \log^3 n)$
time for $2^x$ updates, making the amortized updates
time $O(k \log^2 n)$. We thus have:

\begin{lemma}
  \label{lem:semiOnlineIntersections}
  We can maintain a set $I$ of $O(k^2 \log^2 n)$ bichromatic
  intersection points containining all bichromatic intersections in
  $L_{\leq k}(R) \cap L'_{\leq k}(B)$ under semi-online updates
  in amortized $O(k \log^2 n)$ time. Our data structure uses
  $O(n + k^2 \log^2 n)$ space.
\end{lemma}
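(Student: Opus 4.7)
\medskip

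\noindent\textbf{Proof proposal.}
The plan is to layer a small amount of additional information on top of the chain decompositions already maintained by Lemma~\ref{lem:semiOnlineChains}. I would maintain, in parallel, a concave chain decomposition of $L_{\leq k}(R^*)$ and a convex chain decomposition of $L'_{\leq k}(B^*)$, both in the ``leftover'' variant with $\Theta(k\log n)$ chains on each side. The two decompositions share one global update counter so that every update (cheap or expensive) is of the same type on both sides. The main invariants to maintain are: (i) for each red chain $c_r$, a balanced BST storing the startpoints and endpoints of the intervals induced on $c_r$ by the current blue chains (and symmetrically for every blue chain), and (ii) an explicit list $I$ of all pairwise red--blue chain intersections. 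Since a concave chain meets a convex chain at most twice, $|I|=O(k^2\log^2 n)$, and $I$ is automatically a superset of the red--blue intersections in $L_{\leq k}(R^*)\cap L'_{\leq k}(B^*)$.

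For a cheap update, only the leftover list changes: one line is inserted as (or removed from) a trivial one-line chain $c_r$. Because $c_r$ is a single line and each opposite-color chain is convex, its intersections with every one of the $O(k\log n)$ blue chains can be located by binary search in $O(\log n)$ time per chain, so we spend $O(k\log^2 n)$ time to compute and insert/delete the corresponding entries of $I$. Each such intersection pair defines a new interval endpoint to be inserted or deleted in the chromatic ply BST of the affected blue chain, again $O(\log n)$ each, so the cheap update fits in $O(k\log^2 n)$ amortized time. An expensive update is handled by the brute but correct route: discard all chromatic ply structures and $I$, let Lemma~\ref{lem:semiOnlineChains} finish rebuilding the chain decompositions, and then recompute $I$ and every chromatic ply structure from scratch by intersecting every pair of (now $O(k\log n)$) opposite-color chains. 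This costs $O(k^2\log^2 n)$ intersection queries at $O(\log n)$ each, i.e. $O(k^2\log^3 n)$ time.

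For the amortized bound I would charge the $O(k^2\log^3 n)$ cost of one expensive update against the $2^x=\Theta(k\log n)$ updates since the previous one, giving $O(k\log^2 n)$ amortized time per update overall. The space bound follows directly: the chain decompositions take $O(n)$ by Lemma~\ref{lem:semiOnlineChains}, the BSTs attached to the chains together store $O(k^2\log^2 n)$ interval endpoints (two per element of $I$), and $I$ itself has size $O(k^2\log^2 n)$. The main technical obstacle I anticipate is precisely the one sidestepped by the cheap/expensive split: when a higher layer is rebuilt, many chains simultaneously change, and incrementally propagating those changes into every chromatic ply BST would be prohibitively expensive; rebuilding $I$ and all ply structures wholesale during expensive updates, but only $\Theta(1/(k\log n))$ of the time, is what makes the amortization go through while preserving the space bound.
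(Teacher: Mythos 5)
Your proposal is correct and follows essentially the same route as the paper's proof: the same leftover variant of the chain decomposition with a shared update counter, the same per-chain chromatic ply BSTs and explicit intersection set $I$, the same cheap/expensive split, and the same amortization argument charging the $O(k^2\log^3 n)$ rebuild against the $\Theta(k\log n)$ preceding cheap updates. The only point the paper makes slightly more explicit is that a cheap insertion must also \emph{build} the chromatic ply BST on the newly created trivial chain itself (not just update the opposite-color chains' BSTs), but this fits in the same $O(k\log^2 n)$ budget and does not affect the argument.
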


\subsubsection{Maintaining the leftmost valid point}

The last step is to maintain the leftmost valid point $s$ for a fixed value $k$. We know $s$ is contained in the set $I$ maintained by Lemma \ref{lem:semiOnlineIntersections}, but simply iterating through the entire set each update would take too long. We wish to build a data structure on the points in $I$ that maintains $\Mis(p)$ for each $p \in I$, and can handle the following operations:

\begin{description}
    \item[Insertion/Deletion:] Inserting or deleting a point (a red-blue intersection).
    \item[Halfplane update:] Correctly update $\Mis(p)$ for each $p \in I$ after the insertion or deletion of a constraint, e.g. increment $\Mis(p)$ by one for all points $p$ in the halfplane above an inserted line $r$ (or in the halfplane below an inserted line $b$).
    \item[Query:] Given a query value $k' \leq k$, return the leftmost point
      $p \in I$ with $\Mis(p) \leq k'$.
\end{description}

We can achieve the above using partition
trees~\cite{chanPartitionTrees}; we first recall their definition. Let
$P$ be a set of $m$ points on which we wish to create a partition
tree, and let $q$ be some large constant (in the literature this
constant is generally called $r$). At the root node, the set of points
$P$ is partitioned into $q$ subsets $P_1 \dots P_q$ using a simplicial
partition (a partition of space into triangles), with the property
that any line intersects $O(\sqrt{q})$ of the triangles. For each
subset $P_i$ the subtree is built recursively, and once a subset
contains a constant number of points we create a leaf. Given a
halfplane $h$, the tree can return a set of $O(\sqrt{m})$ nodes
corresponding to all points of $P$ contained in $h$. The tree uses
$O(m)$ space, can be built in expected $O(m \log m)$ time, and answers queries
in expected $O(\sqrt{m})$ time.

We first consider the somewhat `static' version of this problem where the set $I$ is fixed, and we only have to handle halfplane updates and queries. We store $I$ in a partition tree, and separately in a balanced binary search tree sorted on $x$-coordinate. Each node $u$ of the partition tree contains some set $I(u) \subseteq I$ of points. Let $k_{\min}(u) = \min_{p \in I(u)}\Mis(p)$ be the smallest number of violations achieved by a point in $u$. We would like to maintain $k_{\min}(u)$ in order to answer queries, but doing so explicitly is expensive since a halfplane operation could update this value in all nodes. Therefore we instead maintain two values in each node $u$: a \emph{buffer} value $b(u)$, and a \emph{partial} value $k'_{\min}(u)$, under the invariant that $k_{\min}(u) = k'_{\min}(u) + \sum_{a \in A(u)} b(a)$ where $A(u)$ is the set of ancestors of $u$ (including $u$ itself). We additionally maintain the leftmost point $p_{\min}(u) \in I(u)$ with $\Mis(p_{\min}(u)) = k_{\min}(p_{\min}(u))$.

When we initially build the partition tree on $I$ we explicitly compute the number of violations $\Mis(p)$ for all $p \in I$ (using the chromatic ply data structures on the chains). We then go through the tree in a bottom-up fashion, and for each node $u$ compute $p_{\min}(u)$, set $b(u) = 0$ and set $k'_{\min}(u) = \Mis(p_{\min})$. Clearly, this satisfies the invariant.

Whenever we visit a node $u$, during any operation, we will \emph{propagate the buffer}: we update $k'_{\min}(u) \pluseq b(u)$, then for each child $c$ of $u$ we do $b(c) \pluseq b(u)$, and afterwards we set $b(u) = 0$. Observe that after this operation the invariant $k_{\min}(u) = b(u) + k'_{\min}$ still holds for $u$ and all its children. Specifically for node $u$ itself we know that $k'_{\min}(u) = k_{\min}(u)$, since $b(a) = 0$ for all ancestors $a \in A(u)$.

Consider a halfplane update operation, the insertion of a red line $r$ (blue
lines and deletions are handled analogously). See Figure
\ref{fig:partitionTreeLP}. We start at the root node $u$, and first
propagate its buffer. Then we find all children of $u$ whose triangles
are fully above $r$. For each such child $c$, all points in $I(c)$
violate $r$, and thus $k_{\min}(c)$ increases by one, so we
increment the buffer $b(c)$ by one to satisfy the invariant. For all $O(\sqrt{q})$ children
whose triangles are intersected by $r$, we can afford to
recurse. After all intersected children are recursively updated, we
look at all children $C(u)$ and set
$k'_{\min}(u) = \min_{c \in C(u)} k'_{\min}(c) + b(c)$.

\begin{figure}
    \centering
    \includegraphics{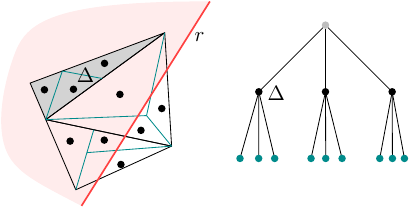}
    \caption{A partition tree of depth two on a set of points $I$ with
      three triangles at the first level. When inserting a line $r$,
      we increment the buffer of $\Delta$, and recurse on the
      intersected triangles.}
    \label{fig:partitionTreeLP}
\end{figure}

Consider now a query, where we wish to find the leftmost point $p$ that is valid for a given value $k'$. For a value $x$, let $w(x)$ be the minimum number of constraints violated by any point $p \in I$ with $p_x \leq x$, and observe that $w(x)$ is monotonically decreasing in $x$. Hence, we can binary search on $x$ to find the smallest $x$-coordinate for which $w(x) \leq k'$. We thus have to solve the decision problem: given a value $k'$ and a value $x$, does there exist a point $p$ with $p_x \leq x$ such that $\Mis(p) \leq k'$? If such a point exists we look for a smaller $x$, otherwise we look for a larger $x$. We can answer this decision problem by querying the partition tree with a vertical halfplane at $x$. Starting at the root $u$, we propagate the buffer. Then for each fully contained child $c$ we compute $k_{\min}(c) = k'_{\min}(c) + b(c)$, and if $k_{\min}(c) \leq k'$ we are done and we return true. Otherwise, we recurse in the children intersected by the vertical query halfplane. 

Lastly we consider the insertion or deletion of points, which we both
handle in standard ways. We handle insertions using the logarithmic
method, creating $O(\log n)$ partition trees. We perform all halfplane
operations and queries on all trees, increasing their runtime by a
factor $\log n$. We perform the deletions implicitly: we can mimic the
deletion of a point $p$ by setting $\Mis(p) = \infty$, such that the
point $p$ can never be returned as a valid point anymore. We walk down
the partition tree towards the leaf containing $p$, propagating the
buffer at every step. Then in bottom-up fashion we update
$k'_{\min}(u)$ and $p_{\min}(u)$ for every node $u$ on the path,
starting in the leaf node, as follows. If $p_{\min}(u) \neq p$ we do
not have to do anything, since removing $p$ does not make any
difference for $k'_{\min}(u)$ or $p_{\min}(u)$. If $p_{\min}(u) = p$
we have to find the new minimum: we look at all children $C$ and find
the child $c' = \argmin_{c \in C} k'_{\min}(c) + b(c)$ containing the
best point, and set $k'_{\min}(u) = k'_{\min}(c')$ and
$p_{\min}(u) = p_{\min}(c')$. Observe that the invariant still
holds. Every $n/2$ deletions we rebuild all partition trees from
scratch to ensure that the trees do not contain too many deleted
points.

\begin{lemma}
\label{lem:semiOnlineIntersectionPartitionTree}
We can build a data structure on a set of $O(k^2 \log^2 n)$ points $I$
that can perform insertions and deletions in $O(\log k \log n)$ time,
halfplane updates in expected $O(k \log^2 n)$ time, and queries in
expected $O(k \log^3 n)$ time. The data structure uses $O(k^2 \log^2 n)$ space.
\end{lemma}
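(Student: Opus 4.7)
The plan is to prove Lemma~\ref{lem:semiOnlineIntersectionPartitionTree} by packaging the construction sketched in the preceding paragraphs into a formal argument, organized around three independent concerns: (i) the partition tree with buffered minima for halfplane updates, (ii) binary search on the $x$-coordinate for queries, and (iii) the logarithmic method plus implicit deletions for point insertions/deletions.

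First I would build a partition tree of constant branching factor $q$ on the $m = O(k^2 \log^2 n)$ points of $I$, augmenting each node $u$ with a buffer $b(u)$, a partial value $k'_{\min}(u)$, and the leftmost minimizer $p_{\min}(u) \in I(u)$, maintained under the invariant $k_{\min}(u) = k'_{\min}(u) + \sum_{a \in A(u)} b(a)$. To process a halfplane update, I would recurse from the root, always first propagating the buffer at the visited node (pushing $b(u)$ into $k'_{\min}(u)$ and into each child's buffer, then zeroing $b(u)$), incrementing the buffer of every child whose triangle lies fully in the half-to-be-violated, and recursing into the $O(\sqrt{q}) = O(1)$ children crossed by the bounding line. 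On the way back up, I refresh $k'_{\min}(u)$ and $p_{\min}(u)$ from the children. A single update then touches $O(\sqrt{m}) = O(k\log n)$ nodes. Correctness reduces to checking that the ``propagate on first visit'' rule preserves the invariant at every point where $k'_{\min}$ or $p_{\min}$ is read or written, which is the only delicate bookkeeping step.

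Next I would implement a query for $k' \leq k$ by binary searching the $x$-coordinate of the answer: the decision procedure at a candidate $x$ queries the tree with the vertical halfplane $\{p : p_x \leq x\}$, propagating buffers along the way and comparing $k'_{\min}(c) + b(c)$ to $k'$ at fully-contained children (returning true as soon as any such child suffices, and recursing only into the $O(1)$ children crossed by the vertical line). The decision is expected $O(\sqrt{m}) = O(k \log n)$ and the binary search contributes an $O(\log n)$ factor. For point insertions I would apply the standard logarithmic method, maintaining $O(\log n)$ partition trees and running every halfplane update and query on all of them; for deletions I would act implicitly by descending to the leaf of $p$, setting $\Mis(p) = \infty$, and rebuilding $k'_{\min}$ and $p_{\min}$ bottom-up along the path, with a full rebuild every $n/2$ deletions to bound the number of ghost points.

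Collecting the bounds: a halfplane update costs expected $O(k\log n)$ per tree and there are $O(\log n)$ trees, giving $O(k\log^2 n)$. A query costs $O(\log n)$ (binary search) times $O(k\log n)$ (decision) per tree, i.e., expected $O(k\log^3 n)$ overall. A point insertion or deletion walks a single root-to-leaf path of depth $O(\log m) = O(\log k)$ in each of the $O(\log n)$ trees, for $O(\log k \log n)$. Each partition tree uses $O(m)$ space and the total over all levels of the logarithmic method telescopes to $O(k^2 \log^2 n)$. The step I expect to be the main obstacle is verifying that the lazy buffering is consistent across arbitrary interleavings of halfplane updates, point insertions/deletions, and queries; once that invariant-preservation argument is settled, all running times follow from the standard partition-tree and logarithmic-method calculus.
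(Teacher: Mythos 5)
Your proposal matches the paper's own proof: the paper's argument is exactly this bookkeeping — linear-space partition trees with buffered $k'_{\min}$/$p_{\min}$ values, $O(\sqrt{m})=O(k\log n)$ per-tree halfplane updates via the standard recurrence, binary search on $x$ for queries, the logarithmic method for insertions, and implicit deletions along a root-to-leaf path — followed by the same collection of bounds. No substantive difference.
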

\begin{proof}
Clearly the data structure uses $O(k^2 \log^2 n)$ space, since $|I| = O(k^2 \log^2 n)$ and a partition tree uses linear space.

An insertion takes $O(\log k \log n)$ time by the logarithmic method, and a deletion takes $O(\log k)$ time since we only traverse a single path to a leaf. A halfplane update takes $O(\sqrt{k^2 \log^2 n}) = O(k \log n)$ time per tree, and thus $O(k \log^2 n)$ time in total: we spend constant time in each node and recursively visit $O(\sqrt{q})$ children, and it is well known that $Q(m) = O(\sqrt{q}) Q(m/q) + O(1) = O(\sqrt{m})$ for constant fan-out $q$. Similarly for a query each decision problem takes $O(k \log n)$ time to solve, resulting in $O(k \log^2 n)$ time per tree to find the leftmost point $p$ using binary search, resulting in $O(k \log^3 n)$ total time.
\end{proof}

We can now dynamically maintain the solution to an LP with at most $k$ violations. We maintain the chain decompositions of $L_{\leq k}(R)$ and $L'_{\leq k}(B)$ using Lemma \ref{lem:semiOnlineChains}, and the set of their intersections $I$ using Lemma \ref{lem:semiOnlineIntersections}. We additionally build the data structure from Lemma \ref{lem:semiOnlineIntersectionPartitionTree} on the set $I$, and update it as follows.

Each cheap update, e.g. the insertion of a line $r$, we find the $O(k \log n)$ intersections between $r$ and blue chains, and insert them in $O(k \log^2 n \log k)$ total time. We then perform one halfplane update to update the $k'_{\min}(u)$ and $b(u)$ values, in $O(k \log^2 n)$ time.

Each expensive update we discard the data structures from Lemma \ref{lem:semiOnlineIntersections} and \ref{lem:semiOnlineIntersectionPartitionTree} and rebuild them from scratch. This takes $O(k^2 \log^3 n)$ time, and thus $O(k \log^2 n)$ amortized time.

After every update we perform one query with $k' = k$ in $O(k \log^3 n)$ time. Thus:

\linearProgramming*

\subsection{Dynamically maintaining \texorpdfstring{$k_{\min}$}{kmin}}

With the data structure from Theorem \ref{thm:linear_programming} we
can thus maintain the leftmost valid point for a fixed value
$k$. However, Chan's static algorithm could also find the smallest
value $k_{\min}$ for which there exists a valid point, and a leftmost
point $p_{\min}$ with $\Mis(p_{\min}) = k_{\min}$. The partition tree
from Lemma \ref{lem:semiOnlineIntersectionPartitionTree} can handle
such queries as long as $k_{\min} \leq k$; in fact, this value and the
point $p_{\min}$ are stored in the root. However when insertions cause
$k_{\min}$ to increase to $k_{\min} > k$, then $p_{\min}$ no longer
has to lie in $L_{\leq k}(R) \cup L'_{\leq k}(B)$, and thus $p_{\min}$
is not necessarily a member of the set of intersection $I$ maintained
by Lemma \ref{lem:semiOnlineIntersections}.

We can solve this issue by slightly adapting Lemma \ref{lem:semiOnlineChains}. When the set $R_i$ is created, we used to build the concave chain decomposition of $L_{\leq k}(R_i)$ in $O(|R_i| \log |R_i|)$ time (note this time is independent from $k$). Instead, we build multiple chain decompositions: for $l \in 1, 2 \dots \lceil \log n \rceil$ we build the chain decomposition of $L_{\leq 2^l}(R_i)$, in total time $O(|R_i| \log^2 |R_i|)$. Only one of these chain decompositions can be \emph{active} at a time, the others are \emph{inactive}; below we choose the active decomposition such that the union of the active chains form a concave decomposition of $L_{\leq k_{\min}}(R)$.

At initialization, we compute the value $k_{\min}$ using Chan's static algorithm. We decide that the next $2^x - 1$ updates, where $2^{x-1} < k_{\min} \leq 2^x$, will be cheap updates where we do not change any sets $R_i$, and in $2^x$ updates we will perform an expensive update. Since $k_{\min}$ can only increase by at most one each update, this ensures that $k_{\min} \leq 2^{x+1}$ until the next expensive update. We build the data structures from Theorem \ref{thm:linear_programming} for $k = 2^{x+1}$, with the above adaptation that for each set $R_i$ we build multiple chain decompositions. For each set $R_i$ we activate the chain decomposition of $L_{\leq 2^{x+1}}(R_i)$, and deactivate the other chain decompositions, ensuring that during the next $2^x - 1$ cheap updates the optimum $p_{\min}$ will lie inside the active chain decomposition.

For the next $2^x - 1$ cheap updates, we update the data structures as normal. Update $2^x$ is an expensive update again. We query the partition trees for the current optimal value $k_{\min}$, and then destroy the chromatic ply data structures and partition trees. As during initialization, we decide that the next $2^x - 1$ updates, where $2^{x-1} < k_{\min} \leq 2^x$, will be cheap updates. We update the sets $R_i$ as normal, activate the chain decomposition of $L_{\leq 2^{x+1}}(R_i)$ for each set, and rebuild the chromatic ply data structures and partition trees from scratch for $k = 2^{x+1}$. After each update we can simply query the partition tree for $k_{\min}$ and $p_{\min}$.

Both the cheap and expensive updates still have the same asymptotic runtime (with $k$ replaced with $O(k_{\min})$), but an expensive update now occurs every $O(k_{\min})$ updates rather then every $k \log n$ updates. This results in $O(k_{\min} \log^3 n)$ expected amortized update time.

\begin{lemma}
  \label{lem:maintain_kmin}
Let $H$ be a set of $n$ halfspaces in $\R^2$, and let $f$ be a linear objective function. There is an $O(n \log n + k_{\min}^2 \log^2 n)$ space data structure that maintains a point $p$ that violates as few constraints $k_{\min}$ as possible from $H$ and minimizes $f$, and supports semi-online updates in expected amortized $O(k_{\min} \log^3 n)$ time.
\end{lemma}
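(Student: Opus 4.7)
The plan is to run the data structure of Theorem~\ref{thm:linear_programming} with a carefully chosen overestimate of $k_{\min}$ and refresh that overestimate only periodically. Concretely, I would work in \emph{epochs}: at the start of each epoch I read off the current $k_{\min}$ from the root of the partition tree of Lemma~\ref{lem:semiOnlineIntersectionPartitionTree}, pick the integer $x$ with $2^{x-1} < k_{\min} \le 2^x$, and then run the machinery of Theorem~\ref{thm:linear_programming} with fixed parameter $k = 2^{x+1}$ for the next $2^x$ updates. Since a single insertion or deletion changes $k_{\min}$ by at most one, throughout the entire epoch we have $k_{\min} \le 2^x + 2^x = 2^{x+1}$, which guarantees that a leftmost minimizer $p_{\min}$ still lies in the currently maintained region $L_{\le 2^{x+1}}(R) \cap L'_{\le 2^{x+1}}(B)$ and is therefore represented in the partition tree (whose root exposes $k_{\min}$ and a witness $p_{\min}$ after every update).

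To switch the parameter cheaply at epoch boundaries, I would modify Lemma~\ref{lem:semiOnlineChains} so that whenever a layer set $R_i$ is (re)built we precompute concave chain decompositions of $L_{\le 2^l}(R_i)$ for every $l = 0, 1, \dots, \lceil \log n \rceil$ at once (and analogously on the blue side). Because the decomposition of $L_{\le 2^l}(R_i)$ takes $O(|R_i|\, l)$ time, precomputing all of them fits within $O(|R_i| \log^2 |R_i|)$ build time and $O(|R_i| \log n)$ space, inflating the total space of the chain structure to $O(n \log n)$ and its amortized maintenance time by a $\log n$ factor only. At any moment only the decompositions for the current epoch's index $l = x+1$ are \emph{active}, and these feed into Lemma~\ref{lem:semiOnlineIntersections} and the partition trees exactly as in Theorem~\ref{thm:linear_programming}.

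The epoch transition itself proceeds as follows: query the partition tree for the current $k_{\min}$, compute the new $x$, activate the corresponding chain decompositions in every layer, and then rebuild the intersection set of Lemma~\ref{lem:semiOnlineIntersections} and the partition trees of Lemma~\ref{lem:semiOnlineIntersectionPartitionTree} from scratch for $k = 2^{x+1}$. This rebuild costs $O(k_{\min}^2 \log^3 n)$, but it happens only once per $2^x = \Theta(k_{\min})$ updates, so its amortized contribution is $O(k_{\min} \log^3 n)$. Between transitions, the cheap and expensive updates of Theorem~\ref{thm:linear_programming} with $k = \Theta(k_{\min})$ run unchanged, and their expected amortized cost is $O(k_{\min} \log^3 n)$ as well. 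Summing these and combining with the $O(n \log n)$ space of the enlarged chain structure and the $O(k_{\min}^2 \log^2 n)$ space of the intersection set and partition trees gives the claimed bounds.

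The main obstacle I expect is certifying the ``$k_{\min}$ changes by at most one per update'' invariant: for insertions this is immediate since any single new constraint can be violated at most once at any point, and for deletions $k_{\min}$ only decreases, which is harmless for the epoch-length bound though it may leave the active level $l = x+1$ needlessly large for most of the epoch. A secondary subtlety is that when we rebuild the partition tree at an epoch boundary we must recompute $\Mis(p)$ for every retained intersection before re-inserting it; this is subsumed by the $O(k_{\min}^2 \log^3 n)$ rebuild budget via the chromatic ply structures of Lemma~\ref{lem:semiOnlineIntersections}.
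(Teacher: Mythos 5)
Your proposal is correct and matches the paper's proof essentially point for point: the epoch structure of length $2^x$ tied to $2^{x-1} < k_{\min} \le 2^x$, running Theorem~\ref{thm:linear_programming} with $k = 2^{x+1}$ so that the invariant ``$k_{\min}$ changes by at most one per update'' keeps the optimum inside the maintained region, the modification of Lemma~\ref{lem:semiOnlineChains} to precompute chain decompositions $L_{\le 2^l}(R_i)$ for all $l$ at $O(|R_i|\log^2|R_i|)$ build cost and $O(n\log n)$ total space with exactly one active per layer, and the $O(k_{\min}^2\log^3 n)$ rebuild at epoch boundaries amortized over $\Theta(k_{\min})$ updates. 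No substantive differences from the paper's argument.
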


\section{Exact algorithms for \texorpdfstring{$k$}{k}-mis MinMax}
\label{sec:An_exact_algorithm_in_2d}

For the $k$-mis MinMax problem we are given point sets $R$ and $B$ and
an integer $k$ and wish to compute a separator
$\sopt = \argmin_{s \in S_k(R \cup B)} \Max(s)$ that misclassifies at
most $k$ points and minimizes the distance to the furthest
misclassified point. In this section we present an exact algorithm for
this problem. In Section~\ref{sub:exact_properties} we first discuss
some geometric properties of an optimal solution. In
Section~\ref{sub:computing_valid_regions} we then present algorithms
to construct the valid regions $S_k(R \cup B)$. Finally, in
Section~\ref{sub:computing_opt}, we show how we can then compute an
optimal separator.

\subsection{Geometric properties}
\label{sub:exact_properties}

\subparagraph{Properties of an optimal MinMax solution.} We first
consider the (easier) MinMax problem. Here, we wish to compute
$s_{\max} = \argmin_s \Max(s)$, a separator with minimal distance to
the farthest misclassified point.

\begin{figure}[b]
\centering
\includegraphics{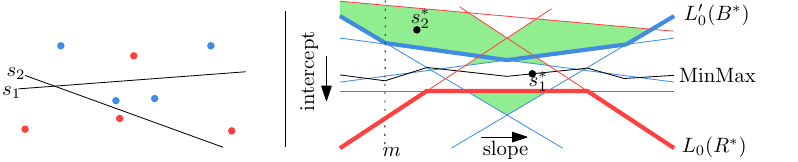}
\caption{Some primal points (left) with their dual (right). Valid
  cells for $k = 2$ are green.}
 \label{fig:baseDual}
\end{figure}

First we consider the problem for lines with a fixed slope $m$; note
this is a vertical line $x=m$ in the dual plane. For slope $m$ there
is some optimal intercept $c$ such that the resulting separating line
$s_{\max}(m): mx + c$ minimizes the error $\Max(s_{\max}(m))$. The
error $\Max(s_{\max}(m))$ is defined by a red point $r$ and a blue
point $b$.  These are misclassified points with the largest distance
to $s_{\max}(m)$, the extremal points, as shown in
\cref{fig:givenSlope}. Separator $s_{\max}(m)$ minimizes distance to
these points, so it is in their middle with equal distance to both. In
the dual plane, clearly $r^*$ must lie on $L_0(R^*)$ and $b^*$ must
lie on $L'_0(B^*)$.

\begin{figure}[tb]
    \centering
    \includegraphics{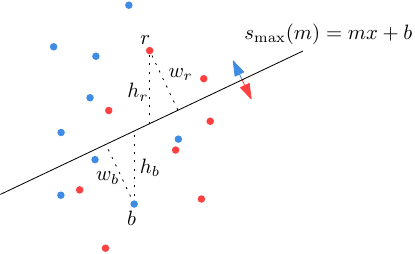}
    \caption{The optimal separating line for a fixed slope $m$, with
      extremal points $r$ and $b$.
    }
    \label{fig:givenSlope}
\end{figure}
\begin{restatable}{lemma}{MinMax}
\label{lem:MinMax}
In the dual plane, for a fixed slope $m$, the optimal MinMax separator $s_{\max}^*(m)$ is vertically in the middle of $L_0(R^*)$ and $L'_0(B^*)$ at $x = m$.
\end{restatable}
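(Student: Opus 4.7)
The plan is to translate the geometric quantities to the dual plane and then optimize the vertical position of $s^*$ along the vertical line $x = m$.

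First I would set up the dual correspondence explicitly. The separator $s : y = mx + c$ is dual to the point $s^* = (m, -c)$, so for fixed slope $m$ the candidate separators correspond exactly to points on the vertical line $x = m$ in the dual plane, and choosing the intercept $c$ is the same as choosing the $y$-coordinate of $s^*$. A red point $r$ is misclassified iff $r_y > m r_x + c$, equivalently iff $-c > m r_x - r_y$, i.e.\ iff $s^*$ lies strictly above the dual line $r^* : y = r_x x - r_y$ at $x = m$. Symmetrically, a blue point $b$ is misclassified iff $s^*$ lies strictly below $b^*$ at $x = m$. Thus the misclassified red points correspond exactly to red dual lines lying below $s^*$ at $x = m$, and misclassified blue points to blue dual lines lying above $s^*$ at $x = m$.

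Next I would relate primal distances to vertical dual distances. The Euclidean distance from $p$ to $s$ is $|p_y - m p_x - c|/\sqrt{1+m^2}$, while the vertical distance in the dual between $s^*$ and $p^*$ at $x=m$ is exactly $|{-c} - (m p_x - p_y)| = |p_y - m p_x - c|$. Hence, for a fixed slope $m$, minimizing $\Max(s)$ over all intercepts $c$ is equivalent to minimizing, over all choices of $s^*_y$, the maximum vertical distance from $s^*$ to any misclassified dual line at $x = m$.

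Now I would identify the two extremal quantities that govern this optimization. Among all red dual lines lying below $s^*$ at $x = m$, the one farthest below is the lowest red line at $x = m$, whose ordinate is precisely $L_0(R^*)$ evaluated at $x = m$; likewise the farthest misclassified blue lies on $L'_0(B^*)$ at $x = m$. Write $y_R = L_0(R^*)(m)$ and $y_B = L'_0(B^*)(m)$. The worst red misclassification distance is $\max(0, s^*_y - y_R)$ and the worst blue misclassification distance is $\max(0, y_B - s^*_y)$, so the objective becomes $\max\bigl(s^*_y - y_R,\; y_B - s^*_y\bigr)$ (taking the positive parts only). A standard one-variable argument then shows that this max is minimized precisely at the midpoint $s^*_y = (y_R + y_B)/2$, which proves the claim.

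The only subtle point — and the main thing worth checking in the write-up — is the case $y_R \geq y_B$, when $R$ and $B$ are linearly separable at slope $m$. Here the midpoint $(y_R+y_B)/2$ lies strictly between $y_B$ and $y_R$, so no point is misclassified and $\Max = 0$, trivially optimal; the formula therefore still selects a valid optimum. Handling this boundary case (and the mild ambiguity when multiple $c$'s are optimal) is the only non-automatic step; otherwise the proof is a direct duality translation followed by a one-dimensional minimax computation.
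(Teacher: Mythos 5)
Your proof is correct and takes essentially the same approach as the paper: both use the fact that, at a fixed slope, Euclidean distance is a constant multiple (your $\sqrt{1+m^2}$, the paper's $\rho(m)$) of vertical distance, which is preserved by duality, and then reduce to a one-dimensional minimax over the intercept. You spell out the resulting objective $\max(s^*_y - y_R,\ y_B - s^*_y)$ and solve it explicitly, whereas the paper simply asserts that optimality forces equal Euclidean distances to the two extremal points; your version is slightly more self-contained and also handles the degenerate separable case, but the core argument is the same.
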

\begin{proof}
Let $r^*$ be an extremal red line in $L_0(R^*)$ for slope $m$, and $b^*$ an extremal blue line in $L'_0(B^*)$ for slope $m$. We wish to show that point $s_{\max}^*(m)$ is vertically in the middle of lines $r^*$ and $b^*$.
Since vertical distance is preserved when dualizing, it is equivalent to show in the primal that line $s_{\max}(m)$ is vertically in the middle of points $r$ and $b$. See Figure \ref{fig:givenSlope}. 

Let $\ell$ be a line with slope $m$ and $p$ be a point, and let $w_p$ be the Euclidean distance between $\ell$ and $p$ (the width of the strip), and let $h_p$ be the vertical distance (the height of the strip). Let $\rho(m) = \frac{h_p}{w_p}$ be the ratio between vertical and Euclidean distance, and note that $\rho(m)$ only depends on $m$, not on the specific choice of $\ell$ and $p$.

Since $s_{\max}(m)$ is optimal it must have equal (Euclidean) distance to
both extremal points, so $w_r = w_b$. Because of the fixed ratio
$\rho(m) = \frac{w_r}{h_r} = \frac{w_b}{h_b}$, this means
$h_r = h_b$. That means $s_{\max}(m)$ is vertically in the middle of $b$
and $r$, and thus $s_{\max}^*(m)$ vertically in the middle of $L_0(R^*)$ and $L'_0(B^*)$.
\end{proof}

We now generalize the above solution to any slope. Consider the curve
$\{s^*_{\max}(m) \mid m \in \R \}$ in the dual plane representing the
optimal separator for the MinMax problem for any given slope, shown in
black in \cref{fig:baseDual}. In the rest of this article we will
refer to this as the \emph{MinMax curve}, or simply MinMax. By Lemma
\ref{lem:MinMax} the MinMax curve is in the middle of the red and blue
envelopes $L_0(R^*)$ and $L'_0(B^*)$ for every slope $m$. Since
$L_0(R^*)$ and $L'_0(B^*)$ are both polygonal chains with $O(n)$
vertices, MinMax is also a polygonal chain with $O(n)$ vertices.

\begin{restatable}{lemma}{minMaxTowardsVertexBetter}
\label{lem:minMaxTowardsVertexBetter}
Let $s$ be a point in the interior of an edge $e$ of the MinMax curve. Either moving $s$ left or moving $s$ right along $e$ will decrease the error $\Max(s)$.
\end{restatable}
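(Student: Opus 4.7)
On the interior of the edge $e$ the lower envelope $L_0(R^*)$ coincides with a single dual line $r^*$ and the upper envelope $L'_0(B^*)$ coincides with a single dual line $b^*$, so one fixed pair of primal points $(r,b)$ realizes the extremal misclassified points for every $s(m) \in e$. My strategy is to parametrize $e$ by the slope $m$ and analyze $\Max(s(m)) = \dist(r, s(m))$ as an explicit one-parameter function, then show it admits no local minimum in the interior.

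The first step is a geometric identity: by Lemma~\ref{lem:MinMax}, $s^*(m)$ is the vertical midpoint of $r^*$ and $b^*$ at $x=m$, and a direct calculation using the duality $(p_x,p_y) \mapsto y = p_x x - p_y$ shows that this is equivalent to the primal line $s(m)$ passing through the fixed midpoint $M = (r+b)/2$ for every $m$ in the range of $e$. This identity is the only non-routine part of the argument.

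Having pinned $s(m)$ to pivot about the fixed point $M$, I rewrite the error as $\Max(s(m)) = \|r-M\| \, |\sin \psi(m)|$, where $\psi(m)$ is the oriented angle between $s(m)$ and the segment $\overline{rb}$. Since $\psi(m) = \arctan m - \phi$ (with $\phi$ the direction of $\overline{rb}$) is a strictly monotone function of $m$ whose range has length at most $\pi$, the composite $|\sin \psi(m)|$ admits at most one critical point on any subinterval on which it does not vanish, and that critical point is a \emph{maximum} (attained when $s(m) \perp \overline{rb}$).

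Finally, I must verify that $|\sin \psi(m)|$ does not vanish in the interior of $e$. A zero of $\sin \psi$ corresponds to $s(m)$ being the line through $r$ and $b$ itself; at this slope, $r$ and $b$ lie on $s$ and are therefore classified correctly, and $r^*$ and $b^*$ cross in the dual. On one side of this slope $\Max > 0$ as above, while on the other side $R$ and $B$ become separable by a line of this slope, so this point marks a boundary of the MinMax curve rather than an interior point of $e$. Combining these observations, on the interior of $e$ the function $\Max(s(m))$ is either strictly monotonic or unimodal with a single interior maximum; in both cases, from any interior point at least one of the two directions along $e$ strictly decreases $\Max$. I expect the main care needed to be in the duality computation identifying $M$ as the fixed pivot of $s(m)$, plus the small boundary argument that rules out the zero of $|\sin \psi|$ in the interior of $e$.
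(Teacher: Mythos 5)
Your core argument---pivot the primal line about the fixed midpoint $M = (r+b)/2$, parametrize by slope, and write $\Max(s(m)) = \|r-M\|\,|\sin\psi(m)|$ with $\psi$ strictly monotone and of range length less than $\pi$---is the same idea as the paper's proof. The paper phrases the rotation step via circles of radius $\Max(s)$ tangent to $s$ and centered at $r$ and $b$ rather than the explicit $\sin$ formula, but the content is identical, and your duality calculation showing that every $s(m)$ on the edge passes through $M$ is correct.

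The gap is in the final paragraph. A zero of $\sin\psi$ need \emph{not} mark a vertex of the MinMax curve or an endpoint of $e$: the MinMax curve is $y(m) = \tfrac12\bigl(L_0(R^*)(m) + L'_0(B^*)(m)\bigr)$, so its vertices occur only above vertices of $L_0(R^*)$ or below vertices of $L'_0(B^*)$, i.e.\ at intersections of two \emph{same-colored} lines. A zero of $\sin\psi$ corresponds instead to the \emph{bichromatic} intersection $r^* \cap b^*$, which is not such a vertex, so it can lie strictly inside $e$; this happens exactly when $R$ and $B$ are linearly separable at that slope, and at that point $\Max = 0$ and moving in either direction increases the error. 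The fix is simpler than what you attempted: either observe that the lemma---like the paper's own proof, which begins ``since $r$ and $b$ are extremal points''---presupposes $\Max(s) > 0$, so $\sin\psi \neq 0$ at the slope of $s$; combined with the fact that $\sin\psi$ has at most one zero on a $\psi$-interval of length less than $\pi$, this makes $|\sin\psi|$ piecewise unimodal with its only local minimum at that single zero, so at $s$ at least one direction strictly decreases. Alternatively, note that in the non-separable setting $L_0(R^*) < L'_0(B^*)$ holds for all slopes, so $r^*$ and $b^*$ cannot cross anywhere in the $x$-range of $e$, and $\sin\psi$ has no zero at all.
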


\begin{proof}
Since $s$ lies on the interior of a MinMax edge, $\Max(s)$ is defined by the distance to exactly one extremal red point $r$ and one extremal blue point $b$ in the primal plane as in \cref{fig:rotate}. Let $m$ be the midpoint in between $b$ and $r$, and note that $m$ lies on $s^*$. Let $C_b$ and $C_r$ be circles tangent to $s^*$ centered at $b$ and $r$, respectively. Note that $C_b$ and $C_r$ have radius $\Max(s)$, since $r$ and $b$ are extremal points. 

Points $b$ and $r$ are on opposite sides of $s^*$, with $m$ in between them. Either rotating $s^*$ clockwise or counterclockwise around $m$ will make it intersect both $C_b$ and $C_r$, which means the distance from $s^*$ to $b$ and $r$ decreases, and therefore $\Max(s_{\max})$ decreases. This corresponds to moving $s$ left or right along $e$.
\end{proof}

\begin{figure}[tb]
\centering
    \includegraphics{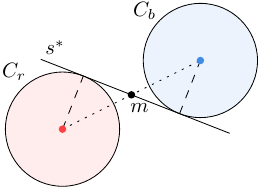}
    \caption{Separator $s^*$ with extremal points $r$ and $b$. Rotating $s^*$ counterclockwise around $m$ will decrease $\Max(s^*)$.}
    \label{fig:rotate}
\end{figure}

\subparagraph{Properties of an optimal $k$-min MinMax solution.} For
the $k$-mis MinMax problem, want to compute an optimal separator
$\sopt = \argmin_{s \in S_k(B \cup R)} \Max(s)$, so a valid separator
with minimal $\Max(\sopt)$. For a fixed slope $m$, let $s_{\opt}(m)$
be an optimal separator with that slope $m$. Lemma~\ref{lem:vert}
below then characterizes such an optimal separator for slope $m$. This
then leads to a characterization of $\sopt$.

\begin{restatable}{lemma}{vert}
  \label{lem:vert}
  Dual point $s_{\opt}^*(m)$ is the valid point with the smallest
  vertical distance to $s_{\max}^*(m)$.

  If there are no valid points $s^*$ with $s^*_x = m$, then $s_{\opt}^*(m)$ does not exist.
\end{restatable}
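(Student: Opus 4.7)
My plan is to fix the slope $m$ (equivalently, the vertical line $\ell_m : x = m$ in the dual plane), parameterize candidate separators of slope $m$ by the $y$-coordinate of their dual point on $\ell_m$, and show that $\Max$ viewed as a function of this $y$ is convex, with its minimum attained at the height of $s_{\max}^*(m)$. The lemma then follows immediately: among valid heights on $\ell_m$ the optimum is the one closest to $s_{\max}^*(m)$, and if there are no valid heights on $\ell_m$ then no valid separator of slope $m$ exists.

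For the reduction I would first replace Euclidean by vertical distance. As already used in the proof of Lemma~\ref{lem:MinMax}, for a fixed slope $m$ the Euclidean distance from a point to a line of slope $m$ equals $\rho(m) = 1/\sqrt{m^2+1}$ times the vertical distance, and duality preserves vertical distances. So minimizing $\Max$ over slope-$m$ separators is the same, up to the factor $\rho(m)$, as minimizing the maximum vertical distance from the candidate dual point on $\ell_m$ to the dual line of any misclassified point. A red point is misclassified iff its dual line lies below the candidate at $x = m$, so the extremal such red line, if any, lies on $L_0(R^*)$; symmetrically for blue and $L'_0(B^*)$. Letting $a$ and $b$ be the $y$-coordinates of $L_0(R^*)$ and $L'_0(B^*)$ at $x = m$, the error at candidate height $y$ is, up to $\rho(m)$,
\[
\Phi(y) \;=\; \max\bigl(\max(y-a,\,0),\ \max(b-y,\,0)\bigr).
\]

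This $\Phi$ is piecewise-linear and convex. Its set of minimizers is the single point $y^* = (a+b)/2$ when $a < b$, and the interval $[b,a]$ (which still contains $(a+b)/2$) when $a \ge b$. By Lemma~\ref{lem:MinMax}, $y^*$ is exactly the height of $s_{\max}^*(m)$, and convexity then forces $\Phi$ to be non-decreasing in $|y - y^*|$. Restricting $y$ to the valid heights $\{y : (m,y) \in S_k(B \cup R)\}$, the valid $y$ minimizing $\Phi$ (equivalently $\Max$) is therefore the one minimizing $|y - y^*|$, which is precisely the valid point on $\ell_m$ with smallest vertical distance to $s_{\max}^*(m)$. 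If that set of valid heights is empty, no valid separator of slope $m$ exists and $s_{\opt}^*(m)$ is undefined. I do not expect any serious obstacle; the only mild subtlety is the degenerate case $a \ge b$, where $\Phi$ has a whole interval of minimizers, but since $y^*$ lies in that interval the ``closest valid $y$ to $y^*$'' rule still selects an optimum.
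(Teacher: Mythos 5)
Your proof is correct and takes essentially the same route as the paper's: fix the slope $m$, use the fixed ratio $\rho(m)$ (together with the fact that duality preserves vertical distance) to replace the Euclidean error by vertical distance on the line $x=m$, and observe that this error grows with the vertical distance to $s^*_{\max}(m)$, so the valid point closest to $s^*_{\max}(m)$ is optimal, and no valid point on $x=m$ means no valid separator of slope $m$. One small caveat: convexity of $\Phi$ alone does not force monotonicity in $|y-y^*|$ (a convex function need not be symmetric about a minimizer), but the needed fact is immediate from your explicit formula, since both branches of $\Phi$ have slope $\pm 1$ and $y^*=(a+b)/2$ is the midpoint of $a$ and $b$, giving $\Phi(y)=\max\bigl(|y-y^*|-(a-b)/2,\,0\bigr)$.
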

\begin{proof}
Let $s^*$ be a dual point on the line $x = m$, , and w.l.o.g. assume it lies below $s_{\opt}^*(m)$. Then the blue line $b^*$ on $L'_{\leq k}(B^*)$ at $x = m$ is the furthest misclassified line. As in the proof of Lemma \ref{lem:MinMax}, we know there is a fixed ratio $\rho(m)$ between Euclidean distance $\dist(s, b)$ in the primal and vertical distance between $s^*$ and $b^*$ on the line $x = m$ in the dual. Therefore, minimizing $\dist(s,b)$ is equivalent to minimizing the vertical distance from $s^*$ to $b^*$. It follows that $s_{\opt}^*(m)$ must be the highest valid point below $s_{\max}^*(m)$, or similarly the lowest valid point above $s_{\max}^*(m)$, whichever one is closer.
\end{proof}

\begin{figure}
\centering
\includegraphics{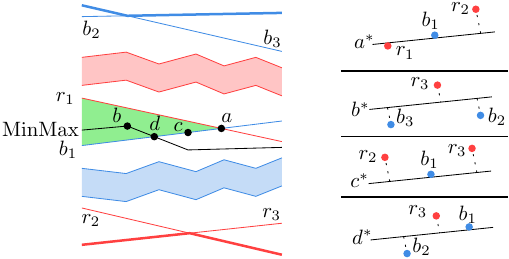}
\caption{Left: the cases \enumit{a, b, c, d} for $s^*_{\opt}$ in the dual plane; the red/blue regions represent some number of correctly classified lines. Right: the cases \enumit{a, b, c, d} for $s_{\opt}$ in the primal plane. }
\label{fig:cases}
\end{figure}

\begin{lemma}
  \label{lem:optimum}
  A point $s^*_{\opt}$ dual to an optimal separator is one of the following:
\begin{enumerate}[label=\enumit{\alph*.}]
    \item A vertex of a valid cell, vertically closest to MinMax.
        
    \item A (valid) vertex of MinMax.
    
    \item The first valid point directly above or below a vertex of MinMax.
    
    \item The intersection of a MinMax edge $e$ with a valid edge, closest to one of $e$'s endpoints.
\end{enumerate}
\end{lemma}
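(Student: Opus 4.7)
By \cref{lem:vert}, the dual point $s^*_{\opt}$ minimizes, over all valid points, the vertical distance $d$ to the MinMax curve $\{s^*_{\max}(m) \mid m \in \R\}$. I will do a case analysis on $d$ and on where $s^*_{\opt}$ lies with respect to the boundary of $S_k(B \cup R)$.

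Suppose first that $d = 0$, so $s^*_{\opt}$ lies on the MinMax curve. If $s^*_{\opt}$ is a vertex of that curve, we are in case~(b). Otherwise it lies in the relative interior of some MinMax edge $e$, and \cref{lem:minMaxTowardsVertexBetter} says that sliding $s^*_{\opt}$ along $e$ in at least one direction strictly decreases $\Max$. I slide in that direction as far as validity permits; the motion terminates either at a MinMax vertex (case~(b)) or at the first crossing of $e$ with an edge of $\partial S_k(B\cup R)$ (case~(d)).

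Suppose next that $d > 0$, and, without loss of generality, that $s^*_{\opt}$ lies strictly above MinMax. Let $V$ be the valid region containing $s^*_{\opt}$. If $s^*_{\opt}$ were in the interior of $V$, a small downward translation would stay in $V$ and decrease $d$, contradicting optimality; hence $s^*_{\opt} \in \partial V$. By \cref{lem:leftMostRedBlue}, the part of $\partial V$ crossed by moving downward is blue, so either $s^*_{\opt}$ is a vertex of $\partial V$ (case~(a)) or it lies in the relative interior of a blue edge $\ell_b$ of $\partial V$. Parametrize $s^*_{\opt}$ along $\ell_b$ by the dual $x$-coordinate $m$. On any $m$-interval where both $\ell_b$ and the currently active MinMax edge are linear, the vertical gap $y_{\ell_b}(m) - y_{\max}(m)$ is an affine function of $m$; if its slope is nonzero we decrease $d$ by sliding along $\ell_b$. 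Hence at optimality we are at an endpoint of such a linearity interval, which occurs exactly when either (i) we reach an endpoint of $\ell_b$, a vertex of $\partial V$, giving case~(a); (ii) the active MinMax edge changes, so $s^*_{\opt}$ lies directly above a MinMax vertex and, by minimality of $d$, is the first valid point above it, giving case~(c); or (iii) the gap hits zero, so $s^*_{\opt}$ lies simultaneously on a MinMax edge and a valid edge, giving case~(d). The subcase $s^*_{\opt}$ below MinMax is fully symmetric, using red edges of $\partial V$.

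The main subtlety I expect is handling degenerate configurations: a zero-slope linearity interval (a blue valid edge parallel to the current MinMax edge), or the simultaneous occurrence of several of the events (i)--(iii). In such cases the optimum is not unique, and the plan is to show that we may slide $s^*_{\opt}$ without increasing $d$ until at least one of the four listed witnesses is reached, so that an optimum of one of the stated types always exists.
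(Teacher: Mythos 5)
There is a genuine gap, and it starts in your first sentence. \cref{lem:vert} is a statement about a \emph{fixed} slope $m$: among valid points on the vertical line $x=m$, the one vertically closest to $s^*_{\max}(m)$ is best. It does \emph{not} say that $s^*_{\opt}$ minimizes the vertical distance to the MinMax curve over \emph{all} valid points. Indeed, for a point $s^*$ at $x=m$ above MinMax the true error is (vertical distance from $s^*$ to $L_0(R^*)$ at $m$) divided by the ratio $\rho(m)$ from the proof of \cref{lem:MinMax}, i.e.\ $\Max(s^*) = (d + g(m))/\rho(m)$, where $d$ is the vertical distance to MinMax and $g(m)$ is the vertical distance from MinMax to the envelope at $m$. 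Since both $g(m)$ and $\rho(m)$ vary with $m$, a valid point with smaller $d$ at a different $x$-coordinate can easily have a larger error. Consequently, in your $d>0$ case the step ``if the slope of the affine gap $y_{\ell_b}(m)-y_{\max}(m)$ is nonzero we decrease $d$ by sliding along $\ell_b$'' yields no contradiction with optimality: sliding decreases $d$ but may well increase $\Max$, so you cannot conclude that $s^*_{\opt}$ sits at an endpoint of a linearity interval. This is exactly the configuration the paper treats separately (its case of a point strictly inside a valid edge, not on MinMax and not above/below a MinMax vertex), and it handles it in the primal: the separator rotates about the fixed point dual to the supporting line of the valid edge, it has a \emph{unique} extremal misclassified point there, and a small rotation in one direction strictly decreases the distance to that point, hence decreases $\Max$. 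Your surrogate objective $d$ cannot replace that argument; what you would need instead is that the true error along the valid edge, on any interval where the extremal point is fixed and unique, has no interior local minimum (which is what the rotation argument gives).

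The remaining parts are closer to being sound but also lean on the same conflation. The interior-of-$V$ step and your event~(ii) (first valid point above a MinMax vertex) are fine, because there the comparison is at a fixed $x$-coordinate, where \cref{lem:vert} genuinely applies. The $d=0$ case is essentially the paper's argument via \cref{lem:minMaxTowardsVertexBetter}, though it should be phrased as a contradiction (if the improving slide stays valid for any positive length, $s^*_{\opt}$ was not optimal; hence $s^*_{\opt}$ itself is a MinMax vertex or an outermost intersection of the MinMax edge with a valid edge), rather than as a slide that ``terminates'' at a different, strictly better point. Finally, note that your closing paragraph only aims to show that \emph{some} optimum has one of the four forms; that weaker statement would suffice for the algorithm, but even it is not established as long as the sliding is controlled by $d$ rather than by $\Max$.
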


\begin{proof}
Intuition: in the primal plane, the optimal separating line has to be 'bounded' by at least three points. These can either be extremal points that we want the separator to be as close to as possible, or points that the separator is not allowed to cross because that would make it invalid. The four cases are shown in \cref{fig:cases}.
    
Proof by contradiction. Assume the optimum $s^*_{\opt}$ is not any of the four cases \enumit{a, b, c, d}. Then it has to be one of the following cases, for each of which we show that we can change $s^*_{\opt}$ slightly to decrease $\Max(s^*_{\opt})$, meaning $s^*_{\opt}$ was not optimal, which is a contradiction.

\begin{enumerate}

\item $s^*_{\opt}$ is not inside a valid face; clearly this is not possible.
    
\item $s^*_{\opt}$ is strictly inside a valid face not on MinMax. Then moving vertically towards MinMax will decrease $\Max(s^*_{\opt})$ by Lemma \ref{lem:vert}.

\item $s^*_{\opt}$ lies strictly inside a valid face and strictly inside an edge of MinMax. Then by Lemma \ref{lem:minMaxTowardsVertexBetter}, moving towards one of the two adjacent MinMax vertices will decrease $\Max(s^*_{\opt})$.

\item $s^*_{\opt}$ lies strictly inside a valid edge $e^*$, not on MinMax and not above/below a MinMax vertex. 
This means primal line $\sopt$ goes through point $e$, as in \cref{fig:rotate2}. 
It has a single extremal point (if multiple points of the same color would be equally far away we would be below a MinMax vertex, and if a blue and red point would be equally far away we would be on MinMax). 
W.l.o.g. let the extremal point be a blue point $b$, and let $c_b$ be a circle centered at $b$ tangent to $\sopt$. 
We can rotate $\sopt$ either clockwise or counterclockwise around $e$ (clockwise in \cref{fig:rotate2})  to make it intersect $c_b$, decreasing the distance to $b$ and decreasing $\Max(s^*_{\opt})$.

\item $s^*_{\opt}$ lies strictly inside a valid edge $e^*$ above or below a MinMax vertex, or on a valid vertex, but there is another point $s'^*$ with the same $x$-coordinate that is closer to MinMax. Then $\Max(s'^*) < \Max(s^*_{\opt})$ by Lemma \ref{lem:vert}. 

\item $s^*_{\opt}$ lies on an intersection of a MinMax edge $e$ with a valid edge, but $e$ intersects another valid edge in point $s_1$ left of $s^*_{\opt}$ and in point $s_2$ right of $s^*_{\opt}$. Then $s_1$ is closer to the left vertex of $e$, and $s_2$ is closer to the right vertex of $e$, so by Lemma \ref{lem:minMaxTowardsVertexBetter} we have $\Max(s_1) < \Max(s^*_{\opt})$ and $\Max(s_2) < \Max(s^*_{\opt})$.

\end{enumerate}

Now the only remaining cases are cases \enumit{a, b, c, d}, proving the lemma.
\end{proof}

\begin{figure}
\centering
    \includegraphics{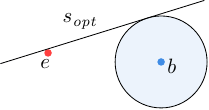}
    \caption{Separator $s'^*$ going through point $e$ with extremal point $b$.}
    \label{fig:rotate2}
\end{figure}

\begin{restatable}{lemma}{complexity}
  There are $O(nk^{1/3}+n^{5/6-\eps}k^{2/3+2\eps}+k^2)$ points of type
  \enumit{a}, and there are $O(n)$ points of types \enumit{b},
  \enumit{c}, and \enumit{d}.  
\end{restatable}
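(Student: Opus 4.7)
The plan is to handle each of the four types separately, using the structural results already established. The bulk of the argument consists of tight per-object counting; the only nontrivial bound is for type \enumit{a}, and that one is delivered essentially for free by an earlier lemma.

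For type \enumit{a}, I would observe that every such candidate is a vertex of some valid cell, and so the total count is bounded by the combinatorial complexity of $S_k(B \cup R)$. By Lemma~\ref{lem:k2ValidRegions}, this complexity is at most $O(nk^{1/3}+n^{5/6-\eps}k^{2/3+2\eps}+k^2)$, which is exactly the bound claimed for type \enumit{a}. No further argument is needed.

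For types \enumit{b}, \enumit{c}, and \enumit{d}, I would first argue that the MinMax curve has only $O(n)$ vertices and edges. This follows because, by Lemma~\ref{lem:MinMax}, the MinMax curve is the vertical midchain of the lower envelope $L_0(R^*)$ and the upper envelope $L'_0(B^*)$; each of these envelopes is a convex/concave polygonal chain of complexity $O(n)$, so their midchain has at most $O(n)$ breakpoints (arising at $x$-coordinates of vertices of either envelope). Consequently there are $O(n)$ points of type \enumit{b}. Each MinMax vertex contributes at most one candidate directly above and one directly below it, so type \enumit{c} also accounts for $O(n)$ points.

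For type \enumit{d}, the definition already restricts attention to the two intersections of a MinMax edge $e$ with a valid edge that are closest to the left and right endpoint of $e$, respectively. Thus each MinMax edge contributes at most two such candidates, for a total of $O(n)$. I do not need the sharper bound provided by Lemma~\ref{lem:lineKIntersections}, though it would be the natural tool if one were instead counting \emph{all} intersections of $e$ with the valid region. The main (very minor) obstacle is simply verifying that midchain construction does not introduce additional vertices beyond those inherited from the two envelopes; this is immediate from the observation that the midchain slope changes only when the slope of one of the two envelopes changes, i.e., at $x$-coordinates that project to envelope vertices.
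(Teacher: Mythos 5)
Your proposal is correct and follows essentially the same route as the paper: type \enumit{a} points are charged to vertices of $S_k(B\cup R)$ and bounded via Lemma~\ref{lem:k2ValidRegions}, while types \enumit{b}, \enumit{c}, and \enumit{d} are charged (at most a constant number each) to the $O(n)$ vertices and edges of the MinMax curve. The extra justification you give that the midchain of $L_0(R^*)$ and $L'_0(B^*)$ has only $O(n)$ breakpoints is a fact the paper already establishes right after Lemma~\ref{lem:MinMax}, so nothing is missing.
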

\begin{proof}
\begin{enumerate}[label=\enumit{\alph*.}]
\item Each type \enumit{a} point is a vertex of $S_k(B\cup R)$. By
  Lemma~\ref{lem:k2ValidRegions} the total complexity of
  $S_k(B\cup R)$, and thus also the number of type \enumit{a} points,
  is $O(nk^{1/3}+n^{5/6-\eps}k^{2/3+2\eps}+k^2)$.

  \item MinMax has $O(n)$ vertices.

  \item Again, MinMax has $O(n)$ vertices. Since there are at most two type $c$ points per MinMax vertex, the closest valid one above and below, there are also $O(n)$ type $c$ points.

  \item Each edge $e$ of MinMax has at most one point closest to its left endpoint, and one point closest to its right endpoint, and MinMax has $O(n)$ edges.
      \qedhere
\end{enumerate}
\end{proof}

\subsection{Constructing the valid regions}
\label{sub:computing_valid_regions}

We present two general algorithms for constructing $S_k(B \cup R)$,
i.e.\ the union of valid regions. By Lemma~\ref{lem:k2ValidRegions}
all valid points lie inside
$L_{\leq k}(R^*) \cap L'_{\leq k}(B^*)$, so we present a simple
algorithm that simply constructs this part of the arrangement, and
prunes all invalid regions. This gives an $O(n\log n + kn)$ time
algorithm. We then present a much more involved
$O((nk^{1/3}+n^{5/6-\eps}k^{2/3+2\eps}+k^3)\log^2 n)$ time
algorithm. Finally, if we care only about the case where
$k = k_{min} = \Mis(s_{\textrm{mis}})$, i.e. where we are only allowed
to misclassify as few points as possible (we say the value of $k$ is
\emph{tight} in this case), we can compute $S_k(B \cup R)$ in
$O(k^{4/3} n^{2/3} \log^{2/3} (n / k) + (n + k^2) \log n)$ time.

\subparagraph{A simple algorithm.} By Lemma~\ref{lem:k2ValidRegions}
all valid points lie inside
$L_{\leq k}(R^*) \cap L'_{\leq k}(B^*)$. Consider the overlay
$\A_k(R^*,B^*)$ of $L_{\leq k}(R^*)$ and $L'_{\leq k}(B^*)$.

\begin{lemma}
  \label{lem:slabsComplexity}
  The complexity of $\A_k(R^*,B^*)$ is $O(nk)$.
\end{lemma}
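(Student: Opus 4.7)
The plan is to classify the vertices of $\A_k(R^*,B^*)$ and bound each class separately; the complexity of a planar subdivision follows from Euler's formula once the number of vertices is bounded. Each vertex of $\A_k(R^*,B^*)$ falls into one of three types: (i) an intersection of two red lines that lies in $L_{\leq k}(R^*) \cap L'_{\leq k}(B^*)$, (ii) an intersection of two blue lines that lies in $L_{\leq k}(R^*) \cap L'_{\leq k}(B^*)$, or (iii) an intersection of a red line with a blue line that lies in $L_{\leq k}(R^*) \cap L'_{\leq k}(B^*)$.

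For vertices of type (i), observe that any such vertex is in particular a vertex of the arrangement of $R^*$ inside $L_{\leq k}(R^*)$. The stated complexity of $L_{\leq k}(R^*)$ (the $O(nk)$ bound cited in the preliminaries) therefore caps the number of type (i) vertices at $O(nk)$, and symmetrically type (ii) vertices are bounded by $O(nk)$. For type (iii), I would invoke Lemma~\ref{lem:k2RedBlueIntersects} directly: the number of bichromatic intersections in $L_{\leq k}(R^*) \cap L'_{\leq k}(B^*)$ is $O(k^2)$.

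Adding the three contributions gives $O(nk + k^2)$, and since we may assume $k \leq n$ this is $O(nk)$. By the usual Euler-formula argument for planar subdivisions, the total number of edges and faces of $\A_k(R^*,B^*)$ is also $O(nk)$, establishing the lemma. The boundary of the region $L_{\leq k}(R^*) \cap L'_{\leq k}(B^*)$ itself contributes no additional complexity beyond what is already counted: every boundary breakpoint is either a $k$-level vertex of one color (already of type (i) or (ii)) or a crossing of the red $k$-level with the blue $k$-level, which is a red-blue crossing and hence of type (iii).

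I do not anticipate a genuine obstacle here; the only mild subtlety is making sure that no vertex class is missed in the case analysis (in particular, that vertices sitting on the common boundary of the two regions are correctly attributed to type (iii) via Lemma~\ref{lem:k2RedBlueIntersects}, rather than double-counted). Once the classification is complete the bound is immediate from the two previously established facts.
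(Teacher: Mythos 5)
Your proposal is correct and follows essentially the same route as the paper: bound the monochromatic vertices by the $O(nk)$ complexity of $L_{\leq k}(R^*)$ and $L'_{\leq k}(B^*)$, bound the bichromatic vertices by the $O(k^2)$ bound of Lemma~\ref{lem:k2RedBlueIntersects}, and conclude $O(nk + k^2) = O(nk)$ since $k \leq n$. The paper's proof is just a terser version of the same argument, so no further comparison is needed.
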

\begin{proof}
  Arrangements $L_{\leq k}(R^*)$ and
  $L'_{\leq k}(B^*)$ both have complexity $O(nk)$, and there are
  only $O(k^2)$ bichromatic intersections
  (Lemma~\ref{lem:k2RedBlueIntersects}), their overlay also has
  complexity $O(nk + k^2) = O(nk)$. 
\end{proof}

\begin{figure}
    \centering
    \includegraphics{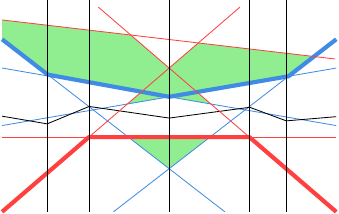}
    \caption{The overlay $\A_k(R^*,B^*)$ has complexity is
      $O(nk)$.} 
    \label{fig:verticalSlabs}
\end{figure}

\begin{lemma}
  \label{lem:construct_simple}
  We can construct $\A_k(R^*,B^*)$ and filter out the valid regions in
  $S_k(R \cup B)$ in $O(n \log n + nk)$ time.
\end{lemma}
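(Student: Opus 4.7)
The plan is to build $\A_k(R^*, B^*)$ in two stages and then sweep once over it to identify the valid faces. First, I would invoke the algorithm of Everett, Robert, and van Kreveld~\cite{lessThanK} twice: once applied to $R^*$ to obtain the portion of $\A(R^*)$ lying inside $L_{\leq k}(R^*)$, and once (with the up/down orientation reversed) applied to $B^*$ to obtain the portion of $\A(B^*)$ lying inside $L'_{\leq k}(B^*)$. Each invocation runs in $O(n\log n + nk)$ time and outputs a planar subdivision of complexity $O(nk)$, which together account for the $O(n \log n)$ term in the claimed bound.

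Next, I would overlay the two subdivisions. All red-red intersections in $L_{\leq k}(R^*) \cap L'_{\leq k}(B^*)$ are already vertices of the red subdivision, and symmetrically for blue, so the only additional vertices introduced by the overlay are the $O(k^2)$ red-blue crossings from Lemma~\ref{lem:k2RedBlueIntersects}; the total overlay complexity is $O(nk)$ by Lemma~\ref{lem:slabsComplexity}. A linear-output-sensitive overlay algorithm for planar subdivisions (triangulating each side first if necessary) then produces $\A_k(R^*, B^*)$ in $O(nk)$ time.

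Finally, I would label every face of $\A_k(R^*, B^*)$ with its $\Mis$ value in a single traversal. Picking any initial face $f_0$, I count the red lines below $f_0$ and the blue lines above $f_0$ directly in $O(n)$ time to obtain $\Mis(f_0)$. A BFS over the dual graph of the overlay then propagates the value along edges: crossing a red edge upward adds $1$ to $\Mis$ and crossing a blue edge upward subtracts $1$ (with the opposite signs when crossing downward), because the corresponding line switches from being above the current point to being below. A face is valid iff its label is at most $k$, so $S_k(R \cup B)$ is obtained as the union of the marked faces. Since the overlay has $O(nk)$ edges, this phase costs $O(nk)$, and the total is $O(n\log n + nk)$.

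The main delicate point is the overlay step. A generic Bentley--Ottmann sweep yields only $O(nk \log n)$, so it is essential to exploit the fact that both inputs are already planar subdivisions whose only missing vertices are the $O(k^2)$ red-blue crossings; this is what allows a specialised (e.g.\ triangulation-based) overlay to run in time linear in the output complexity. Everything else---the two calls to~\cite{lessThanK}, the $O(n)$ bootstrap count at $f_0$, and the BFS propagation---is entirely routine.
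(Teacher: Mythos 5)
Your proposal is correct and follows essentially the same route as the paper: two calls to~\cite{lessThanK} to build $L_{\leq k}(R^*)$ and $L'_{\leq k}(B^*)$, a linear-time overlay of the two subdivisions after convexifying (the paper makes faces convex and invokes Finke--Hinrichs; you suggest triangulating, which is a harmless variant of the same idea), and a graph traversal propagating $\Mis$ across adjacent faces with a $\pm 1$ change per crossed line. The only cosmetic differences are BFS vs.\ DFS and that you spell out the sign rule explicitly where the paper just states $|\Mis(F_1)-\Mis(F_2)|=1$.
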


\begin{proof}
  We use the following steps:
  \begin{enumerate}
  \item Construct $L_{\leq k}(R^*)$ and $L'_{\leq k}(B^*)$. This can
    be done in $O(nk + n \log n)$ time~\cite{lessThanK}.

  \item Overlay $L_{\leq k}(R^*)$ and $L'_{\leq k}(B^*)$ to get
    $\A_k(R^*,B^*)$. Except for the upper unbounded face which has the
    $k$-level $L_k(R^*)$ as boundary, all faces in $L_{\leq k}(R^*)$ are
    convex. This unbounded face can be trivially triangulated, making
    all faces convex. The same holds for $L'_{\leq k}(B^*)$. We can
    then use Finke and Hinrichs' algorithm~\cite{mapoverlay} to overlay
    two convex subdivisions in time linear in both input and output
    size, which is $O(nk)$ in our case.

  \item Walk through the faces of $\A_k(R^*,B^*)$ in depth-first search
    order, maintaining the number of misclassifications per face, and
    storing it for each face. For two neighboring faces $F_1$ and $F_2$,
    we have $|\Mis(F_1) - \Mis(F_2)| = 1$, since we cross only a single
    line. This means we can maintain the number of misclassifications in
    constant time per step, so going through the whole subdivision takes
    $O(nk)$ time.\qedhere
  \end{enumerate}
\end{proof}

\subparagraph{An output sensitive algorithm.}
Chan~\cite{chan10bichromatic} sketches an approach to compute the
valid region in an output sensitive manner by first computing the
bichromatic intersection points of
$L_{\leq k}(R^*) \cap L'_{\leq k}(B^*)$, and then tracing
$S_k(B\cup R)$, starting from these bichromatic intersection points
``as in a standard $k$-level algorithm''. He states this results in a
running time of $O(|S_k(B\cup R)|\polylog n)$ time, but does not
provide any details. The $k$-level in an arrangement of lines is
connected, whereas here $S_k(B\cup R)$ may consist of $\Omega(k^2)$
disconnected pieces (Lemma~\ref{lem:k2ValidRegions_lowerbound}). This
unfortunately provides some additional difficulties in initializing
the data structure used in the tracing process. Hence, it is not clear
that can indeed be done in $O(|S_k(B\cup R)|\polylog n)$
time. Instead, we present an algorithm that runs in
$O((|S_k(B\cup R)|+n+k^3)\log^2 n)$ time.

\begin{lemma}
  \label{lem:construct_valid_regions_outputsensitive}
  The region $S_k(B\cup R)$ can be constructed in
  $O((n+|S_k(B\cup R)|+n+k^3)\log^2 n)$ time.
\end{lemma}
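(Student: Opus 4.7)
The plan is to realize Chan's sketched output-sensitive approach, filling in the missing details about initialization and tracing. I would proceed in three phases.

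In the preprocessing phase, I first compute the concave chain decomposition of $L_{\leq k}(R^*)$ and the convex chain decomposition of $L'_{\leq k}(B^*)$ in $O(n\log n)$ time, together with the chromatic ply data structures on each chain (as in Section~\ref{sec:chansAlgorithm}). Then, for every pair consisting of a red and a blue chain, I compute their (at most two) intersections. This yields all $O(k^2)$ bichromatic intersection points inside $L_{\leq k}(R^*) \cap L'_{\leq k}(B^*)$ in $O(k^2\log n)$ time (Lemma~\ref{lem:k2RedBlueIntersects}). For each such intersection $p$ I use the chromatic ply data structures to evaluate $\Mis(p)$ in $O(\log n)$ time and discard the invalid ones.

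Next, by Lemma~\ref{lem:leftMostRedBlue} and its symmetric analog applied to the holes of valid regions, the leftmost point of every connected piece of the boundary of $S_k(B\cup R)$ (including inner hole boundaries) is either a bichromatic intersection or lies on an unbounded-to-the-left piece of some chain. Hence the set of candidate starting points is precisely the surviving valid bichromatic intersections plus the $O(k)$ leftmost pieces of the chains, which I identify directly. From each starting point I trace the boundary by maintaining the current red line forming the top boundary and the current blue line forming the bottom boundary. At every vertex I compute the next event along the boundary, which is either a red-red crossing on $L_k(R^*)$, a blue-blue crossing on $L'_k(B^*)$, or a red-blue crossing at which the top and bottom boundaries swap. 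Using ray shooting on the concave/convex chain decompositions such a next-event query can be answered in $O(\log^2 n)$ time. Each traced edge is marked so that we abort the walk when re-encountering a previously visited edge and do not trace the same boundary twice.

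The main obstacle is the tracing step, and in particular three subtleties: (i) initializing the current red and blue boundary lines correctly at each of the $O(k^2)$ starting points, which requires locating the relevant chains among the $O(k)$ candidates and costs $O(k\log^2 n)$ per start, accounting for the $O(k^3\log^2 n)$ term; (ii) handling the transitions between chains and between top/bottom boundaries without double counting; (iii) ensuring that every hole inside a valid region is discovered via its own leftmost bichromatic intersection (which it has by Lemma~\ref{lem:leftMostRedBlue}). Once these are settled, the total number of boundary vertices visited during tracing is $O(|S_k(B\cup R)|)$, and each such vertex is processed in $O(\log^2 n)$ time, giving $O(|S_k(B\cup R)|\log^2 n)$ for the trace itself. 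Adding the $O(n\log n)$ preprocessing and the $O(k^3\log^2 n)$ initialization of the starting points yields the claimed $O((n+|S_k(B\cup R)|+k^3)\log^2 n)$ bound.
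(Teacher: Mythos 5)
Your overall plan (compute the $O(k^2)$ bichromatic intersections, use them as starting points, and trace the boundaries of the valid regions) matches the intended strategy, but the step that actually carries the whole lemma---how a single tracing step is performed---is where your argument breaks down. First, the events you enumerate are mischaracterized: the upper boundary of a valid region does \emph{not} lie on the $k$-level $L_k(R^*)$ (nor the lower boundary on $L'_k(B^*)$). A boundary point has (red lines below) $+$ (blue lines above) $=k$, so the red level of the top boundary varies between $0$ and $k$ as the trace crosses blue lines; restricting attention to the two $k$-levels misses most boundary edges. Second, and more importantly, the claim that the ``next event'' along the current line can be found in $O(\log^2 n)$ time ``using ray shooting on the concave/convex chain decompositions'' is unsupported, and no structure you describe achieves it. The decomposition consists of $\Theta(k)$ concave (resp.\ convex) chains; shooting a ray and taking the first hit over all chains costs $O(k\log n)$ per step in the obvious implementation, which would give $O(|S_k(B\cup R)|\,k\log n)$ overall and destroys the claimed bound, while building a genuine ray-shooting structure on the full arrangement inside $L_{\leq k}(R^*)\cap L'_{\leq k}(B^*)$ costs $\Omega(nk)$ and defeats the purpose of output-sensitivity. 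Since Chan's sketch is missing exactly this detail, asserting it does not prove the lemma.

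The paper resolves this differently: it maintains the current face of the arrangement as the common intersection of $n$ halfplanes in the dynamic structure of Overmars and van Leeuwen, initialized with $F_0=\{r^-\mid r\in R^*\}\cup\{b^+\mid b\in B^*\}$. The next boundary vertex is then a ``next clockwise neighbor'' query on this convex intersection, and transitions between faces of a valid region (at red--red, blue--blue, or red--blue crossings) are implemented by swapping the corresponding halfplanes, each update and query costing $O(\log^2 n)$. Converting the structure between $F_0$ and the face containing a starting point requires $O(k)$ swaps per starting point, which is what genuinely accounts for the $O(k^3\log^2 n)$ term (your $O(k\log^2 n)$-per-start ``chain location'' plays no comparable role, because in your scheme the expensive part is every step of the walk, not just its initialization). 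If you want to salvage your version, you would need to either adopt this halfplane-intersection mechanism or exhibit a concrete data structure that answers first-hit queries against $O(k)$ chains in polylogarithmic time with near-linear preprocessing; as written, the proof has a gap precisely at its central step.
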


\begin{proof}
  We first compute the set $Q$ of $O(k^2)$ bichromatic intersection
  points in $L_{\leq k}(R^*) \cap L'_{\leq k}(B^*)$ using the
  algorithm of Chan~\cite{chanLPViolations} (i.e. by computing the
  $O(k)$ concave chains covering $L_{\leq k}(R^*)$ and the convex
  chains covering $L'_{\leq k}(B^*)$, and intersecting them). For
  each such intersection point $v \in Q$, let $R_v \subseteq R^*$
  denote the at most $k$ red lines below $v$, and $B_v \subseteq B^*$
  be the at most $k$ blue lines above $v$. These are the lines
  misclassified by $v$. We wish to trace the boundaries of the valid
  regions of $S_k(B\cup R)$ containing $v$, i.e. for
  which $|B_v|+|R_v|=k$.
  
  Let $F_0 = \{r^- \mid r \in R^*\} \cup \{ b^+ \mid b \in B^* \}$ be
  the set consisting of all halfplanes that are bounded from above by
  a red line, all halfplanes bounded from below by a blue line. We
  store them in the dynamic common halfplane intersection data
  structure of Overmars and van Leeuwen~\cite[Theorem
  7.1]{overmars81maint}.

  Consider a bichromatic intersection point $v \in Q$ for which
  $|B_v|+|R_v|=k$ that we have not traced yet, and at which lines
  $r_v \in R$ and $b_v \in B$ intersect. Consider the case that $b_v$
  lies below $r_v$ to the right of $v$, and thus $v$ is a point on the boundary of a
  valid region $V$. See Figure~\ref{fig:tracing} for an
  illustration. The case $b$ lies above $r$ can be handled
  analogously.

  \begin{figure}[tb]
    \centering
    \includegraphics{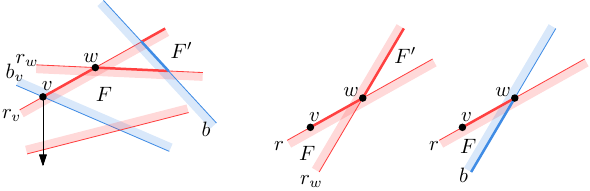}
    \caption{(a) Tracing a valid region $V$. The boundary is shown in
      bold. (b) If the halfplane $r_w^- \in F$, then at the
      intersection of $r$ and $r_w$ the tracing switches to tracing a
      different face $F'$ that also misclassifies $k$ lines. The same
      happens in figure (a) when $r_w$ intersects a blue line $b$ with
      $b^- \in F$. (c) If the halfplane $b^+ \in F$, we keep tracing
      $F$.  }
    \label{fig:tracing}
  \end{figure}

  Before we start tracing $V$, we delete the halfplanes corresponding
  to the lines in $R_v$ from the common halfplane intersection data,
  and replace each such halfplane $r^-$ by $r^+$. Similarly, we
  replace the halfplanes corresponding to the blue lines in $B_v$ with
  $b^-$. Let $F$ denote the current set of halfplanes in the data
  structure. Observe that $v$ is the leftmost point of a (valid) face
  of the arrangement $R^* \cup B^*$; namely the face that is the
  common intersection of the halfplanes in $F$.

  We now trace the boundary of $V$ by repeatedly computing and
  reporting the next clockwise neighbor $w$ of $v$ in $F$. We describe
  what happens when the edge $\overline{vw}$ lies on a red line $r$
  (as the case initially); when the edge $\overline{vw}$ is blue, we
  handle the situation analogously. If line $r$ intersects an other
  red line $r_w$ with $r_w^- \in F$, we continue the tracing from
  $w$. If $r_w^+ \in F$, we replace $r^-$ by $r^+$ and $r_w^+$ by
  $r^-_w$ in $F$, and continue tracing along $r_w$. In this case, we
  moved into another face $F'$ of the valid region, see
  Figure~\ref{fig:tracing}(b). Similarly, if $w$ lies on a blue
  line $b$ with $b^- \in F$, we ``swap'' to the halfplanes $r^-$ and
  $b^+$ as we switch tracing another face. If $w$ lies on a blue line
  $b$ with $b^+ \in F$, we keep tracing in face $F$.

  We repeat the above process until we are back at the initial vertex
  $v$ we started from. We can then again delete and reinsert the
  halfplanes corresponding to the lines in $R_v$ and $B_v$, so that
  the data structure is in its initial form, representing the set of
  halfplanes $F_0$ again.
  
  Computing the $O(k^2)$ bichromatic intersection points takes
  $O((n+k^2)\log n)$
  time~\cite{chanLPViolations,chan16optim_deter_algor_shall_cuttin}. In
  an additional $O(k^3)$ time, we can report all sets $R_v$ and $B_v$
  using the chain decompositions. Initializing the Overmars and van
  Leeuwen structure to represent the common intersection of $F_0$
  takes $O(n\log^2 n)$ time. Every update takes an additional
  $O(\log^2 n)$ time. To trace the boundaries of all valid regions, we
  thus spend $O(|S_k(B \cup R)|)\log^2 n)$ time, and an additional
  $O(k^3\log^2 n)$ time to convert the data structure between $F_0$
  and the halfplanes representing the face containing a point in $V$.
\end{proof}

\subparagraph{An algorithm for the tight $k$ case.} We start with the
following useful observation:

\begin{observation}
  \label{obs:tight_k_valid_regions_are_faces}
  If $k$ is tight, any valid region consists of only a single face.
\end{observation}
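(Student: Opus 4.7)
The plan is to exploit the elementary fact that crossing an edge of the arrangement $R^* \cup B^*$ changes the misclassification count by exactly one, and combine this with the minimality of $k_{\min}$.

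First I would establish the basic parity-style observation: for any two faces $F_1, F_2$ of the arrangement $R^* \cup B^*$ that share an edge $e$, we have $|\Mis(F_1) - \Mis(F_2)| = 1$. This holds because $e$ lies on a single line $\ell \in R^* \cup B^*$ and crossing it flips precisely the contribution of $\ell$ to $\Mis$: if $\ell$ is red, it moves from ``above'' to ``below'' (or vice versa); if $\ell$ is blue, symmetrically. All other lines remain on the same side, so the count changes by exactly $\pm 1$. This is the same potential argument used, for example, in the depth-first walk in the proof of \cref{lem:construct_simple}.

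Next I would use tightness: when $k = k_{\min}$, every valid face $F$ satisfies $\Mis(F) = k_{\min}$ exactly, because validity demands $\Mis(F) \le k = k_{\min}$ while the definition of $k_{\min}$ forbids $\Mis(F) < k_{\min}$. If two edge-adjacent faces $F_1, F_2$ were both valid, both would have $\Mis = k_{\min}$, contradicting $|\Mis(F_1) - \Mis(F_2)| = 1$. Hence no two valid faces share an edge, and so each connected union of edge-adjacent valid faces (i.e.\ each valid region, as defined in \cref{sec:preliminaries}) is a single face.

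There is no real obstacle here; the argument is essentially a one-line consequence of the $\pm 1$-behavior of $\Mis$ across arrangement edges combined with the extremality of $k_{\min}$. The only mild subtlety is to note that the definition of ``adjacent'' for valid regions is edge-adjacency (faces meeting only at a vertex of $R^* \cup B^*$ are not collapsed into one region), which matches the way valid regions are counted in \cref{lem:k2ValidRegions}.
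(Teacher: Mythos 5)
Your proof is correct and follows essentially the same route as the paper's one-line argument: adjacent faces classify the dividing line differently, so their $\Mis$ values differ by exactly one, and under tightness every valid face has $\Mis = k_{\min}$ exactly, so no two valid faces can be adjacent. Your version simply spells out the $\pm 1$ step and the extremality of $k_{\min}$ more explicitly.
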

\begin{proof}
  Two adjacent faces can not both misclassify exactly $k$ points, since they classify the line dividing them differently.
\end{proof}

By Lemma~\ref{lem:k2ValidRegions} there are $O(k^2)$ valid regions, so
now there are $O(k^2)$ valid faces. Clarkson et
al.~\cite{clarkson1990manyfacesbound} show that $m$ faces in an
arrangement have a complexity of $O(m^{2/3} n^{2/3}+n)$, so our
$O(k^2)$ valid faces have complexity $O(k^{4/3} n^{2/3}+n)$. We then compute the $O(k^2)$ bichromatic intersection points using Chan's algorithm as in
Lemma~\ref{lem:construct_valid_regions_outputsensitive}, but then directly
use Wang's algorithm~\cite{wang2022constructingManyFaces} to construct
all valid faces in
$O(k^{4/3} n^{2/3} \log^{2/3} (n / k) + (n + k^2) \log n)$ time.

\begin{lemma}
  \label{lem:construct_valid_regions_tightk}
  For $k=k_{\min}$, the region $S_k(B\cup R)$ has complexity
  $O(k^{4/3} n^{2/3}+n)$ and can be constructed in
  $O(k^{4/3} n^{2/3} \log^{2/3} (n / k) + (n + k^2) \log n)$ time.
\end{lemma}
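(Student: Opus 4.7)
The plan is to assemble the stated complexity and running-time bounds directly from the three ingredients noted in the preceding text: Observation~\ref{obs:tight_k_valid_regions_are_faces}, Lemma~\ref{lem:k2ValidRegions}, and the many-faces machinery of Clarkson \etal\ together with Wang's construction algorithm.

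First I would establish the complexity bound. By Observation~\ref{obs:tight_k_valid_regions_are_faces}, when $k=k_{\min}$ every valid region coincides with a single face of the arrangement $\A(R^* \cup B^*)$, since any two adjacent faces differ in their classification of the single line separating them and hence cannot both misclassify exactly $k_{\min}$ lines. Lemma~\ref{lem:k2ValidRegions} then bounds the number of valid regions---and so the number of valid faces---by $O(k^2)$. Plugging $m = O(k^2)$ into the Clarkson--Edelsbrunner--Guibas--Sharir--Welzl many-faces bound~\cite{clarkson1990manyfacesbound}, which states that the total combinatorial complexity of $m$ marked faces in an arrangement of $n$ lines is $O(m^{2/3}n^{2/3} + n)$, yields total complexity $O(k^{4/3}n^{2/3} + n)$.

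Next I would describe the construction. To invoke Wang's many-faces algorithm~\cite{wang2022constructingManyFaces} we need one representative point inside each face of interest. For this, compute the set $Q$ of $O(k^2)$ bichromatic intersection points lying in $L_{\leq k}(R^*) \cap L'_{\leq k}(B^*)$ using Chan's algorithm~\cite{chanLPViolations,chan16optim_deter_algor_shall_cuttin} in $O((n+k^2)\log n)$ time, exactly as in the proof of Lemma~\ref{lem:construct_valid_regions_outputsensitive}. By Lemma~\ref{lem:leftMostRedBlue} every bounded valid region has a red-blue intersection as its leftmost point, so from $Q$ we obtain at least one point per bounded valid face; the finitely many unbounded valid faces can be handled separately by walking along the leftmost edges of the chain decompositions (a lower order cost). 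For each such point we may need a nearby point strictly inside the face; a standard local perturbation in the subarrangement of the $O(k)$ lines through the intersection gives this without affecting the asymptotics.

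Finally, feeding these $O(k^2)$ points into Wang's algorithm constructs all valid faces in $O(k^{4/3}n^{2/3}\log^{2/3}(n/k) + (n + k^2)\log n)$ time, which dominates the preprocessing cost. Summing the two terms yields the claimed running time. The main obstacle I expect is the bookkeeping needed to guarantee that the point set handed to Wang's algorithm hits every valid face exactly once (so that no face is missed and work is not duplicated); this reduces to standard techniques once one observes that each valid face is, by Observation~\ref{obs:tight_k_valid_regions_are_faces}, a face of the full arrangement and hence is identified uniquely by any interior point.
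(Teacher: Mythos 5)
Your proposal is correct and follows essentially the same route as the paper: observe that for $k=k_{\min}$ each valid region is a single face (Observation~\ref{obs:tight_k_valid_regions_are_faces}), bound the number of such faces by $O(k^2)$ via Lemma~\ref{lem:k2ValidRegions}, apply the Clarkson \etal many-faces bound for the complexity, and then feed the $O(k^2)$ bichromatic intersection points from Chan's algorithm into Wang's many-faces construction. The extra bookkeeping you flag about representative points, perturbation, and unbounded faces is a reasonable elaboration of details the paper leaves implicit (by deferring to the setup of Lemma~\ref{lem:construct_valid_regions_outputsensitive}) and does not change the argument.
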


\subsection{An algorithm for solving the \texorpdfstring{$k$}{k}-mis
  MinMax problem}
\label{sub:computing_opt}

We now show how, given the valid regions, we can compute an optimal
separator $\sopt \in S_k(B\cup R)$ efficiently. We start by
constructing $L_0(R)$ and $L'_0(B)$, and simultaneously scan through
them to construct the MinMax curve $s^*_{\max}$. This takes
$O(n \log n)$ time~\cite{bookBerg}. By Lemma \ref{lem:optimum} an
optimal separator is of type $a,b,c,$ or $d$. So, we will now compute
all these candidate optima, and iterate through them to find the one
with lowest error.

\subparagraph{Type a points.} Since we are given $S_k(B\cup R)$, we
can simply scan through its vertices, keeping track of the vertex with
the smallest error. To calculate the error of a vertex, we need to
know which segment of $s^*_{\max}$ it lies above/below; then the error
can be calculated in $O(1)$ time. We can compute this in $O(\log n)$
time per vertex using binary search (since MinMax is
$x$-monotone). Hence, this step takes $O(|S_k(B\cup R)|\log n)$ time.

\subparagraph{Type b and c points.} Recall type $b$ points are MinMax
vertices, and type $c$ points are the first valid points above or
below MinMax vertices. We construct the \emph{trapezoidal
  decomposition} of $S_k(B\cup R)$ in $O(|S_k(B\cup R)|\log n)$ time,
so that we can support $O(\log n)$ time point location
queries~\cite{sarnak1986planarLocation}. Each trapezoidal cell has
vertical left and right sides, and up to four neighboring cells each, see Figure
\ref{fig:staticBCD}.

For each vertex of MinMax we perform one point location query, which
tells us what trapezoidal cell the vertex lies in. If this cell is
inside a valid region, the vertex is a type $b$ point. Otherwise the
closest valid edges vertically above and below this vertex are simply
the edges bounding that trapezoid, giving us up to two type $c$
points. Since MinMax has $O(n)$ vertices, this gives us all type $b$
and $c$ points in $O(n \log n)$ time. Including the time to build the
decomposition, this thus takes $O(|S_k(B\cup R)| + n) \log n)$ time.

\begin{figure}[tb]
    \centering
    \includegraphics{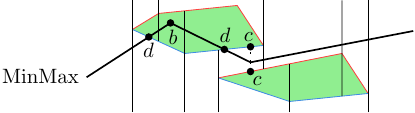}
    \caption{Two valid faces with their vertical decomposition and type $b$, $c$ and $d$ points.}
    \label{fig:staticBCD}
\end{figure}

\subparagraph{Type d points.} Recall type $d$ points are intersections
between MinMax and edges bounding $S_k(B\cup R)$. In particular, for
every MinMax edge we care only about its outermost intersection
points. We show how to find the leftmost intersection with a valid
boundary for each segment of MinMax, and can find the rightmost
intersection symmetrically.

We walk along MinMax from left to right through the vertical
decomposition of $S_k(B\cup R)$. We start at the leftmost vertex $v$,
and find what trapezoid contains $v$ in $O(\log n)$ time. The cell has
constant complexity, so we can calculate in $O(1)$ time on which side
the edge of MinMax immediately to the right of $v$ leaves the
cell. There are three options. Either (i) the edge of MinMax ends in
the current trapezoid, and thus produces no type $d$ point, and we
move to the next vertex, (ii) it intersects the edge bounding the cell
on the top or bottom, in which case we have found our leftmost type
$d$ point and we move to the next vertex, or (iii) it moves to an
adjacent cell on the right without intersecting a valid edge, in which
case we continue to walk in the adjacent cell. Since MinMax is
$x$-monotone this procedure enters case (iii) at most once per
trapezoid. There are $O(|S_k(B\cup R)|)$ trapezoids, so we spend
$O(|S_k(B\cup R)|)$ time on the walk. We also perform one point
location query per MinMax vertex, resulting in a total time of
$O(|S_k(B\cup R)| + n\log n)$ time.

\begin{lemma}
  \label{lem:compute_optimal}
Given $S_k(B\cup R)$, we can compute a separator
  $\sopt = \argmin_{s \in S_k(B \cup R)} \Max(s)$ in $O((|S_k(B\cup R)|+n)\log n)$ time.
\end{lemma}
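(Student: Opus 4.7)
The plan is to exploit Lemma~\ref{lem:optimum}, which classifies the optimal separator $s^*_{\opt}$ into four types \enumit{a, b, c, d}. I would generate all candidates of each type, compute $\Max$ for each, and return the best. The bottleneck is organizing the search so each type costs at most $O((|S_k(B\cup R)|+n)\log n)$.

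First I would construct the MinMax curve $s^*_{\max}$ explicitly. Since $s^*_{\max}$ is vertically midway between the lower envelope $L_0(R^*)$ and upper envelope $L'_0(B^*)$ (Lemma~\ref{lem:MinMax}), and both envelopes can be built in $O(n\log n)$ time, a simultaneous left-to-right scan over the $O(n)$ breakpoints yields the $O(n)$-vertex $x$-monotone chain $s^*_{\max}$ in $O(n\log n)$ time. Alongside, I would construct a trapezoidal decomposition of the given $S_k(B\cup R)$ in $O(|S_k(B\cup R)|\log n)$ time, supporting $O(\log n)$-time point location~\cite{sarnak1986planarLocation}, and preprocess $s^*_{\max}$ for $O(\log n)$ vertical-slab queries (binary search on $x$).

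For type \enumit{a} candidates I would iterate over all vertices $v$ of $S_k(B\cup R)$; for each $v$, locate the MinMax edge above/below $v$ via binary search on $x$, which determines $\Max(v)$ in $O(1)$. For type \enumit{b} and \enumit{c}, I would point-locate each of the $O(n)$ MinMax vertices in the trapezoidal decomposition: if the trapezoid is inside a valid region, the vertex itself is a type \enumit{b} candidate; otherwise, the top and bottom sides of the trapezoid give up to two type \enumit{c} candidates. Together these two steps cost $O((|S_k(B\cup R)|+n)\log n)$.

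The main obstacle is type \enumit{d}: for every MinMax edge I need only its outermost intersection with the boundary of $S_k(B\cup R)$, but naively intersecting each MinMax edge with all valid edges would be too expensive. I would resolve this by walking $s^*_{\max}$ from left to right through the trapezoidal decomposition while maintaining the current trapezoid. At each MinMax vertex $v$, one point location (or a constant-time neighbor step) determines the trapezoid; the next MinMax edge either stays inside, terminates, hits the top/bottom side (yielding the desired leftmost type \enumit{d} point on that edge), or exits through a vertical side into the neighbouring trapezoid. Because $s^*_{\max}$ is $x$-monotone, each trapezoid is entered through its vertical sides at most a constant number of times, so the total charge over all trapezoids is $O(|S_k(B\cup R)|)$, plus $O(n\log n)$ for the per-vertex point locations. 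A symmetric right-to-left sweep yields the rightmost type \enumit{d} point per edge. Finally, scanning all computed candidates and returning the one of minimum error completes the algorithm within the claimed $O((|S_k(B\cup R)|+n)\log n)$ bound.
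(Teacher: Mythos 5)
Your proposal matches the paper's proof essentially step for step: the same MinMax-curve construction, the same binary search on the $x$-monotone MinMax curve for type \enumit{a} candidates, the same trapezoidal decomposition with point location of MinMax vertices for types \enumit{b} and \enumit{c}, and the same left-to-right (and symmetric right-to-left) walk of the MinMax curve through the trapezoids, charged $O(|S_k(B\cup R)|)$ via $x$-monotonicity, for the outermost type \enumit{d} intersections. The approach and running-time analysis are correct and coincide with the paper's.
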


By combining Lemma~\ref{lem:compute_optimal} with the algorithm of
Lemma~\ref{lem:construct_valid_regions_outputsensitive} we thus
immediately obtain an $O((n+|S_k(B\cup R)|+n+k^3)\log^2 n)$ time
algorithm to compute $s_\opt$. Similarly, plugging in
Lemma~\ref{lem:construct_valid_regions_tightk}, we obtain an
$O(k^{4/3} n^{2/3} \log n + (n + k^2) \log n)$ time
algorithm for when $k=k_{\min}$. Plugging in our simple $O(nk+n\log n)$
time algorithm, that constructs the overlay $\A_k(R^*,B^*)$ of
$L_{\leq k}(R^*)$ and $L'_{\leq k}(B^*)$ to find
$S_k(B \cup R)$, we would obtain $O((nk + n)\log n)$ time. We can slightly reduce this to $O(nk+n\log n)$ time as
follows. These results together than thus establish
Theorem~\ref{thm:2d_algorithm}.

\begin{lemma}
  \label{lem:optimal_simple}
  We can compute a separator $s_\opt \in S_k(B \cup R)$ minimizing
  $\Max$ in $O(nk + n\log n)$ time.
\end{lemma}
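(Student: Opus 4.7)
The plan is to accelerate the algorithm of Lemma~\ref{lem:compute_optimal} so that every candidate-enumeration step runs in time linear in the complexity of the overlay $\A_k(R^*,B^*)$, replacing every $O(\log n)$-time point location or binary search by a coordinated left-to-right sweep.

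First I would invoke Lemma~\ref{lem:construct_simple} to construct $\A_k(R^*,B^*)$ and mark each face valid or invalid in $O(nk+n\log n)$ time. Since the faces of $\A_k$ are convex and $|\A_k|=O(nk)$ by Lemma~\ref{lem:slabsComplexity}, the vertical decomposition of $\A_k$ can be obtained in an additional $O(nk)$ time, which also yields the vertices of $\A_k$ in $x$-sorted order. In parallel I compute $L_0(R^*)$, $L'_0(B^*)$, and the MinMax curve in $O(n\log n)$ time; MinMax is an $x$-monotone polygonal chain with $O(n)$ segments.

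Next I precompute all intersections of MinMax with the edges of $\A_k$. By Lemma~\ref{lem:lineKIntersections}, each MinMax segment (viewed as a line segment) has $O(k)$ intersections with $L_{\leq k}(R^*)\cap L'_{\leq k}(B^*)$, for a total of $O(nk)$ such intersections. Because every concave chain in the decomposition of $L_{\leq k}(R^*)$ is itself $x$-monotone, a single chain of complexity $c_i$ can be merged against the $O(n)$ MinMax segments in $O(c_i+n)$ time, producing their intersections in $x$-sorted order along the chain. Summed over the $O(k)$ concave chains covering $L_{\leq k}(R^*)$ and the $O(k)$ convex chains covering $L'_{\leq k}(B^*)$, this costs $O(nk+\sum_i c_i)=O(nk)$ time.

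I then run a single left-to-right sweep that merges three $x$-sorted streams---vertices of $\A_k$, vertices of MinMax, and MinMax--$\A_k$ edge intersections---while maintaining both the current MinMax edge and the current cell of $\A_k$ that MinMax is traversing. Each event is processed in $O(1)$ time. At a vertex of $\A_k$ lying on the boundary of a valid face I compute its vertical distance to the current MinMax edge and retain the best type~\enumit{a} candidate. At a vertex of MinMax I check its containing cell: if that cell is valid, the vertex is a type~\enumit{b} candidate; otherwise the vertical decomposition of the cell provides the bounding edges directly above and below, giving up to two type~\enumit{c} candidates. At a MinMax--$\A_k$ intersection with an edge bounding a valid face I record a type~\enumit{d} candidate, keeping only the leftmost and rightmost such intersection per MinMax edge. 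Once the sweep finishes, Lemma~\ref{lem:optimum} guarantees that $s^*_\opt$ is among the recorded candidates, and I output the one with smallest $\Max$ value. The total running time is $O(nk+n\log n)$.

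The main technical obstacle is producing the three event streams in $x$-sorted order without any extra logarithmic factor. This is resolved by (i) extracting the $x$-sorted vertex list of $\A_k$ from its vertical decomposition of a convex planar subdivision, and (ii) using the chain-wise monotone merge of MinMax against the concave/convex chain decompositions to obtain the MinMax--$\A_k$ intersections directly in sorted order.
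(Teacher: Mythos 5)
Your overall strategy---eliminating the logarithmic point-location/binary-search factors of \cref{lem:compute_optimal} by coordinated left-to-right scans over the overlay $\A_k(R^*,B^*)$ of \cref{lem:construct_simple}---is the same idea the paper uses, but two steps of your sweep do not hold up. The most serious one is the type~\enumit{c} step. You take, at each MinMax vertex, the edges of the containing cell of $\A_k(R^*,B^*)$ directly above and below as the type~\enumit{c} candidates. Inside $L_{\leq k}(R^*)\cap L'_{\leq k}(B^*)$ the cells of $\A_k(R^*,B^*)$ are single faces of the arrangement $R^*\cup B^*$, so when the face containing the MinMax vertex is invalid its vertical neighbours are typically invalid too: the first valid point directly above (or below) the vertex is where $\Mis$ first drops to at most $k$, which can be many faces away. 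Your recorded candidates are then invalid points, and the true type~\enumit{c} point is never generated; since by \cref{lem:optimum} the optimum can be exactly such a point (strictly inside a valid edge, not a vertex of $S_k(B\cup R)$ and not on MinMax), the algorithm can return a suboptimal separator. This is precisely why the static algorithm of \cref{lem:compute_optimal} point-locates in the trapezoidal decomposition of $S_k(B\cup R)$, whose trapezoids are bounded by \emph{valid} edges, and why the paper's $O(nk)$ version instead inserts MinMax \emph{and a vertical line through every MinMax vertex} into the overlay with Finke--Hinrichs: each such vertical line meets only $O(k)$ red and $O(k)$ blue lines, so the complexity stays $O(nk)$, and the first valid point above/below every MinMax vertex becomes an explicit vertex of the refined subdivision. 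Your proposal needs some substitute for this (e.g.\ an explicit vertical walk through the $O(k)$ marked faces above and below each MinMax vertex).

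The second problem is the event ordering for your single global sweep. You claim that the vertical decomposition of $\A_k(R^*,B^*)$ ``also yields the vertices of $\A_k$ in $x$-sorted order'' in $O(nk)$ time, but a trapezoidal decomposition carries no global $x$-order; producing the $\Theta(nk)$ vertices sorted by $x$ is a sorting task costing $O(nk\log n)$ directly, and even merging the $O(k)$ per-level or per-chain $x$-sorted streams with a heap costs $O(nk\log k)$, both exceeding the $O(nk+n\log n)$ budget. This is exactly why the paper does not run one global sweep: for the type~\enumit{a} candidates it performs $O(k)$ \emph{pairwise} scans, traversing MinMax simultaneously with the $i$-level of $R^*$ for each $i\le k$ (and with the $(n-i)$-levels of $B^*$), so that each scan merges two curves that are already $x$-monotone and $x$-ordered; MinMax is traversed $O(k)$ times and every edge of $\A_k(R^*,B^*)$ once, giving $O(nk)$ total. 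A smaller, fixable omission is that at a MinMax--chain crossing you must still identify the concrete edge and incident faces of $\A_k(R^*,B^*)$ (or maintain $\Mis$ along MinMax by a counter) to test validity in $O(1)$ time; the ``current cell'' bookkeeping is asserted but not established. With the type~\enumit{c} step repaired and the global sweep reorganized into per-level merges, your argument essentially becomes the paper's proof.
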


\begin{proof}
  Rather than binary searching on MinMax to find the edge of MinMax
  below every vertex $v$, we do the following. For each $i \in 1..k$, we
  simultaneously scan through MinMax, and the $i$-level of $R^*$. Both
  MinMax and the $i$-level are $x$-monotone curves. Furthermore, since
  $i \leq k$, we can trace the $i$-level by walking in
  $\A_k(R^*,B^*)$. We do the same for every $n-i$ level of $B^*$. We scan
  through MinMax at most $2k$ times, and we visit every edge of
  $\A_k(R^*,B^*)$ at most once. Thus over all levels, this takes at most
  $O(nk)$ time.

  To compute the type $c$ and $d$ points, we explicitly insert MinMax as well
  as a vertical line through every vertex of MinMax into the overlay
  $\A_k(R^*,B^*)$ of $L_{\leq k}(R^*)$ and $L'_{\leq k}(B^*)$. See
  Figure~\ref{fig:verticalSlabs}. The complexity remains $O(nk)$:
  Every vertical line intersects at most $k$ red, and at most $k$ blue
  lines, and thus adds at most $O(nk)$ vertices. Similarly, within a
  ``slab'' between two consecutive such vertical lines, an edge of
  MinMax can intersect at most $O(k)$ lines. Using Finke and Hinrichs'
  algorithm~\cite{mapoverlay} this takes $O(nk)$ time, and the resulting arrangement clearly contains all type $c$ and $d$ points.
\end{proof}

\section{Maintaining $\Max$ for $k_{\min}$ under restricted insertions}
\label{sec:insertion_only_tight_k}

We want to maintain a valid separator $\sopt$ minimizing $\Max(\sopt)$ under insertions. This turns out to be difficult, so we consider the case when $k = k_{\min}$ (so for a tight $k$), and we impose two restrictions on the insertions:

\begin{enumerate}
    \item The convex hulls of $R$ and $B$ do not change. This has two consequences in the dual: MinMax does not change, and the error of any fixed point does not change.
    \item $k_{\min}$ does not increase after an insertion, i.e. not all valid cells are made invalid by an insertion. This means the valid regions only ever decrease in size.
\end{enumerate}

For each type $a,b,c,d$ (as defined in Lemma \ref{lem:optimum}) we separately maintain an
optimal solution of that type; the overall optimum is then of course
one of these four. For each type we actually maintain a set of
candidate optimal points, and store them in a min-heap with
their error $\Max$ to maintain the point with lowest error.
The type $c$ and especially $d$ vertices are the bottleneck. We obtain the following:

\begin{theorem}
\label{thm:dynamic_tight_k}
Given an initial set of $n$ points $P = B \cup R$, we can build a data structure that maintains an optimal separator $\sopt$ minimizing $\Max(\sopt)$ with $\Mis(\sopt) = k = k_{\min}$ under $m = \Omega((\frac{k^{1/3}}{n^{1/12 + \eps}} + \frac{n^{1/4 - \eps}}{k})\log^{6+\eps})$ restricted insertions in $O(k n^{3/4 + \eps})$ amortized insertion time, using $O((k^{4/3} n^{2/3}+n) \log^5 n)$ space.
\end{theorem}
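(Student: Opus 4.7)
The plan is to maintain, separately for each of the types $a, b, c, d$ from Lemma~\ref{lem:optimum}, a min-heap keyed by $\Max$ over a set of candidate optima, and to return the minimum of the four heap-tops (after ignoring those that are no longer valid). Restriction~1 says MinMax is static and the error of every fixed dual point is unchanged, so each heap key is fixed once computed. Restriction~2 says the valid region $S_k(B\cup R)$ only ever shrinks, so type~$a$ and type~$b$ candidates only disappear, while type~$c$ and type~$d$ candidates change only because a previously-valid boundary piece that they depended on has been destroyed.

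\textbf{Initialization and the easy types.} I would first construct $S_k(B\cup R)$ and its trapezoidal decomposition via Lemma~\ref{lem:construct_valid_regions_tightk} and the machinery of Section~\ref{sub:computing_opt}, and populate the four heaps with the initial candidates. Upon insertion of a new point (a dual line $\ell$), walk $\ell$ through the trapezoidal decomposition of $S_k$ to locate the valid faces it now cuts into invalid sub-faces. For type~$a$, remove from the heap the destroyed vertices and, for each intersection of $\ell$ with the current valid boundary, insert a new type~$a$ candidate whose error is computed in $O(\log n)$ via binary search against the static MinMax chain. For type~$b$, delete any MinMax vertex that now lies on the newly invalid side of $\ell$; since MinMax is $x$-monotone and $\ell$ crosses it in $O(1)$ points, the affected vertices lie in a contiguous $x$-range and can be found by a range query on a static auxiliary structure over the MinMax vertices.

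\textbf{Types $c$ and $d$ (the bottleneck).} For type~$c$, for each MinMax vertex $v$ whose stored nearest-valid-boundary segment above or below was just destroyed, I would re-establish the nearest valid boundary via a vertical ray-shooting query in $S_k$. I use the near-linear-space data structure mentioned in the introduction, which stores the upper and lower envelopes of the boundary edges of $S_k$ as surfaces and supports $O(n^{3/4+\eps})$ vertical ray-shooting queries under partial rebuilds. For type~$d$, for each MinMax edge $e$ whose stored outermost intersections were destroyed, I recompute the leftmost and rightmost intersections of $e$ with the current boundary of $S_k$; these are segment-curve intersection queries answered using a partition-tree-based structure with the same $O(n^{3/4+\eps})$ query bound, where the $\log^5 n$ factor in the space bound absorbs the layered secondary structures needed to simultaneously support these queries and the vertical-ray queries on a dynamically shrinking boundary of complexity $O(k^{4/3}n^{2/3}+n)$.

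\textbf{Cost analysis and main obstacle.} Each invalidation or recomputation costs $O(n^{3/4+\eps})$. A charging argument using restriction~2 bounds the total number of candidate changes over $m$ insertions: each boundary element of $S_k$ is destroyed at most once, the red and blue concave/convex chain decompositions of the $\leq k$-levels can shrink only $O(k)$ times along any vertical direction, and each destroyed element triggers $O(1)$ amortized type~$c$/type~$d$ recomputations via the incidences between MinMax and the boundary of $S_k$. Multiplying by the per-query cost gives the stated $O(kn^{3/4+\eps})$ amortized insertion time, and the space is dominated by the augmented boundary decomposition plus the partition and ray-shooting structures, yielding $O((k^{4/3}n^{2/3}+n)\log^5 n)$. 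The main obstacle — and where restrictions~1 and~2 are essential — is type~$d$: simultaneously detecting exactly which MinMax edges lose their outermost valid-boundary intersection and efficiently recomputing those intersections on a boundary that is both geometrically complex and dynamically shrinking; the lower-bound $m = \Omega(\cdot)$ on the number of updates in the statement is exactly the threshold at which the cost of building these heavy partition/ray-shooting structures amortizes against the gains from reusing them across many cheap insertions.
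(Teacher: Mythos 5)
Your high-level framework (four heaps keyed by a static $\Max$, restrictions 1 and 2 doing the work of keeping keys fixed and the valid region shrinking) matches the paper, and your treatment of types $a$ and $b$ is essentially correct. But for types $c$ and $d$ — which you correctly identify as the bottleneck — your mechanism runs in the wrong direction and would not work as stated. You propose, for each affected MinMax vertex or edge, to shoot a ray (or intersect a segment) into the \emph{current, shrinking} $S_k(B\cup R)$ to recover the nearest valid boundary. That requires a dynamic sub-linear-query ray-shooting / segment-intersection structure on $S_k$, which the paper never builds and which is far from given: the ``near-linear space lower envelope of surfaces with sub-linear vertical ray-shooting'' advertised in the introduction is a \emph{static} structure built over the MinMax vertices and edges, not over the dynamic boundary of $S_k$.

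The paper's actual trick is to invert the query. It does not look for the first valid point near a MinMax vertex at all: Lemma~\ref{lem:typec_cprime} shows that among \emph{any} valid points vertically above/below MinMax vertices (type $c'$), the one of minimum error is automatically the first one, so it suffices to track the best type-$c'$ (resp.\ type-$d$) candidate on each valid \emph{edge}. For each new or shortened valid edge $e$ (there are $O(k)$ per insertion by Lemma~\ref{lem:lineKIntersections}), the paper queries a \emph{static} data structure built over the MinMax vertices/edges: each MinMax vertex $v$ defines a piecewise-linear wedge-shaped error function $E_v$, and each MinMax edge $s$ defines a degree-4 semi-algebraic error function $E_s$, and the query asks for the lower envelope of these error functions at the point dual to the supporting line of $e$. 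This is exactly what Section~\ref{sub:lowerEnvelopeQueries} (the polynomial-partition-tree structure with $O(n^{1-1/d_2+\eps})$ query time) is for, and restriction~1 is what makes it legal to treat this structure as static. Without the $c'$-equals-$c$ reduction and the per-valid-edge bookkeeping, the $O(k n^{3/4+\eps})$ bound does not follow, and your charging argument (``each destroyed element triggers $O(1)$ amortized recomputations'') has no mechanism backing it up.
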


In \cref{sub:lowerEnvelopeQueries} we first show how to answer lower envelope queries on some set of functions; we will use this result for finding the type $d$ optimum. In the following four sections we consider the insertion of a red line $r$, and show how to maintain an optimal type $a$, $b$, $c$, and $d$ point respectively. Inserting a blue line is handled similarly.

\subsection{Lower-envelope queries in a set of functions}
\label{sub:lowerEnvelopeQueries}
Let
$g(x_0, \dots x_{d_1}, a_0, \dots a_{d_2}) = g(x,a)$ be a
$(d_1 + d_2)$-variate function, for some constants $d_1$ and $d_2$, such that $g(x,a) \leq c$ can be written as a $(d_1 + d_2)$-variate polynomial with constant degree and integer coefficients. Recall that a \emph{semi-algebraic set} is obtained by boolean operations on a collection of such functions, so $g(x,a) \leq c$ itself forms a semi-algebraic set too. In particular it is a constant complexity semi-algebraic set, since it is formed of a constant number (one) of constant-variate, constant-degree functions. Let such a function $g(x,a)$ be an \emph{admissible} function.

We refer
to $x$ as the \emph{variables} and $a$ as the \emph{parameters} of
this function. If we fix a parameter vector $a \in \R^{d_2}$ we get a
$d_1$-variate function $f_a(x) = g(x,a)$. Let $F$ be a set of such
functions, i.e. for each function $f \in F$ there exists some
parameter vector $a_f$ such that $f(x) = g(x, a_f)$. We call
$\R^{d_2}$ the \emph{parameter space}, and observe that each function
$f \in F$ corresponds to a point $a_f \in \R^{d_2}$. Similarly each
point $a \in \R^{d_2}$ corresponds to a function $f_a(x) = g(x,a)$. We
wish to perform lower envelope queries in $F$: given a point
$p \in \R^{d_1}$ we wish to find the lower envelope of $F$ at $p$,
that is, we wish to find the lowest value $\hat{\delta}$ for which
there still exists a function $f \in F$ such that
$f(p) \leq \hat{\delta}$.

\subparagraph{Ranges in parametric space.}
For a given point $p \in \R^{d_1}$ and value $\delta \in \R$, consider the equation $g(p,a) \leq \delta$. All the $a$'s are free variables, and thus this describes a region or \emph{range} $\Gamma_p(\delta)$ in $\R^{d_2}$: a point $a \in \R^{d_2}$ lies in $\Gamma_p(\delta)$ if and only if $f_a(p) \leq \delta$.

\begin{lemma}
  \label{lem:growingRange}
  For any fixed point $p \in \R^{d_1}$ and any two values
  $\delta_1,\delta_2 \in \R$ with $\delta_1 \leq \delta_2$, it holds that
  $\Gamma_p(\delta_1) \subseteq \Gamma_p(\delta_2)$.
\end{lemma}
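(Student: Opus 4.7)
The plan is to prove this directly by unfolding the definition of $\Gamma_p(\delta)$ and applying transitivity of $\leq$. The statement is essentially a monotonicity observation: enlarging the threshold $\delta$ can only enlarge the set of parameter vectors whose associated function lies below that threshold at $p$.

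Concretely, I would fix $p \in \R^{d_1}$ and $\delta_1 \leq \delta_2$, and take an arbitrary $a \in \Gamma_p(\delta_1)$. By definition of the range, this means $g(p,a) \leq \delta_1$. Since $\delta_1 \leq \delta_2$, transitivity of $\leq$ immediately yields $g(p,a) \leq \delta_2$, which is exactly the condition for $a \in \Gamma_p(\delta_2)$. As $a$ was arbitrary, this gives $\Gamma_p(\delta_1) \subseteq \Gamma_p(\delta_2)$.

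There is no real obstacle here; the lemma holds purely on the definition of $\Gamma_p(\delta)$ and requires no assumption about $g$ being admissible, semi-algebraic, or of constant complexity. Those assumptions are relevant for the surrounding algorithmic machinery (for applying range-searching data structures in $\R^{d_2}$), but not for the set-theoretic inclusion itself. The main value of stating this lemma explicitly is to justify later that, as $\delta$ grows, the ranges $\Gamma_p(\delta)$ form a nested family, which is what one needs for a parametric-search- or binary-search-style lower-envelope query: one can search for the smallest $\delta$ such that $\Gamma_p(\delta)$ still contains a point of the parameter set corresponding to $F$, knowing the containment property is monotone in $\delta$.
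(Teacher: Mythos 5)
Your proof is correct and takes essentially the same route as the paper's: unfold the definition of $\Gamma_p(\delta)$ and apply transitivity of $\leq$. (In fact your version is slightly more careful — the paper's proof writes the strict inequalities $\delta_1 < \delta_2$ and $f_a(p) < \delta_2$ where $\leq$ is what is both assumed and needed.)
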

\begin{proof}
For every point $a \in \Gamma_{\ell}(\delta_1)$ there is a corresponding function $f_a$ s.t. $f_a(p) \leq \delta_1$. Since $\delta_1 < \delta_2$, clearly then $f_a(p) < \delta_2$ also holds, so $a \in \Gamma_p(\delta_2)$, and thus $\Gamma_p(\delta_1) \subseteq \Gamma_p(\delta_2)$. 
\end{proof}

It can be helpful to think of $\delta$ as `time', where $\delta$
starts at $-\infty$ and continuously increases until $\infty$. By
Lemma \ref{lem:growingRange} the range $\Gamma_p(\delta)$ grows in
size as $\delta$ increases; indeed as $\delta$ increases the number of
functions $f \in F$ for which $f(p) \leq \delta$ increases.

For a point $a \in \R^{d_2}$ let $\delta_a$ be such that $a$ lies exactly on the boundary of $\Gamma_p(\delta_a)$. See Figure \ref{fig:specialDeltaValues}. Given $p$ and $a$ we can compute $\delta_a$, and thus the range $\Gamma_p(\delta_a)$, in constant time, by simply solving the equation $f_a(p) = \delta_a$, in which every parameter is known except $\delta_a$.

Let $A = \{a_f \mid f \in F\}$ be the set of parameters (points in
$\R^{d_2}$) corresponding to our functions $F$. Let $\hat{a} = \argmin_{a \in A}\delta_a$, i.e.\ the first point to be contained in the range $\Gamma_p(\delta)$ as $\delta$ increases, and note that none of the other points in $A$ are contained in $\Gamma_p(\delta)$. This point $\hat{a}$ is the point we are looking for, as it corresponds to the function $f_{\hat{a}}$ with the lowest value $f_{\hat{a}}(p) = \delta_{\hat{a}}$ at point $p$.

\begin{figure}
    \centering
    \includegraphics{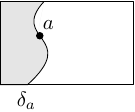}
    \caption{The $\delta$ value $\delta_a$ for a point $a$.}
    \label{fig:specialDeltaValues}
\end{figure}

\subparagraph{The data structure. }
We build a polynomial partition tree $T$ on the points in
$A$~\cite{agarwal2013semialgRange2} (to be precise, we use the
'boundary-fuzzy' variant, which uses symbolic perturbation to ensure
the input set $A$ is not degenerate). Given a point $p \in \R^{d_1}$ and a value $\delta$, we could then construct the range $\Gamma_p(\delta)$ and query the partition tree: if the range contains any point $a \in A$ then there exists a function $f_a$ with $f_a(p) \leq \delta$, if the range is empty such a function does not exist. We could then use parametric search to find the lowest value $\delta$ for which such a function exists, but this would add at least a $\log n$ factor in the query time, which we can avoid as we show below.

The partition tree $T$ recursively partitions the set of points $A$ (see
Figure \ref{fig:partitionTree} for an illustration for when
$d_2=2$). We start at the root node, where the set $A$ is partitioned
into $t = O(r)$ subsets $A_1 \dots A_t$ using an $r$-partitioning
polynomial $h$, for some large but constant value $r$. The polynomial
$h$ partitions the space into \emph{cells} $C_1 \dots C_t$ (shown
in black in the figure). For each cell $C_i$ we get a
\emph{representative point} $p(C_i)$ that lies in $C_i$ (note that
$p(C_i)$ is not generally a point from $A$). We create a child $u_i$
for each subset $A_i$, for which we recursively build a subtree; the
second level of cells is shown in red in Figure \ref{fig:partitionTree}. Once a node
contains only a constant number of points, we create a leaf node, and
the recursion stops.

\begin{figure}
    \centering
    \includegraphics{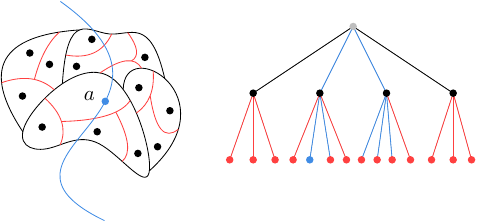}
    \caption{A polynomial partition tree of a set of points, with two levels. }
    \label{fig:partitionTree}
\end{figure}

Consider a cell $C$ and a range $\Gamma_p(\delta)$. We say that $\Gamma_p(\delta)$ \emph{contains} $C$ if $C \subseteq \Gamma_p(\delta)$, and $\Gamma_p(\delta)$ \emph{crosses} $C$ if $C \nsubseteq \Gamma_p(\delta)$ and $C \cap \Gamma_p(\delta) \neq \emptyset$. The most important property of the partition tree is that, for any internal node $u$ of the tree, any range $\Gamma_p(\delta)$ crosses at most $c r^{1 - 1/d_2}$ children of $u$, where $c$ is a constant independent of $r$~\cite{agarwal2013semialgRange2}.

\subparagraph{The query algorithm. } 
Recall that $\hat{a} = \argmin_{a \in A} \delta_a$.
Figure \ref{fig:optimalRange} shows $\Gamma_p(\delta_{\hat{a}})$, and the nodes in the partition tree it crosses.

\begin{figure}
    \centering
    \includegraphics{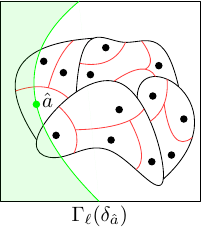}
    \caption{The range $\Gamma_p(\delta_{\hat{a}})$ through $\hat{a}$. }
    \label{fig:optimalRange}
\end{figure}

\begin{lemma}
\label{lem:containCellContainP}
Let $\Gamma_p(\delta)$ be any range. If $\Gamma_p(\delta)$ contains a cell $C$, then it also contains $\hat{a}$.
\end{lemma}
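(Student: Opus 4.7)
The plan is to argue via the monotonicity property already established in Lemma~\ref{lem:growingRange}, together with the fact that each cell of the partition tree is non-empty (as we only create a child node for a non-empty subset $A_i$ of $A$).

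First I would pick an arbitrary point $a \in A \cap C$; such a point exists by the construction of the partition tree. Since $C \subseteq \Gamma_p(\delta)$ by hypothesis, we have $a \in \Gamma_p(\delta)$, which by the definition of $\delta_a$ (the value at which $a$ first enters the range) gives $\delta_a \leq \delta$.

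Next I would use the definition of $\hat{a} = \argmin_{a' \in A}\delta_{a'}$ to conclude $\delta_{\hat{a}} \leq \delta_a \leq \delta$. Applying Lemma~\ref{lem:growingRange} with these two $\delta$-values then yields $\Gamma_p(\delta_{\hat{a}}) \subseteq \Gamma_p(\delta)$. Since $\hat{a}$ lies on the boundary of $\Gamma_p(\delta_{\hat{a}})$, and hence in $\Gamma_p(\delta_{\hat{a}})$ itself, we obtain $\hat{a} \in \Gamma_p(\delta)$, which is exactly the claim.

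The proof is short and contains no real obstacle; the only subtle point is to explicitly justify that $C$ contains at least one point of $A$, which is why I would make that observation explicitly up front. Everything else is a direct consequence of the monotone growth of $\Gamma_p(\cdot)$ and the minimality of $\delta_{\hat{a}}$.
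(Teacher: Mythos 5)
Your proposal is correct and follows essentially the same route as the paper's proof: use the non-emptiness of each cell to extract a point $a \in C$ with $\delta_a \leq \delta$, then combine the minimality of $\delta_{\hat{a}}$ with the monotonicity from Lemma~\ref{lem:growingRange} to conclude $\hat{a} \in \Gamma_p(\delta)$. The only difference is presentational (you spell out the inclusion $\Gamma_p(\delta_{\hat{a}}) \subseteq \Gamma_p(\delta)$ explicitly, while the paper invokes the equivalence $a \in \Gamma_p(\delta) \Leftrightarrow \delta_a \leq \delta$ directly), which changes nothing of substance.
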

\begin{proof}
It follow from Lemma \ref{lem:growingRange} that the range $\Gamma_p(\delta)$ contains a points $a$ if and only if $\delta_a \leq \delta$. Every cell contains at least one point, by construction of the partition tree, and thus $C$ contains some point $a$ with $\delta_a \leq \delta$. Since $\delta_{\hat{a}} \leq \delta_a$, point $\hat{a}$ is also contained in $\Gamma_p(\delta)$.
\end{proof}

Consider some internal node $u$ with children $u_1 \dots u_t$ (recall that $t = O(r)$), associated with cells $C_1 \dots C_t$. Recall that $p(C_i)$ is a representative point inside cell $C_i$, which is computed and stored during preprocessing for every cell. Let $\Gamma_p(\delta_i)$ be the surface through representative point $p(C_i)$. We sort the children of $u$ by increasing $\delta_i$ value and renumber them, such that $\delta_1 \leq \delta_2 \dots \leq \delta_t$. By Lemma \ref{lem:growingRange}, this also means that $\Gamma_p(\delta_1) \subseteq \Gamma_p(\delta_2) \dots \subseteq \Gamma_p(\delta_t)$.

Let $i'$ be the lowest integer such that $\Gamma_p(\delta_{i'})$ fully contains any cell, and let the contained cell be $C_j$. For a given range $\Gamma_p(\delta)$ we can compute all cells contained in it using the \texttt{A2} operation from~\cite{agarwal2013semialgRange2}. Since we have $O(r)$ cells, and we chose $r$ to be a sufficiently large constant, we can perform this operation for each representative range and thus find $i'$ in $O(1)$ time total. 

From Lemma \ref{lem:containCellContainP} we immediately obtain:

\begin{corollary}
\label{cor:rangeContainsP}
The range $\Gamma_p(\delta_{i'})$ contains $\hat{a}$.
\end{corollary}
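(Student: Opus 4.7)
The plan is to apply Lemma~\ref{lem:containCellContainP} directly. By the very definition of $i'$, the range $\Gamma_p(\delta_{i'})$ fully contains some cell $C_j$ of the current node. Instantiating Lemma~\ref{lem:containCellContainP} with $C = C_j$ then immediately yields $\hat{a} \in \Gamma_p(\delta_{i'})$, which is exactly the statement of the corollary.

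No substantive work is required, since all the content has already been packaged in Lemma~\ref{lem:containCellContainP}. If desired, one can unpack the one-line argument as follows: the polynomial partition tree is built so that every cell contains at least one input point of $A$, so pick any $a \in C_j \cap A$. The containment $C_j \subseteq \Gamma_p(\delta_{i'})$ implies $a \in \Gamma_p(\delta_{i'})$, which by the definition of $\delta_a$ gives $\delta_a \leq \delta_{i'}$. Since $\delta_{\hat{a}} = \min_{a' \in A} \delta_{a'} \leq \delta_a$, the monotonicity of the family of ranges established in Lemma~\ref{lem:growingRange} places $\hat{a}$ inside $\Gamma_p(\delta_{i'})$ as well.

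There is no real obstacle: the corollary is a restatement of Lemma~\ref{lem:containCellContainP} at the particular value $\delta_{i'}$, chosen because it will be the $\delta$-value we can actually detect in $O(1)$ time per node using the \texttt{A2} operation. The role of the corollary in the subsequent query algorithm is to certify that, even though we only ever test representative points of the children, the optimal parameter $\hat{a}$ is guaranteed to lie in a range that we examine, so we will not overlook it when we recurse.
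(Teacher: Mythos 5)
Your proposal is correct and matches the paper exactly: the corollary is stated as an immediate consequence of Lemma~\ref{lem:containCellContainP} applied to the cell that $\Gamma_p(\delta_{i'})$ contains by the definition of $i'$, which is precisely your argument. Your optional unpacking also mirrors the paper's proof of that lemma (every cell contains a point of $A$, so $\delta_{\hat{a}} \leq \delta_a \leq \delta_{i'}$, and Lemma~\ref{lem:growingRange} gives the containment).
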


We can now prove the last ingredient of our algorithm:

\begin{lemma}
\label{lem:surfaceBoundedCells}
The number of cells of the partition at a node $u$ crossed by or contained in range $\Gamma_p(\delta_{i'}))$ is bounded by $c'r^{1 - 1/d_2}$, where $c'$ is a constant independent of $r$.
\end{lemma}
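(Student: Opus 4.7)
The plan is to split the count into two groups, namely the cells of the partition at $u$ that are crossed by $\Gamma_p(\delta_{i'})$ and the cells that are fully contained in $\Gamma_p(\delta_{i'})$, and bound each group separately by $O(r^{1 - 1/d_2})$. For the crossed group the bound is immediate: since $\Gamma_p(\delta_{i'})$ is a constant-complexity semi-algebraic set (it arises from a single admissible polynomial $g$ with all parameters fixed except $\delta$), the standard crossing property of polynomial partition trees~\cite{agarwal2013semialgRange2} already guarantees at most $cr^{1-1/d_2}$ crossed cells. So the real work is bounding the number of contained cells, and my plan is to reduce this back to a crossing count of a slightly smaller range.

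The key observation I intend to exploit is that any cell $C_j$ fully contained in $\Gamma_p(\delta_{i'})$ is, up to one exceptional cell, also crossed by the strictly smaller range $\Gamma_p(\delta_{i'-1})$. To make this precise I would argue: if $C_j \subseteq \Gamma_p(\delta_{i'})$ then $p(C_j) \in \Gamma_p(\delta_{i'})$, which by definition of $\delta_j$ means $\delta_j \leq \delta_{i'}$, so $j \leq i'$ in the sorted order. Using general position (which the boundary-fuzzy variant of the partition tree already provides by symbolic perturbation of $A$), the only cell realizing $\delta_j = \delta_{i'}$ is $C_{i'}$ itself, so every other contained cell satisfies $j < i'$ and therefore $p(C_j) \in \Gamma_p(\delta_j) \subseteq \Gamma_p(\delta_{i'-1})$ by \cref{lem:growingRange}. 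On the other hand, the minimality of $i'$ forces $C_j \not\subseteq \Gamma_p(\delta_{i'-1})$, so some point of $C_j$ lies outside this smaller range while $p(C_j)$ lies inside. That is precisely the statement that $\Gamma_p(\delta_{i'-1})$ crosses $C_j$.

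Once this is in hand the finish is mechanical. I apply the partition-tree crossing property a second time to $\Gamma_p(\delta_{i'-1})$ (which has the same semi-algebraic shape as $\Gamma_p(\delta_{i'})$, only with a different value of the single parameter $\delta$, so the constant-complexity hypothesis is inherited automatically) to conclude that at most $cr^{1-1/d_2}$ cells with $j < i'$ are contained. Adding the one extra possible cell $C_{i'}$ to account for the tie case $\delta_j = \delta_{i'}$, and then adding the $cr^{1-1/d_2}$ crossed cells of $\Gamma_p(\delta_{i'})$, yields an overall bound of $2cr^{1-1/d_2} + 1 \leq c'r^{1-1/d_2}$ for a constant $c'$ independent of $r$ (any $c' \geq 2c + 1$ works, once $r$ is taken large enough that $r^{1-1/d_2} \geq 1$). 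The edge case $i' = 1$ is trivial because no index $j < 1$ exists, so only the single cell $C_1$ can be contained. The main subtlety I expect to handle carefully is the tie case $\delta_j = \delta_{i'}$; if the symbolic perturbation is not enough to rule it out, the bookkeeping can be pushed into the constant $c'$ by charging ties to the crossed-count of $\Gamma_p(\delta_{i'})$ itself rather than to $\Gamma_p(\delta_{i'-1})$.
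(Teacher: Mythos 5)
Your proposal is correct and follows essentially the same route as the paper: bound the crossed cells by the standard crossing property, and bound the contained cells by showing each one (except possibly the $i'$-th) is crossed by $\Gamma_p(\delta_{i'-1})$, then apply the crossing property a second time. One small simplification you missed: the tie case $\delta_j = \delta_{i'}$ you guard against is actually impossible for a contained cell $C_j \neq C_{i'}$, because $p(C_j)$ lies in the interior of $C_j$ but on the boundary of $\Gamma_p(\delta_j)$, so $\Gamma_p(\delta_j)$ cannot contain $C_j$; combined with $C_j \subseteq \Gamma_p(\delta_{i'})$ this forces $\delta_j < \delta_{i'}$ strictly, so neither symbolic perturbation nor the $+1$ charge is needed, and the bound is $2cr^{1-1/d_2}$ outright.
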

\begin{proof}
Recall that cell $C_j$ is a cell contained in $\Gamma_p(\delta_{i'})$. Clearly the range $\Gamma_p(\delta_{j})$ through the representative point $p(C_j)$ cannot contain the cell $C_j$ itself, since $p(C_j)$ lies in the interior $C_j$. Therefore, $\Gamma_p(\delta_{j}) \subset \Gamma_p(\delta_{i'})$, and thus $\delta_j < \delta_{i'}$. Since we sorted the $\delta$ values of the children, it follows that $j < i'$, and thus $i' > 1$.

Recall that any range crosses at most $cr^{1 - 1/d_2}$ children of $u$, where $c$ is a constant independent of $r$. Clearly, the number of cells crossed by $\Gamma_p(\delta_{i'})$ is thus bounded by $cr^{1 - 1/d_2}$, and we only need to consider the cells contained in $\Gamma_p(\delta_{i'})$.

Consider the range $\Gamma_p(\delta_{i'-1})$, so the last range to not contain any cell. See Figure \ref{fig:cellsCrossedContainedBounded}. All ranges contained in $\Gamma_p(\delta_{i'})$ must be crossed by $\Gamma_p(\delta_{i' - 1})$. Assume towards a contradiction that this is not the case, so there is some cell $C_k$ that is contained in $\Gamma_p(\delta_{i'})$ but not crossed by $\Gamma_p(\delta_{i'-1})$. Then, by the applying the same reasoning as above to $\Gamma_p(\delta_{i'-1})$ and $\Gamma_p(\delta_{k})$, and to $\Gamma_p(\delta_{k})$ and $\Gamma_p(\delta_{i'})$, we have $\Gamma_p(\delta_{i'-1}) \subset \Gamma_p(\delta_{k}) \subset \Gamma_p(\delta_{i'})$ and thus $\delta_{i'-1} < \delta_k < \delta_{i'}$. This means $k$ must be an integer between $i'-1$ and $i'$, which is not possible.

So, all cells contained in $\Gamma_p(\delta_{i'})$ are crossed by $\Gamma_{\ell}(\delta_{i'-1})$, and thus the number of cells contained in $\Gamma_p(\delta_{i'})$ is also bounded by $cr^{1 - 1/d_2}$. Thus the total number of cells crossed by or contained in $\Gamma_p(\delta_{i'})$ is bounded by $2cr^{1 - 1/d_2} = c'r^{1 - 1/d_2}$, with $c' = 2c$.
\begin{figure}
    \centering
    \includegraphics{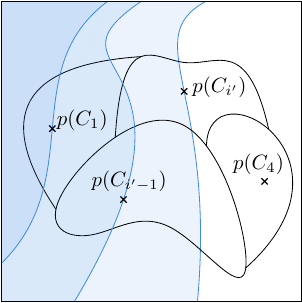}
    \caption{All cells contained in $\Gamma_p(\delta_{i'})$ are crossed by $\Gamma_p(\delta_{i'-1})$. }
    \label{fig:cellsCrossedContainedBounded}
\end{figure}
\end{proof}

The above two lemmas give rise to the following result.

\begin{lemma}
\label{lem:lowerEnvelopeD}
Let $G$ be an admissible $(d_1 + d_2)$-variate function $g(x,a)$ as
described above, let
$A \subset \R^{d_2}$ with $|A| = n$, and let
$F = \{ f_a(x) = g(x,a) \mid a \in A \}$. We can build a data
structure that, given a point $p \in \R^{d_1}$, can find the lower
envelope of the functions $F$ at $p$. This data structure takes $O(n)$
space, can be built in $O(n \log n)$ expected time, and answers
queries in $O(n^{1 - 1/d_2 + \eps})$ time.
\end{lemma}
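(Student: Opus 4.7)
The plan is to build a polynomial partition tree $T$ on the $n$-point set $A \subset \R^{d_2}$ with a large but constant fan-out $r$, using the boundary-fuzzy construction of Agarwal et al.~\cite{agarwal2013semialgRange2}. At every internal node $u$ we store the $O(r)$ cells $C_1,\dots,C_t$ produced by the $r$-partitioning polynomial, together with a representative point $p(C_i)$ in each cell. At every leaf we store the $O(1)$ points of $A$ that fell into it. The construction takes $O(n\log n)$ expected time and $O(n)$ space, matching the claim.

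To answer a query at $p\in\R^{d_1}$, I recursively descend $T$ while maintaining the current best candidate for $\hat{a}$. At an internal node $u$ I first compute, in constant time per child, the value $\delta_i$ such that $p(C_i)$ lies on the boundary of $\Gamma_p(\delta_i)$, and I sort the children by $\delta_i$. Using operation \texttt{A2} of~\cite{agarwal2013semialgRange2} I then identify the smallest index $i'$ for which $\Gamma_p(\delta_{i'})$ fully contains at least one cell; this takes $O(1)$ time because $r$ is constant. By Corollary~\ref{cor:rangeContainsP} the desired point $\hat{a}$ lies in $\Gamma_p(\delta_{i'})$. Therefore it lies either in a cell crossed by $\Gamma_p(\delta_{i'})$, or in a cell fully contained in $\Gamma_p(\delta_{i'})$; I recurse into all such children and discard the rest. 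At a leaf I evaluate $f_a(p)$ for each stored $a$ directly and update the running minimum. Returning this minimum at the end yields $\delta_{\hat a}$, i.e. the lower envelope value at $p$.

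The correctness hinges on never discarding the child whose subtree contains $\hat a$: any child whose cell is disjoint from $\Gamma_p(\delta_{i'})$ contains only points $a$ with $\delta_a > \delta_{i'} \geq \delta_{\hat a}$, so $\hat a$ cannot be among them. For the running time, Lemma~\ref{lem:surfaceBoundedCells} bounds the number of children we recurse into by $c' r^{1-1/d_2}$, where $c'$ is independent of $r$. Since each subtree has $\Theta(n/r)$ points, the query cost satisfies
\[
Q(n) \;\leq\; c' r^{1-1/d_2}\,Q(n/r) + O(r),
\]
and choosing $r$ a sufficiently large constant (depending on $\eps$) yields $Q(n)=O(n^{1-1/d_2+\eps})$ by the standard unrolling of this recurrence.

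The only real technical obstacle is the one already resolved by the preceding lemmas of the section: verifying that recursing exactly into the cells crossed by or contained in the representative range $\Gamma_p(\delta_{i'})$ does not lose $\hat a$ while still keeping the branching factor sublinear in $r$. Once Lemmas~\ref{lem:containCellContainP}--\ref{lem:surfaceBoundedCells} are invoked, the remainder of the proof is the routine combination of polynomial partition trees with a standard recurrence, and nothing additional is needed.
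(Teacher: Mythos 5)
Your proposal is correct and follows essentially the same route as the paper's proof: build the boundary-fuzzy polynomial partition tree on $A$, at each internal node use the representative ranges $\Gamma_p(\delta_i)$ and operation \texttt{A2} to find the first containing range $\Gamma_p(\delta_{i'})$, recurse into the cells crossed by or contained in it (justified by Corollary~\ref{cor:rangeContainsP} and Lemma~\ref{lem:surfaceBoundedCells}), and solve the resulting recurrence $Q(n) = c'r^{1-1/d_2}Q(n/r) + O(1)$ to get $O(n^{1-1/d_2+\eps})$ query time. The only cosmetic difference is the $O(r)$ rather than $O(1)$ additive term in the recurrence, which is immaterial since $r$ is a constant.
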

\begin{proof}
We transform the functions $F$ into a set of points $A$ in $\R^{d_2}$ and build a polynomial partition tree $T$ on $A$, as described above, which uses $O(n)$ space and $O(n \log n)$ expected time. We wish to find point $\hat{a} = \argmin_{a \in A} \delta_a$.

Consider a query with point $p$. We start at the root of $T$. If it is
a leaf we simply iterate through all $O(1)$ points $a$ stored in this
leaf, compute the surfaces $\Gamma_p(\delta_a)$ through them, and
maintain the smallest $\delta_a$ we encounter. If
it is an internal node we find the the first representative range
$\Gamma_p(\delta_{i'})$ to contain a cell as described above. By
Corollary~\ref{cor:rangeContainsP} this range contains
$\hat{a}$. Consider all the cells that are either contained in or
crossed by $\Gamma_p(\delta_{i'})$; clearly $\hat{a}$ lies in one of
these cells. Therefore we recursively search all children that are
contained in or crossed by $\Gamma_{\ell}(\delta_{i'})$. By Lemma
\ref{lem:surfaceBoundedCells} there are at most $c'r^{1 - 1/d_2}$ such
children, each containing at most $O(n/r)$ points.

An internal node contains a constant number of cells (since we choose $r$ to be a large but constant value), and a leaf node contains a constant number of points; therefore we only spend constant time per node we visit. Thus, the total query time $Q(n)$ of the above procedure obeys the following recurrence
\[
Q(n) = \begin{cases}
c'r^{1 - 1/d_2} Q(n/r) + O(1)    & \text{for an internal node} \\
O(n)                        & \text{for a leaf node.}
\end{cases}
\]

This recurrence solves to $Q(n) = O(n^{1 - 1/d_2 + \eps})$, as claimed.
\end{proof}

\subsection{Type a: valid vertices}
\label{sub:dynamicA}
We first consider maintaining the valid cells themselves. We explicitly store and maintain all valid vertices, in $O(k^{4/3} n^{2/3}+n)$ space. For each valid face we store the vertices in clockwise order in a linked list. After the insertion of red line $r$ all valid vertices above $r$ become invalid, and two new valid vertices appear at every valid cell $r$ intersects, see Figure \ref{fig:dynamicAFaces}. By Lemma \ref{lem:lineKIntersections} a line intersects $O(k)$ valid cells, so $O(k)$ new vertices are created. 

\begin{figure}[tb]
    \centering
    \includegraphics{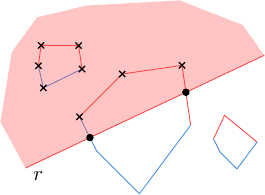}
    \caption{Vertices above $r$ become invalid, vertices below $r$ stay valid. New valid vertices are created at intersected edges.}
    \label{fig:dynamicAFaces}
\end{figure}

We maintain the valid vertices in a dynamic halfplane range reporting data structure~\cite{chanDynamicHalfplaneReporting}. This uses $O((k^{4/3} n^{2/3}+n) \log^5 n)$ space, can be updated in $O(\log^{6 + \eps} n)$ time (for any $\eps > 0$), and can report all $q$ points above a query halfplane in $O(\log n + q)$ time. (With newer techniques, these times can be slightly improved; however, since the type $a$ vertices are not the bottlneck anyway, this will suffice.)

When inserting line $r$ we perform one query to find all $q$ valid vertices $V$ above $r$ in $O(\log n + q)$ time. We first check if $r$ intersects any of the edges adjacent to these vertices $V$; each such intersection corresponds to a new valid vertex. We add these $O(k)$ new vertices to their respective linked lists in $O(k)$ time, and to the range reporting data structure in $O(k \log^{6 + \eps} n)$ time. Afterwards we remove all vertices in $V$ from their linked lists in $O(q)$ time, and from the range reporting data structure in $O(q\log^{6 + \eps} n)$ time. This yields a total insertion time of $O((k + q) \log^{6 + \eps} n)$.

\begin{lemma}
We can maintain the vertices of the valid faces under restricted insertions in $O((k + q) \log^{6 + \eps} n)$ time, where $q$ is the number of valid vertices made invalid by an insertion, using $O((k^{4/3} n^{2/3}+n)\log^5 n)$ space.
\end{lemma}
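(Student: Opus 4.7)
The plan is to implement exactly the scheme hinted at in the paragraph preceding the lemma, and to carefully justify the two nontrivial claims: that the set of newly created valid vertices has size $O(k)$, and that all such vertices can be \emph{discovered} in time proportional to $O(k+q)$ update cost rather than by scanning the whole arrangement. I would store two coupled structures: (i) for each valid face $F$, a doubly-linked list $\mathcal{L}(F)$ that holds the vertices of $F$ in clockwise order together with the bounding edges and their supporting lines; and (ii) a single global dynamic halfplane range-reporting structure (Chan~\cite{chanDynamicHalfplaneReporting}) over the set of all valid vertices. The space bound $O((k^{4/3}n^{2/3}+n)\log^5 n)$ follows from Lemma~\ref{lem:construct_valid_regions_tightk} together with Chan's structure; the linked lists only contribute $O(k^{4/3}n^{2/3}+n)$ space on top.

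To handle an insertion of a red line $r$, I first perform a single halfplane reporting query for the halfplane strictly above $r$, obtaining the set $V$ of all $q$ currently valid vertices that $r$ makes invalid. For every $v\in V$ I walk through $\mathcal{L}(F_v)$ where $F_v$ is the valid face containing $v$, and inspect the two edges of $F_v$ incident to $v$. Every edge of an old face that straddles $r$ contributes one new valid vertex at its intersection with $r$, and every valid face $F$ that $r$ actually crosses contributes exactly two such intersection points (since $F$ is convex, as each valid face is a single face of the line arrangement by Observation~\ref{obs:tight_k_valid_regions_are_faces}). Because under the restricted insertion hypothesis $k_{\min}$ does not increase and the hulls of $R$ and $B$ are unchanged, these two new vertices really do end up on the boundary of the surviving portion of $F$; thus splicing them into $\mathcal{L}(F)$ and dropping the vertices of $V\cap F$ between them correctly updates the face. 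By Lemma~\ref{lem:lineKIntersections}, $r$ intersects $L_{\leq k}(R^*)\cap L'_{\leq k}(B^*)$ in $O(k)$ points and hence crosses $O(k)$ valid faces, so at most $O(k)$ new valid vertices are produced.

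The only remaining subtlety is how to enumerate the valid faces crossed by $r$ without paying more than $O(k+q)$ time. Every face $F$ crossed by $r$ either (a) already contains a vertex that becomes invalid (so it is discovered by walking $\mathcal{L}(F)$ from some $v\in V\cap F$) or (b) is crossed by $r$ but has no invalid vertex; in case (b) $r$ enters and leaves $F$ through two edges without pushing any vertex above $r$, which under general position means $F$ already lies entirely on or below $r$ except for the small clipped region. I would argue via a standard case analysis on the clockwise order of $\mathcal{L}(F)$ that case (b) can only occur $O(1)$ times per connected component of $r\cap S_k(B\cup R)$, and each such component is encountered when we visit a neighbouring face through one of its edges inspected while processing $V$; discovering these faces via a second halfplane reporting query for the edges stabbed by $r$ also works and costs $O(\log n+k)$. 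Summing up: the single reporting query costs $O(\log n+q)$, traversing the linked lists costs $O(q+k)$, and inserting/deleting $O(k+q)$ vertices in Chan's structure costs $O((k+q)\log^{6+\eps} n)$, which dominates and establishes the bound.

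The main obstacle I anticipate is precisely the argument in the last paragraph: making sure that the faces in case (b) are actually found. If a simple stab-reporting query on the edge set is acceptable (which Chan's structure supports by duality, at the same asymptotic cost), this is clean; otherwise one must maintain a second structure for the edges, which still fits in $O((k^{4/3}n^{2/3}+n)\log^5 n)$ space and $O(\log^{6+\eps} n)$ per edge insertion/deletion, preserving the claimed bounds.
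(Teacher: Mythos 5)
Your proposal uses the same two data structures as the paper — per-face clockwise linked lists and Chan's dynamic halfplane range reporting on the valid vertices — and the same insertion procedure: one reporting query to find the $q$ vertices above $r$, inspect the edges incident to them to discover the new $r$-intersections, and then splice/delete in $O((k+q)\log^{6+\eps}n)$ time. So the core argument matches the paper.

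The one place you diverge is the worry about your ``case (b)'': a valid face $F$ crossed by $r$ with no vertex pushed above $r$, which you then handle with a second stabbing/halfplane query. This case cannot actually occur for the faces in question. In the tight-$k$ setting every valid region is a single face of the line arrangement (Observation~\ref{obs:tight_k_valid_regions_are_faces}) and hence convex. If $r$ properly crosses the interior of a bounded convex $F$, the portion of $F$ strictly above $r$ is a nonempty convex region whose boundary along $\partial F$ is a nondegenerate arc with both endpoints on $r$; that arc cannot be contained in a single edge of $F$ because a line meets a segment at most once, so the arc contains a vertex of $F$, i.e.\ some vertex of $F$ lies strictly above $r$ and is reported by the first query. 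Thus every face crossed by $r$ is already discovered through its invalid vertices, and the second query (whose availability via duality in Chan's structure you would also need to justify) is superfluous. The only residual edge case is unbounded faces, which is the usual bounding-box technicality and is independent of your case-(b) concern. In short: correct, same approach, with an unnecessary extra step born from not noticing that convexity already closes the gap.
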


Due to restriction 1, the error $\Max(p)$ of a fixed point $p$ can not change after an insertion. Thus we can additionally maintain a min-heap containing all valid vertices with their error, to maintain an overall optimal type $a$ point. Since updating a min-heap costs only $O(\log n)$ time, this does not increase the total insertion time.

\begin{lemma}
We can maintain an optimal type $a$ point under restricted insertions in $O((k + q) \log^{6 + \eps} n)$ time, where $q$ is the number of valid MinMax vertices made invalid by an insertion, using $O((k^{4/3} n^{2/3}+n)\log^5 n)$ space.
\end{lemma}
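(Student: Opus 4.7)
The plan is to augment the data structure of the preceding lemma with an auxiliary priority queue keyed by the $\Max$ value of each stored vertex. Under restriction~1 the convex hulls $\CH(R)$ and $\CH(B)$, and hence the MinMax curve, do not change throughout the sequence of insertions, so by \cref{lem:vert} the value $\Max(p)$ of any fixed dual point $p$ is itself invariant. Consequently, an error value needs to be computed only at the moment a vertex is born and never revised; the only operations the queue has to support are the insertions of the $O(k)$ new valid vertices created by an update, the deletions of the $q$ invalidated ones, and a single find-min at the end of each update.

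Concretely, I would maintain, alongside the dynamic halfplane range-reporting structure of the previous lemma, a handle-based priority queue (for instance a balanced binary search tree keyed on $\Max$, or a Fibonacci heap augmented with cross-pointers) containing all currently valid vertices. Each vertex carries a handle into the queue and each queue entry carries a pointer back to its vertex, so that when the previous data structure deletes an invalidated vertex it simultaneously removes the corresponding queue entry in $O(\log n)$ time. The minimum of the queue is then the current optimal type $a$ point, accessible in $O(1)$ time after each insertion.

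For each newly created valid vertex $v$ produced by the insertion of a red line $r$, I compute $\Max(v)$ in $O(\log n)$ time. Since the MinMax curve is a fixed $x$-monotone polygonal chain of complexity $O(n)$ (built once from $L_0(R^*)$ and $L'_0(B^*)$ in the preprocessing phase), I store it in a sorted array indexed by $x$-coordinate and locate by a single binary search the MinMax edge directly above or below $v$. The slope-dependent formula implicit in the proofs of \cref{lem:MinMax,lem:vert}, applied to this edge and to $v$, yields $\Max(v)$ in constant time, after which $v$ is inserted into the queue with that key.

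Summing over an insertion, the extra bookkeeping consists of $O(k)$ queue insertions and $O(q)$ queue deletions at $O(\log n)$ each, which is absorbed by the $O((k+q)\log^{6+\eps} n)$ bound already established for maintaining the vertex set; the queue adds only $O(k^{4/3}n^{2/3}+n)$ space, comfortably dominated by the $O((k^{4/3}n^{2/3}+n)\log^5 n)$ bound of the underlying structure. The only (modest) obstacle is purely mechanical: keeping the cross-pointers between vertices and their queue handles consistent as the range-reporting structure creates and destroys vertices, which is routine because each such creation or destruction is already an explicit event in the previous lemma.
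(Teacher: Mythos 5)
Your proposal is correct and matches the paper's argument: the paper likewise observes that restriction~1 freezes the MinMax curve and hence the error of any fixed point, and simply augments the vertex-maintenance structure of the preceding lemma with a min-heap of valid vertices keyed by $\Max$, whose $O(\log n)$-time updates are absorbed by the $O((k+q)\log^{6+\eps} n)$ insertion bound. Your extra detail on computing each new vertex's error by binary search over the $x$-monotone MinMax chain is the same mechanism the paper uses elsewhere for type~\enumit{a} candidates, so nothing differs in substance.
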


\subsection{Type b: MinMax vertex}

\begin{figure}[tb]
    \centering
    \includegraphics{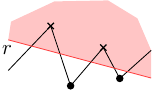}
    \caption{The MinMax vertices above $r$ become invalid. }
    \label{fig:dynamicB}
\end{figure}

By restriction 1 the MinMax curve does not change, nor does the error of its vertices. When inserting line $r$, all valid MinMax vertices above $r$ become invalid. We again build a dynamic halfplane reporting data structure on the valid
vertices, of size $O(n \log^5 n)$. (In fact, this data structure needs to support only deletions, not insertions, which can be handled more simply and quickly. However, since a fully dynamic data structure was already used above, we may as well re-use it.) For every insertion we perform one query in $O(\log n + q)$ time to find all $q$ vertices above $r$, and remove those from the
reporting data structure in $O(q \log^{6+\eps}n)$ time.

As before, we additionally maintain a min-heap containing all valid MinMax vertices with their error to maintain an optimal type $b$ point.

\begin{lemma}
We can maintain an optimal type $b$ point under restricted insertionsin $O(\log n + q \log^{6+\eps}n)$ time, where $q$ is the number of valid vertices made invalid by an insertion, time and $O(n \log^5 n)$ space.
\end{lemma}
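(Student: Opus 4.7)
The plan is to exploit both restrictions in a very direct way. Because the convex hulls of $R$ and $B$ are fixed (restriction~1), the MinMax curve and the set of its $O(n)$ vertices, together with the value $\Max$ at each such vertex, are determined once and for all at preprocessing time. Because $k_{\min}$ never increases (restriction~2), the set of valid points is monotonically shrinking: once a MinMax vertex becomes invalid due to an insertion, it stays invalid. Hence our only task is to maintain, under deletions, the set $V$ of currently valid MinMax vertices, and efficiently report the one of smallest $\Max$ value.

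At preprocessing I would compute $L_0(R^*)$ and $L'_0(B^*)$ and the resulting MinMax curve in $O(n\log n)$ time, then run the static algorithm of Section~\ref{sub:computing_opt} once to determine which MinMax vertices are valid initially and to compute their $\Max$ values. I would then maintain two structures on the current set $V$: (i) Chan's dynamic halfplane range reporting data structure~\cite{chanDynamicHalfplaneReporting} built on $V$, which uses $O(n\log^5 n)$ space, supports updates in $O(\log^{6+\eps}n)$ time, and answers halfplane reporting queries in $O(\log n + q)$ time; and (ii) a standard min-heap whose entries are the vertices of $V$, keyed by their (fixed) $\Max$ values.

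To process the insertion of a red line $r$, I query (i) with the halfplane above~$r$, retrieving the $q$ vertices in $V$ lying above $r$ in $O(\log n + q)$ time. By Lemma~\ref{lem:vert} and the definition of a valid point, every one of these vertices is now invalidated by~$r$ and no other MinMax vertex is affected (those below~$r$ still misclassify the same constraints as before plus no new ones from~$r$). I then delete each of these $q$ vertices from (i), at cost $O(q\log^{6+\eps}n)$, and from the min-heap (ii), at cost $O(q\log n)$. The insertion of a blue line is handled symmetrically by querying with the halfplane below the line. An optimal type~$b$ point is then simply the minimum of the min-heap, which is obtained in $O(\log n)$ time (or $O(1)$ if we just peek).

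The total insertion cost is therefore $O(\log n + q\log^{6+\eps}n)$, and the space is dominated by Chan's structure at $O(n\log^5 n)$. The only subtlety worth checking is that the reporting/deletion data structure genuinely needs to be fully dynamic here only in the sense of deletions, since $V$ monotonically shrinks; a deletion-only structure would suffice, but re-using Chan's fully dynamic structure (already employed for type~$a$) keeps the bounds uniform. No other obstacle arises, because the errors of fixed points do not change by restriction~1 and hence the heap keys remain correct throughout the update sequence.
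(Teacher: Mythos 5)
Your proposal is correct and matches the paper's own argument essentially verbatim: a dynamic halfplane range reporting structure on the (fixed) set of MinMax vertices to find the $q$ vertices invalidated by the inserted line, plus a min-heap keyed by the (unchanging, by restriction~1) errors, giving $O(\log n + q\log^{6+\eps}n)$ time and $O(n\log^5 n)$ space. Even your side remark that a deletion-only structure would suffice but that reusing the fully dynamic structure from the type~$a$ case keeps things uniform is exactly the observation the paper makes.
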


\subsection{Type c: the first valid point above/below a MinMax vertex}
We want to maintain the best type $c$ point, i.e.\ the \emph{first} valid point above or below a MinMax vertex. First we define a type $c'$ point to be \emph{any} valid point above or below a MinMax vertex, i.e. without the requirement to be the first.

\begin{lemma}
\label{lem:typec_cprime}
An optimal type $c'$ point is also an optimal type $c$ point.
\end{lemma}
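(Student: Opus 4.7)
The plan is to exploit Lemma~\ref{lem:vert}, which implies that, for any fixed $x$-coordinate $m$, the error $\Max$ of a valid point $s^*$ on the vertical line $x=m$ is monotonically increasing in its vertical distance to $s^*_{\max}(m)$. Both type $c$ and type $c'$ points live at $x$-coordinates of MinMax vertices, so they share the same set of possible slopes; the only difference is that a type $c$ point is required to be the \emph{first} (vertically closest) valid point above or below a MinMax vertex, whereas a type $c'$ point may be any valid point above or below one.

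First, I would take an optimal type $c'$ point $p^*$ and let $v^* = s^*_{\max}(m)$ be the MinMax vertex it lies above or below, where $m = p^*_x$. Without loss of generality, assume $p^*$ lies above $v^*$ (the symmetric case is identical). Let $q^*$ be the first valid point above $v^*$ on the vertical line $x=m$; by construction $q^*$ is a type $c$ point, and its vertical distance to $v^*$ is at most that of $p^*$.

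Next, I would apply Lemma~\ref{lem:vert} to the slope $m$: since the error of any valid point on $x=m$ grows monotonically with its vertical distance to $s^*_{\max}(m) = v^*$, we get $\Max(q^*) \le \Max(p^*)$. Because $p^*$ was an optimal type $c'$ point and every type $c$ point is also a type $c'$ point, $q^*$ is likewise an optimal type $c'$ point; being a type $c$ point, it realises the optimum among type $c$ points as well. Hence an optimal type $c'$ point is, or can be replaced by, an optimal type $c$ point, proving the lemma.

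I do not anticipate any real obstacle here: the monotonicity supplied by Lemma~\ref{lem:vert} does all the work, and the only minor point to verify is that the first valid point above (or below) $v^*$ indeed exists whenever \emph{some} valid point above (or below) $v^*$ exists, which is immediate from the fact that $S_k(B\cup R)$ is closed.
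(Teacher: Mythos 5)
Your argument is correct and is essentially the paper's proof: both rest entirely on the monotonicity of $\Max$ in vertical distance to MinMax supplied by Lemma~\ref{lem:vert} (the paper merely phrases it as a contradiction, taking a valid point $q$ strictly between $p$ and the MinMax vertex). One small nit: with only the non-strict inequality $\Max(q^*)\le\Max(p^*)$ you prove that the type $c$ optimum equals the type $c'$ optimum (your ``or can be replaced by''), whereas the lemma literally asserts that the optimal type $c'$ point \emph{is} a type $c$ point; to close this, note that the monotonicity is strict (the ratio $\rho(m)$ is fixed and positive), so if $p^*$ were not the first valid point, the intermediate valid point would have strictly smaller error, contradicting optimality---exactly the paper's argument.
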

\begin{proof}
Every type $c$ point is also a type $c'$ point, so we only have to prove that an optimal type $c'$ point is also a type $c$ point.

Let $p$ be an optimal type $c'$ point, and assume w.l.o.g. that it lies below a MinMax vertex. See Figure \ref{fig:typecprime}. Assume by contradiction that point $p$ is not a type $c$ point, i.e. $p$ is not the first valid point below a MinMax vertex: then there must be another valid point $q$ between $p$ and MinMax. This point $q$ must also be a type $c'$ point, since it is also vertically below a MinMax vertex. Point $q$ is vertically closer to MinMax than point $p$, so by Lemma \ref{lem:vert} we have $\Max(q) < \Max(p)$, which is a contradiction since we assumed $p$ to be an optimal type $c'$ point.
\end{proof}

\begin{figure}[tb]
    \centering
    \includegraphics{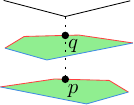}
    \caption{An optimal type $c'$ point is also a type $c$ point. }
    \label{fig:typecprime}
\end{figure}

\subparagraph{Geometry of type \texorpdfstring{$c'$}{c'} points.}

\begin{figure}[tb]
    \centering
    \includegraphics{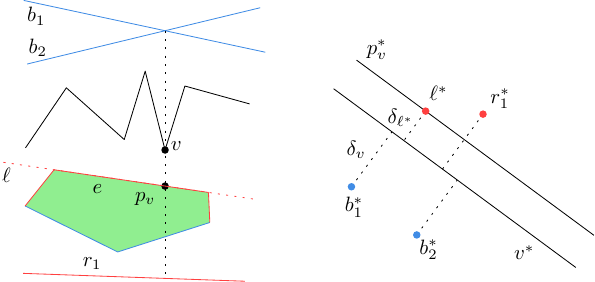}
    \caption{Left: a dual valid face with MinMax in black, a query line $\ell$, and a type $c'$ point $p_v$. Right: the dual of the left, so the primal.}
    \label{fig:dynamicC}
\end{figure}

Consider a MinMax vertex $v$, defined by (w.l.o.g.) two blue lines
$b_1$ and $b_2$ and a red line $r_1$. See Figure
\ref{fig:dynamicC}. Let $e$ be a valid edge below $v$ with supporting line $\ell$. Vertex $v$ projects a type $c'$ point $p_{v,\ell}$ on
$\ell$. We dualize everything, going `back' to the primal plane, as on the right side of the figure. Here, line $v^*$ is defined by points $b_1^*$ and
$b_2^*$ on one side and $r_1^*$ on the other, with equal distance
$\delta_v$ to all three points. Line $\ell$ dualizes to a point $\ell^*$, and the type $c'$ point $p_{v,\ell}$ dualizes to a
line $p^*_{v,\ell}$ through $\ell^*$ parallel to $v^*$. Let $\delta_{\ell} = \dist(\ell^*, v^*)$. Recall that the error $\Max(p^*_{v,\ell})$ of $p^*_{v,\ell}$ is the distance to its furthest misclassified point, which must be one of the points defining $v^*$, so $\Max(p^*_{v,\ell}) = \delta_v + \delta_{\ell^*}$. 

\begin{figure}[tb]
    \centering
    \includegraphics{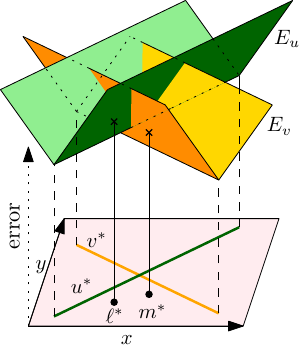}
    \caption{The error function for two lines $v_1^*$ and $v_2^*$, and the corresponding error of two points $\ell_1^*$ and $\ell_2^*$.}
    \label{fig:typeCFunction}
\end{figure}

For a fixed MinMax vertex $v$, each point $\ell^* \in \R^2$ defines a type $c'$ line $p^*_{v,\ell}$ as described above, with some error $\Max(p^*_{v,\ell})$. We define the bivariate error function $E_{v}(\ell^*) = \Max(p^*_{v,\ell})$. Since this error is a constant ($\delta_v$) plus the
distance to the line ($\delta_{\ell^*}$), $E_{v}(\ell^*)$ is piecewise linear. In particular, the graph of $E_{v}(l^*)$ is the union of two halfplanes at a $45^{\circ}$ originating from $v^*$, forming a \emph{wedge} as in Figure \ref{fig:typeCFunction}, starting at height $\delta_v$. For intuition, observe that all points $\ell^*$ on line $p^*_{v,\ell}$ would result in the same type $c'$ line $p^*_{v,\ell}$ and thus the same error, and moving $\ell^*$ a distance $d$ away from $v^*$ would indeed increase its error by $d$.

Fix a valid edge $e$ with supporting line $\ell$, and let $V$ be the set of MinMax vertices that define a type $c'$ point on $e$. With slight abuse of notation, let $E_V = \{ E_v | v \in V \}$ be the set of wedge-functions defined by vertices in $V$. The following lemma follows immediately from the definition of the error function:

\begin{lemma}
\label{lem:optimalPointLowerEnvelope}
Point $p_{\hat{v},\ell}$ is the type $c'$ point on $e$ with the lowest error, if and only if the function $E_{\hat{v}}$ has the lowest value at point $\ell^*$ among all functions $E_V$, i.e. $\hat{v} = \argmin_{v \in V} E_v(\ell^*)$.
\end{lemma}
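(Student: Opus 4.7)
The plan is to simply unfold the definitions; the lemma is essentially a tautology once one checks that the parameterization of type $c'$ points on $e$ by vertices $v \in V$ is bijective.

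First I would verify this bijection. By definition, a type $c'$ point on $e$ is a point of $e$ lying vertically above or below some MinMax vertex $v$. Given such a $v$, the vertical line through $v$ meets $\ell$ in a unique point, which is $p_{v,\ell}$. Conversely, if a point $p$ on $e$ is a type $c'$ point, then there is some MinMax vertex $v$ directly above or below it, and by the definition of $V$ (``the set of MinMax vertices that define a type $c'$ point on $e$'') we have $v \in V$ with $p = p_{v,\ell}$. Thus the map $v \mapsto p_{v,\ell}$ is a bijection from $V$ onto the set of type $c'$ points on $e$.

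Next I would invoke the definition of the error function. The text introduces $E_v$ by $E_v(\ell^*) = \Max(p^*_{v,\ell})$, and since errors of a separator and its dual agree (they refer to the same separating line), this equals $\Max(p_{v,\ell})$. Consequently,
\[
\min_{v \in V} E_v(\ell^*) \;=\; \min_{v \in V} \Max(p_{v,\ell}),
\]
and the right-hand side is, by the bijection above, precisely the minimum error among type $c'$ points on $e$. Hence $\hat v = \argmin_{v \in V} E_v(\ell^*)$ if and only if $p_{\hat v,\ell}$ realizes the minimum error among type $c'$ points on $e$, which is exactly the claim.

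There is no real obstacle here; the only subtle point is making sure the bijection is correctly set up so that one does not accidentally include MinMax vertices whose vertical projection onto $\ell$ falls outside the edge $e$ (these are excluded by the definition of $V$), and conversely does not miss any type $c'$ point on $e$ (every such point comes from some $v \in V$ by construction). Once this is observed the equivalence follows directly from the definitions, so this lemma serves mainly as a bridge between the geometric notion of an optimal type $c'$ point and the algorithmic lower-envelope query framework of Section~\ref{sub:lowerEnvelopeQueries}.
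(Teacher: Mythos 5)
Your proposal is correct and matches the paper's treatment: the paper states that this lemma ``follows immediately from the definition of the error function'' and gives no further proof, and your argument is precisely the definitional unfolding (the bijection between $V$ and the type $c'$ points on $e$, plus $E_v(\ell^*) = \Max(p_{v,\ell})$) that makes that immediacy explicit.
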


For example in Figure \ref{fig:typeCFunction}, the function $E_u$ has the lowest value at point $\ell^*$, so vertex $u$ defines the best type $c'$ point on $\ell$.

\subparagraph{Data structures.}

By Lemma \ref{lem:optimalPointLowerEnvelope}, we can find the best type $c'$ point on a given line $\ell$ by finding the error function with the lowest value at point $\ell^*$, i.e.\ by performing a \emph{lower envelope query} in a set of wedge-functions. Simply constructing the entire lower envelope and storing its projection on the $xy$-plane would allow for logarithmic time queries, but would take $O(n^2)$ time and space, and is therefore not feasible. We could use the data structure developed in \cref{sub:lowerEnvelopeQueries} to answer these queries, however we can do something faster in this case, since our functions $E_V$ have a rather simple shape.

Let $E_v$ be a wedge-function defined by line $v^*$. A point $\ell^*$ lies \emph{north of} $E_v$ if $\ell^*$ lies above $v^*$ (in the plane), and similarly $\ell^*$ lies \emph{south of} $E_v$ if $\ell^*$ lies below $v^*$. We first show how to perform restricted lower envelope queries, where all wedges lie on the same side of the query point, and then extend it to the non-restricted case. We use this to find the best type $c'$ point on a query edge, and finally maintain the overall best type $c$ point.

\begin{lemma}
\label{lem:queryTypeCEnvelopeAllAbove}
Let $V$ be a set of $n$ MinMax vertices. We can build a data structure that can answer lower envelope queries on the functions $E_V$ for query points $\ell^*$ that lie south of all wedges. This data structure has size $O(n)$ and answers queries in $O(\log n)$ time.
\end{lemma}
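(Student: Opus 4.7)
The plan is to reduce the lower envelope query on the wedge-functions to the classical lower envelope of planes in $\R^{3}$, exploiting that the precondition ``$\ell^{*}$ lies south of every wedge'' eliminates the absolute value in each distance function and makes every $E_{v}$ linear throughout the query region.

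First, I would parametrize each line $v^{*}$ by a unit normal $(a_{v},b_{v})$ pointing north (so $b_{v}>0$, with general-position perturbations handling vertical $v^{*}$) and a signed offset $d_{v}$, giving $v^{*}=\{(x,y):a_{v}x+b_{v}y=d_{v}\}$ and a south halfplane $\{(x,y):a_{v}x+b_{v}y<d_{v}\}$. For any $\ell^{*}=(x_{0},y_{0})$ in this halfplane, the Euclidean distance from $\ell^{*}$ to $v^{*}$ is $d_{v}-a_{v}x_{0}-b_{v}y_{0}$, so
\[
E_{v}(\ell^{*}) \;=\; \delta_{v}+d_{v}-a_{v}x_{0}-b_{v}y_{0},
\]
which is a linear function of $(x_{0},y_{0})$; its graph is a plane $\Pi_{v}\subset\R^{3}$. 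Because the query point is south of every $v^{*}$, the value of $\Pi_{v}$ at $\ell^{*}$ equals $E_{v}(\ell^{*})$ simultaneously for all $v\in V$, and by \cref{lem:optimalPointLowerEnvelope} the optimum $\hat{v}$ is the plane attaining the minimum at $\ell^{*}$.

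Next I would invoke the classical fact that the lower envelope of $n$ planes in $\R^{3}$ is the boundary of the intersection of $n$ upper halfspaces, a convex polyhedron of total complexity $O(n)$ computable in $O(n\log n)$ time via the standard duality with a 3D convex hull. Projecting this envelope vertically onto the $xy$-plane yields a convex planar subdivision of complexity $O(n)$; I would build a Kirkpatrick (or trapezoidal) point-location structure on it in $O(n)$ space and $O(n\log n)$ preprocessing, supporting queries in $O(\log n)$ time. Each face stores a pointer to the realizing plane $\Pi_{v}$, hence to the witnessing vertex $v$.

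The only subtlety, and the place to be careful, is the sign convention defining ``south'': it must be fixed globally (e.g.\ the strictly-smaller-$y$ halfplane, with symbolic perturbation against horizontal $v^{*}$) so that the signed perpendicular distance is a genuine linear function without an absolute value. Outside the common south region the plane envelope need not agree with $\min_{v}E_{v}$, but the lemma's precondition excludes those queries, so correctness is preserved. I do not expect any real obstacle here; the main conceptual step is recognizing that the $45^{\circ}$ wedge shape degenerates to a plane on one side and so $n$ wedges on the same side contribute only $O(n)$ envelope complexity rather than the $O(n^{2})$ one might fear from general wedges.
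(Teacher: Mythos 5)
Your proposal is correct and matches the paper's proof: since the query point lies south of every $v^*$, each wedge-function reduces to its (linear) south branch, and the problem becomes a lower envelope query over $n$ planes in $\R^3$, answered via the projected envelope with point location in $O(n)$ space and $O(\log n)$ query time. The extra details you give (convex-hull duality for computing the envelope, Kirkpatrick point location, and the sign convention) are just explicit instantiations of the standard result the paper cites.
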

\begin{proof}
Until now we have written $E_{v}(\ell^*) = \Max(p^*_{v,\ell}) = \delta_v + \dist(\ell^*, v^*)$, but we can make it more explicit. For a point $p: (p_x,p_y)$ and a line $m: ax + by + c = 0$ let $\widehat{\dist}(p,m) = |a p_x + b p_y + c| / \sqrt{a^2 + b^2}$ be the \emph{signed} distance between $p$ and $m$, i.e. it is positive if $p$ lies north of $m$ and negative otherwise. Note that, for a fixed $m$, this function is linear in $p$. Then:

$$E_{v}(\ell^*) = 
\begin{cases}
     f_{\text{north}}(\ell^*) = \delta_v + \widehat{\dist}(\ell^*, v^*)  & \text{if $\ell^*$ north of $v^*$} \\
     f_{\text{south}}(\ell^*) = \delta_v - \widehat{\dist}(\ell^*, v^*)  & \text{if $\ell^*$ south of $v^*$}
\end{cases}$$

For a fixed $v^*$, both $f_{\text{north}}$ and $f_{\text{south}}$ are linear in $\ell^*$. Since $\ell^*$ lies south of all wedges, we can simply take $E_v(\ell^*) = f_{\text{south}}(\ell^*)$, and the lower envelope at $\ell^*$ remains the same. Since $E_v(\ell^*)$ is a linear function now, we can build an $O(n)$ space data structure in $O(n \log n)$ time that allows lower envelope queries in $O(\log n)$  time (this corresponds to lower envelope queries on a set of planes in 3D, see e.g.~\cite{bookBerg}). 
\end{proof}

Analogously, the above lemma works for query points that lie north of all wedges.

We can extend the above data structure to handle general query points, that don't necessarily lie south of all wedges, albeit at a much higher query time:

\begin{lemma}
\label{lem:queryTypeCEnvelopeGeneral}
Let $V$ be a set of $n$ MinMax vertices. We can build a data structure that can answer lower envelope queries on the functions $E_V$ for any query point $\ell^*$. This data structure has size $O(n \log n)$, and answers queries in $O(\sqrt{n}\log n)$.
\end{lemma}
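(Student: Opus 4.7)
The plan is to reduce the general case to the restricted case of Lemma~\ref{lem:queryTypeCEnvelopeAllAbove} by partitioning the wedge set, based on the query point $\ell^*$, into two subcollections: those wedges $E_v$ for which $\ell^*$ lies south of the defining line $v^*$, and those for which $\ell^*$ lies north. On each such subcollection the restricted data structure applies, so it suffices to obtain both subcollections as a small number of precomputed canonical subsets and answer a restricted query on each.

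More concretely, I would build a two-level structure. At the top level, I place a Matou\v{s}ek partition tree on the set of lines $\{v^* \mid v \in V\}$ (via point-line duality) supporting halfplane range queries, so that for any query point $\ell^*$ the set $V^{\text{south}}(\ell^*) = \{v \mid \ell^* \text{ south of } v^*\}$ can be reported as the disjoint union of $O(\sqrt{n})$ canonical node sets $V_{u_1},\dots,V_{u_t}$; the complementary set $V^{\text{north}}(\ell^*)$ is obtained analogously. At each internal node $u$ of this partition tree, I attach, as a secondary structure, the $O(|V_u|)$-space $O(\log |V_u|)$-query-time data structure of Lemma~\ref{lem:queryTypeCEnvelopeAllAbove} built on $V_u$, in both its south-variant and north-variant form. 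Since a Matou\v{s}ek partition tree has depth $O(\log n)$ and each input point appears in exactly one node per level, $\sum_u |V_u| = O(n \log n)$, so the total space is $O(n \log n)$ as claimed.

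To answer a query with point $\ell^*$, I first use the partition tree to extract the $O(\sqrt{n})$ canonical subsets covering $V^{\text{south}}(\ell^*)$, and for each such $V_{u_i}$ I query its south-variant secondary structure in $O(\log n)$ time; I then do the symmetric thing for $V^{\text{north}}(\ell^*)$ using the north-variants. The global minimum over these $O(\sqrt{n})$ restricted-query answers equals $\min_{v \in V} E_v(\ell^*)$, since the union of the two collections of canonical subsets is exactly $V$, and inside each subset every wedge has $\ell^*$ on the promised side so the restricted structure returns a correct contribution. The query time is $O(\sqrt{n}\,\log n)$.

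The main obstacle I anticipate is purely bookkeeping: making sure the partition tree actually separates wedges by which side of $v^*$ holds $\ell^*$ (this is a standard halfplane query, so there is no real geometric difficulty, but one must be careful to treat $V^{\text{south}}$ and $V^{\text{north}}$ symmetrically and to have the two flavours of secondary structure ready at each node). Everything else — the $O(\sqrt{n})$ canonical subsets, the logarithmic secondary query, and the $\sum_u |V_u| = O(n\log n)$ space bound — follows from standard partition-tree properties combined with Lemma~\ref{lem:queryTypeCEnvelopeAllAbove}.
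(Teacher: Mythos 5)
Your proposal is correct and matches the paper's own argument almost exactly: a partition tree over the MinMax vertices (equivalently, the duals of the $v^*$), reporting each of $V^{\text{north}}(\ell^*)$ and $V^{\text{south}}(\ell^*)$ as $O(\sqrt{n})$ canonical subsets, each equipped with the restricted lower-envelope structure of Lemma~\ref{lem:queryTypeCEnvelopeAllAbove}, yielding $O(n\log n)$ space and $O(\sqrt{n}\log n)$ query time. Your explicit note that both a south-variant and a north-variant secondary structure must be stored at each node is a point the paper leaves implicit (it only says ``analogously'' after stating Lemma~\ref{lem:queryTypeCEnvelopeAllAbove}), and is indeed necessary since a canonical node can appear in either collection depending on $\ell^*$.
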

\begin{proof}
We build a partition tree~\cite{chanPartitionTrees} of size $O(n)$ on the MinMax vertices $V$. Each node $u$ has an associated canonical subset $V_u \subseteq V$ of points, and on each canonical subset we build the data structure from Lemma \ref{lem:queryTypeCEnvelopeAllAbove}. This uses $O(n \log n)$ space.

For query point $\ell^*$, let $V_n(\ell) \subseteq V^*$ and $V_s(\ell) \subseteq V$ be the set of MinMax vertices that $\ell$ lies north of, respectively that $\ell$ lies south of (by duality, $\ell^*$ thus lies south of all lines $V_n(\ell)^*$). We can query the partition tree to find $O(\sqrt{n})$ nodes representing $V_n(\ell)$, in $O(\sqrt{n})$ time. In each such node, we are certain that $\ell^*$ lies south of all wedges in its canonical subset, so we can query the associate data structure for the lower envelope at $\ell^*$ in $O(\log n)$ time. We similarly find and query the nodes representing $V_s(\ell)$. We maintain and return the lowest point. This takes $O(\sqrt{n} \log n)$ time.
\end{proof}

We can now use the above data structure to find the best type $c'$ point on an edge:

\begin{lemma}
\label{lem:typecEdgeOptimum}
We can build a data structure that, given a query edge $e$, can find the best type $c'$ point on $e$. This data structure has size $O(n \log^2 n)$, and answers queries in $O(\sqrt{n} \log n)$.
\end{lemma}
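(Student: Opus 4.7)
I would reduce the query to the data structure of Lemma~\ref{lem:queryTypeCEnvelopeGeneral} by superimposing a 1D range tree on the $x$-coordinates of the MinMax vertices. The first observation is that, given a query edge $e$ with supporting line $\ell$, a MinMax vertex $v$ actually projects a type $c'$ point \emph{on $e$} (rather than merely on the line $\ell$ through $e$) precisely when the $x$-coordinate of $v$ lies in the $x$-interval spanned by the endpoints of $e$; this uses that MinMax and $e$ are both $x$-monotone. So finding the best type $c'$ point on $e$ amounts to a lower envelope query over exactly those wedge-functions $E_v$ whose defining vertex $v$ has $v_x$ inside the $x$-range of $e$.

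\textbf{The structure.} I would build a balanced 1D range tree on the $x$-coordinates of the MinMax vertices $V$. At every node $u$, with canonical subset $V_u \subseteq V$, I would store a copy of the data structure of Lemma~\ref{lem:queryTypeCEnvelopeGeneral} built on $V_u$. Since that structure uses $O(|V_u|\log |V_u|)$ space and the canonical subsets at any fixed level of the range tree partition $V$, each level costs $O(n\log n)$ space; summing over the $O(\log n)$ levels yields $O(n\log^2 n)$ space in total, as required.

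\textbf{The query.} To answer a query for an edge $e$ with $x$-range $[x_1,x_2]$ and supporting line $\ell$, I would find the $O(\log n)$ canonical nodes whose subsets together exactly cover the MinMax vertices with $x$-coordinate in $[x_1,x_2]$, query each associated Lemma~\ref{lem:queryTypeCEnvelopeGeneral} structure at the point $\ell^*$, and return the minimum answer. Correctness follows directly from Lemma~\ref{lem:optimalPointLowerEnvelope}: the best type $c'$ point on $e$ is the one realizing the lower envelope of $\{E_v\}_{v \in V'}$ at $\ell^*$, where $V'$ is exactly the union of these canonical subsets.

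\textbf{Running time and the delicate step.} At level $i$ of the range tree, the query visits at most two nodes, each of canonical size $O(n/2^i)$, and each such query costs $O(\sqrt{n/2^i}\,\log(n/2^i))$ by Lemma~\ref{lem:queryTypeCEnvelopeGeneral}. Summing over levels gives a geometric series
\[
\sum_{i=0}^{\lceil\log n\rceil} 2\cdot\sqrt{n/2^i}\,\log n \;=\; O(\sqrt{n}\log n)\sum_{i\geq 0} 2^{-i/2} \;=\; O(\sqrt{n}\log n),
\]
which matches the claimed bound. The only nontrivial point---and what I expect to be the main obstacle---is precisely this composition: a naive analysis would multiply the per-node query cost by $\log n$ canonical nodes and yield $O(\sqrt{n}\log^2 n)$. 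Avoiding this extra logarithmic factor relies on the fact that the per-node query cost depends on $\sqrt{|V_u|}$, which shrinks geometrically down the tree, so the level-by-level summation absorbs the $\log n$ factor coming from the number of canonical nodes.
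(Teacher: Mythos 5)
Your proposal matches the paper's proof: both build a balanced binary search tree (range tree) on the MinMax vertices by $x$-coordinate, store a copy of the Lemma~\ref{lem:queryTypeCEnvelopeGeneral} structure at each canonical node, use the standard $O(n\log n)$ total canonical-subset size to get $O(n\log^2 n)$ space, and observe that the $O(\sqrt{|V_u|}\log|V_u|)$ per-node query costs decrease geometrically down the tree so that the total is $O(\sqrt n\log n)$ rather than the naive $O(\sqrt n\log^2 n)$. The paper phrases the final bound as the recurrence $Q(n)=Q(n/2)+O(\sqrt n\log n)$ while you write out the geometric sum explicitly, but these are the same argument.
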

\begin{proof}
We build a balanced binary tree on the MinMax vertices, sorted by $x$-coordinate, and on each canonical subset of the tree we build the data structure from Lemma \ref{lem:queryTypeCEnvelopeGeneral}. Since the associated data structure on $m$ vertices takes $O(m \log m)$ space, and the total size of all canonical subsets is $O(n \log n)$, this takes $O(n \log^2 n)$ space in total.

For a query edge $e$ with supporting line $\ell$, we query the binary tree for the $O(\log n)$ nodes representing the MinMax vertices in the $x$-interval of edge $e$. These are exactly the vertices that define a type $c'$ point on $e$, and we wish to find the one with the lowest error. We perform lower envelope queries at point $\ell^*$ using the associate data structures on each node, and by Lemma \ref{lem:optimalPointLowerEnvelope} this gives us the type $c'$ point with the lowest error. 

At a node $u$ containing $n$ MinMax vertices we thus spend $O(\sqrt{n}\log n)$ time for querying the associate data structure, and recurse into one child. The query time thus follows the  recurrence $Q(n) = Q(n/2) + O(\sqrt{n}\log n)$, which solves to $Q(n) = O(\sqrt{n} \log n)$.
\end{proof}

We can now finally maintain an overall optimal type $c'$ (and thus type $c$) point:

\begin{lemma}
\label{lem:typeCMaintain}
We can maintain an optimal type $c$ point under restricted insertions in $O(k \sqrt{n} \log n + q \log n)$ time, where $q$ is the number of valid vertices made invalid by an insertion, using $O(n\log^2 n)$ space.
\end{lemma}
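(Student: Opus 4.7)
The plan is to combine the static per-edge query structure of Lemma~\ref{lem:typecEdgeOptimum} with a global min-heap indexed by valid edges, while exploiting both restrictions to bound the work per insertion. By Lemma~\ref{lem:typec_cprime} every optimal type $c'$ point is also an optimal type $c$ point, so it suffices to maintain, for each valid edge $e$, the best type $c'$ point on $e$. Since restriction~1 keeps MinMax fixed, I would build the structure of Lemma~\ref{lem:typecEdgeOptimum} once on the static set of MinMax vertices; this is the only data structure accounted for in the $O(n\log^2 n)$ space bound, and it answers an edge query in $O(\sqrt{n}\log n)$ time. Alongside it, I maintain a global min-heap whose entries are, for each current valid edge, the best type $c'$ point on that edge together with its error $\Max$; the root of the heap is the current optimal type $c$ point.

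On the insertion of a red line $r$, I would reuse the bookkeeping already performed for type $a$ in Section~\ref{sub:dynamicA}, which identifies the $q$ valid vertices that become invalid and the $O(k)$ new vertices introduced along $r$. From this I extract the affected edges: those entirely above $r$ (destroyed), those crossed by $r$ (truncated, i.e.\ same supporting line but shrunken $x$-range), and the $O(k)$ brand-new edges along $r$ created inside the valid cells that $r$ enters. A destroyed edge has both endpoints invalid, so the number of destroyed edges is $O(q)$, while Lemma~\ref{lem:lineKIntersections} bounds the number of truncated-or-new edges by $O(k)$. For each destroyed edge I delete its heap entry in $O(\log n)$ time, contributing $O(q\log n)$ overall. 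For each truncated or new edge I query Lemma~\ref{lem:typecEdgeOptimum} on its current $x$-interval in $O(\sqrt{n}\log n)$ time and insert or replace the corresponding heap entry, contributing $O(k\sqrt{n}\log n)$ overall.

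The main obstacle I expect is correctly handling truncated edges: the previously optimal type $c'$ point on such an edge might lie in the cut-off portion above $r$, so a stale heap entry would be wrong. Restriction~1 makes the resolution clean, however, because the error $\Max$ of a fixed dual point does not change and MinMax itself is unchanged; therefore simply requerying Lemma~\ref{lem:typecEdgeOptimum} on the truncated $x$-interval yields the correct new best point, and this requery cost is absorbed into the $O(k)$ bound from Lemma~\ref{lem:lineKIntersections}. Restriction~2 additionally guarantees that valid regions only shrink, so no edge that is neither destroyed, truncated, nor newly created can gain or lose eligible MinMax vertices on its $x$-interval, and its heap entry remains correct without further work. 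Summing the two contributions gives the claimed $O(k\sqrt{n}\log n + q\log n)$ update time within $O(n\log^2 n)$ space.
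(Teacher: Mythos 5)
Your proposal is correct and follows essentially the same route as the paper: maintain the best type $c'$ point per valid edge via Lemma~\ref{lem:typecEdgeOptimum} queries, keep these in a global min-heap justified by Lemma~\ref{lem:typec_cprime}, delete the $q$ entries of destroyed edges in $O(q\log n)$ time, and requery the $O(k)$ new or shortened edges (bounded by Lemma~\ref{lem:lineKIntersections}) in $O(k\sqrt{n}\log n)$ time. Your extra discussion of stale entries on truncated edges and the role of restrictions 1 and 2 simply makes explicit what the paper handles implicitly by requerying shortened edges.
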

\begin{proof}
We will maintain the best type $c'$ point on every valid edge, and maintain a min-heap of all these local optima to maintain the global optimum. By Lemma \ref{lem:typec_cprime} the optimal type $c'$ maintained by the min-heap is also an optimal type $c$ point.

Unsurprisingly, we build the data structure of Lemma \ref{lem:typecEdgeOptimum}, using $O(n \log^2 n)$ space. 

Suppose we insert a red line $\ell$. As we saw when maintaining the valid faces in Section \ref{sub:dynamicA}, some valid edges disappear, some edges get shorter, some edges do not change, and some new edges on $\ell$ appear. We do not have to do anything with the edges that do not change, as the best type $c'$ point on that edge does not change. For each of the new or shortened edges we simply perform one query in the above data structure to find the (new) best type $c'$ point on the edge, and update or insert the entry in the min-heap. By Lemma \ref{lem:lineKIntersections} a line intersects $O(k)$ valid regions, so there are $O(k)$ new or shortened edges. These $O(k)$ queries take $O(k \sqrt{n} \log n)$ time. For the $q$ edges that disappear, we only have to remove their type $c'$ point from the min-heap in $O(\log n)$ time each. The total time for an insertion is $O(q \log n + k \sqrt{n} \log n)$.
\end{proof}

\subsection{Type d: intersection between MinMax and valid cells}
We want to maintain an optimal type $d$ point, i.e. an intersection between a MinMax edge and a valid edge. Our approach is very similar to how we maintained an optimal type $c$ point. We will create a data structure on the MinMax edges that, given a line $\ell$, can find the best type $d$ point on $\ell$. Within this data structure we again want to perform lower envelope queries on a set of error-functions. However these functions are higher dimensional, and their shape is more complicated, so we resort to using general techniques for semi-algebraic range searching. We use this data structure to maintain the best type $d$ point on each valid edge, as before.

\subparagraph{Geometry of type \texorpdfstring{$d$}{d} points. }
\begin{figure}
    \centering
    \includegraphics{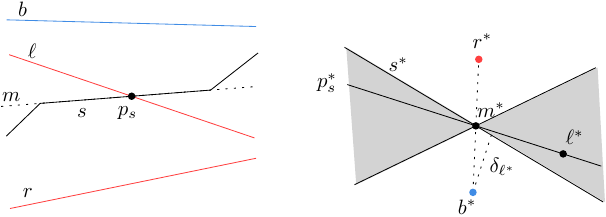}
    \caption{Left: a dual type $d$ point $p_s$, defined by a MinMax edge $e$ and a line $\ell$. Right: the dual of the left, so the primal.}
    \label{fig:dynamicD}
\end{figure}

\begin{figure}
    \centering
    \includegraphics{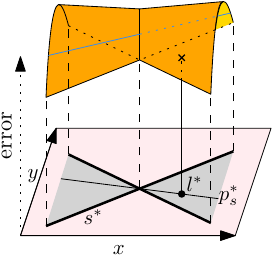}
    \caption{The error function of a wedge $s^*$, and the corresponding error of a point $\ell^*$. The blue line illustrates that this function forms a ruled surface. }
    \label{fig:typeDFunction}
\end{figure}

Consider a MinMax segment $s$ with supporting line $m$, defined by a blue line $b$ and a red line $r$. See Figure \ref{fig:dynamicD}. Let $\ell$ be a line intersecting $s$ in point $p_{s,\ell}$, our type $d$ point. We dualize everything, going 'back' to the primal plane, as on the right side of the figure. Here we have a  double wedge $s^*$ with apex $m^*$, defined by two points $r^*$ and $b^*$ at equal distance from $m^*$. Both $r^*$ and $b^*$ have equal distance to the left and right part of the double wedge $s^*$. Line $\ell$ dualizes to a point $\ell^*$ inside $s^*$, and the type $d$ point $p_{s,\ell}$ dualizes to a line $p_{s,\ell}^*$ through $l^*$ and $m^*$. Some algebra tells us $p_{s,\ell}^*: y = ax + c$, with $a = (m^*_y - \ell^*_y) / (m^*_x - \ell^*_x)$ and $c = \ell^*_y - \ell^*_x a$.

Let $\delta_{\ell} = \dist(b^*, p_{s,\ell}^*) = |a b^*_x - b^*_y + c| / \sqrt{a^2 + 1}$ (or equivalently $\delta_{\ell} = \dist(r^*, p_{s,\ell}^*)$). Recall that the error $\Max(p_{s,\ell}^*)$ of line $p_{s,\ell}^*$ is the distance to its furthest misclassified point, which is $r^*$ or $b^*$, so $\Max(p_{s,\ell}^*) = \delta_{\ell}$.

For a fixed MinMax segment $s$ defined by lines $r$ and $b$, each (primal) point $\ell^*$ defines a type $d$ line $p_{s,\ell}^*$ as described above, with some error $\Max(p_{s,\ell}^*)$. We define the bivariate error function $E_s(\ell^*) = \Max(p_{s,\ell}^*)$, see Figure \ref{fig:typeDFunction} for a schematic illustration. For intuition, observe that any point $\ell^*$ on $p_{s,\ell}^*$ would result in the same type $d$ line $p_{s,\ell}^*$ and thus the same error, and that moving $\ell^*$ towards the outside of wedge $s^*$ will decrease the error.

Let $S$ be the set of MinMax edges intersecting line $\ell$, and let $E_S = \{E_s | s \in S\}$ be the set of functions defined by edges in $S$. The following lemma, very similar to Lemma \ref{lem:optimalPointLowerEnvelope}, holds by definition of the error functions:

\begin{lemma}
\label{lem:typeDLowerEnvelopeBest}
Point $p_{\hat{s},\ell}$ is the type $d$ point on $\ell$ with the lowest error, if and only if function $E_{\hat{s}}$ has the lowest value at point $\ell^*$ among all functions in $E_S$, i.e. $\hat{s} = \argmin_{s \in S} E_s(\ell^*)$.
\end{lemma}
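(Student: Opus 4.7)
The plan is to prove the lemma by directly unfolding the definitions of type $d$ points and of the error function $E_s$; no new geometric observation is required, mirroring the structure of Lemma~\ref{lem:optimalPointLowerEnvelope}. First I would establish a bijection: by the definition of type $d$ points in Lemma~\ref{lem:optimum}, every type $d$ point lying on $\ell$ is the intersection of $\ell$ with some MinMax edge, and conversely each MinMax edge $s \in S$ that intersects $\ell$ determines the unique point $p_{s,\ell}$. So the set of candidate type $d$ points on $\ell$ is precisely $\{ p_{s,\ell} \mid s \in S \}$.

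Next I would invoke the definition of $E_s$ given just above the lemma, namely $E_s(\ell^*) = \Max(p_{s,\ell}^*)$. This is by construction the error of the type $d$ point corresponding to $s$, so the computation of $E_s$ at $\ell^*$ returns exactly the quantity being minimized. Combining the bijection with this identity, minimizing the error over all type $d$ points on $\ell$ is the same as minimizing $E_s(\ell^*)$ over $s \in S$, which gives both directions of the biconditional: $p_{\hat{s},\ell}$ is the type $d$ point on $\ell$ of lowest error if and only if $\hat{s} = \argmin_{s \in S} E_s(\ell^*)$.

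The only subtlety to watch is the primal-dual bookkeeping, since the point $\ell^*$ at which we evaluate the lower envelope lives in the primal plane (as the dual of the dual line $\ell$), whereas $p_{s,\ell}$ is the intersection of $s$ and $\ell$ in the dual plane and $p_{s,\ell}^*$ is the separating line it represents. Once these correspondences are laid out cleanly, the proof is a one-line unwinding of definitions and no further argument is needed.
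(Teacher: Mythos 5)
Your argument is correct and matches the paper, which gives no separate proof and simply observes that the lemma "holds by definition of the error functions": the bijection between MinMax edges $s \in S$ intersecting $\ell$ and candidate type $d$ points $p_{s,\ell}$, together with the identity $E_s(\ell^*) = \Max(p_{s,\ell}^*)$, is exactly the intended one-line unwinding. Your remark on keeping the primal--dual bookkeeping straight is a reasonable clarification but introduces nothing beyond the paper's setup.
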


We will thus want to perform lower envelope queries in the set of functions $E_S$.

\subparagraph{Using the lower-envelope data structure with our functions $E_S$.}
We will use the lower-envelope data structure developed in \cref{sub:lowerEnvelopeQueries}. For this, we need to show that our function $E_s$ is admissible. For the function $E_s(\ell^*)$ to be admissible, as defined in \cref{sub:lowerEnvelopeQueries}, we first need to write it in the form $g(x,a)$, i.e. separate the inputs into parameters and variables. Recall that $s$ is defined by two points $r^*$ and $b^*$, and thus $E_s(\ell^*)$ depends on six values: the coordinates of 2D points $r^*$, $b^*$ and $\ell^*$. We wish to find the lower envelope at a given point $\ell^*$. We can thus view $r^*$ and $b^*$ as parameters and $\ell^*$ as the variables, or in other words we can write $E_s = g(x,a)$ where $x = (\ell^*_x, \ell^*_y)$ and $a = (r^*_x, r^*_y, b^*_x, b^*_y )$, so $g(x,a)$ is a $(2 + 4)$-variate function. We now need to show that this function can be written as a constant degree polynomial with integer coefficients.

\begin{lemma}
The equation $E_s(p) = |a p_x - p_y + c| / \sqrt{a^2 + 1} \leq \delta$ can be written polynomial of degree $4$ in $r^*$, $b^*$, and $\ell^*$, with integer coefficients.
\end{lemma}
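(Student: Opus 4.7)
The plan is to clear the square root and the rational expressions from $E_s(p) \leq \delta$ by algebraic manipulations that preserve equivalence, and then read off the degree and integer-coefficient property.

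First, since both sides of $|a p_x - p_y + c|/\sqrt{a^2+1} \leq \delta$ are nonnegative, squaring is equivalence-preserving and gives $(a p_x - p_y + c)^2 \leq \delta^2 (a^2 + 1)$. Using $c = \ell^*_y - a \ell^*_x$, the expression $a p_x - p_y + c$ rewrites as $a(p_x - \ell^*_x) - (p_y - \ell^*_y)$. Writing $q := p - \ell^*$ and $d := m^* - \ell^*$ (so $a = d_y/d_x$) and multiplying by $d_x^2 \geq 0$, we obtain
\[
(d_y q_x - d_x q_y)^2 \leq \delta^2 (d_x^2 + d_y^2).
\]

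Second, I would eliminate the auxiliary point $m^*$ using $m^* = (r^* + b^*)/2$, so that $2 d_x = r^*_x + b^*_x - 2\ell^*_x$ and $2 d_y = r^*_y + b^*_y - 2\ell^*_y$ are integer linear forms in the six coordinates of $r^*, b^*, \ell^*$. Setting $D_x := 2 d_x$ and $D_y := 2 d_y$, the factor of $1/4$ introduced on both sides cancels, yielding the equivalent inequality
\[
(D_y q_x - D_x q_y)^2 \leq \delta^2 (D_x^2 + D_y^2),
\]
in which $D_x, D_y, q_x = p_x - \ell^*_x, q_y = p_y - \ell^*_y$ are all integer linear combinations of the six coordinates, so every coefficient is an integer. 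Each of $D_x, D_y, q_x, q_y$ has degree $1$ in the coordinates, so $(D_y q_x - D_x q_y)^2$ has degree $4$ and $\delta^2 (D_x^2 + D_y^2)$ has degree $4$ (counting $\delta$ as a variable of degree $1$); the overall polynomial inequality therefore has degree exactly $4$ with integer coefficients.

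The one point requiring care is that every manipulation preserves equivalence rather than just one direction of implication. Squaring is safe because both sides are nonnegative, and multiplying by $d_x^2 \geq 0$ preserves direction; the only remaining worry is the degenerate case $d_x = 0$, where the original expression is undefined. In that case the polynomial inequality reduces to $q_x^2 \leq \delta^2$, which is precisely the correct distance condition from $b^*$ to the vertical line through $\ell^*$ and $m^*$, so no spurious solutions are introduced and the polynomial inequality correctly describes $E_s(p) \leq \delta$ throughout. This case analysis is the only nontrivial aspect of the argument; the remainder is mechanical expansion.
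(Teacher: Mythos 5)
Your proposal is correct and follows essentially the same route as the paper: square away the $\sqrt{a^2+1}$, substitute $c=\ell^*_y-a\ell^*_x$, clear the denominator of $a$ by multiplying with $(m^*_x-\ell^*_x)^2$, and then substitute $m^*=(r^*+b^*)/2$, yielding a degree-$4$ polynomial inequality. Your packaging via $(D_y q_x - D_x q_y)^2 \leq \delta^2(D_x^2+D_y^2)$ with $D=2d$ is in fact a little tidier than the paper's expanded form (it makes the integer-coefficient claim explicit, where the paper leaves factors of $\tfrac{1}{2}$ in place) and the check of the $d_x=0$ case is a welcome extra, but these are presentational refinements rather than a different argument.
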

\begin{proof}

$$E_s(p) = \frac{|a r_x - r_y + c|} {\sqrt{a^2 + 1}} \leq \delta$$

Move square root to other side:

$$|a r_x - r_y + c| \leq \delta \sqrt{a^2 + 1} $$

Square:

$$(a r_x -r_y + c)^2 \leq \delta^2 (a^2 + 1) $$

Working out the square:

$$a^2 r_x^2 + r_y^2 + c^2 - a r_x r_y + a r_x c - r_y c \leq \delta^2 a^2 + \delta^2 $$

Substituting $c = p_y - p_x * a$:

$$a^2 r_x^2 + r_y^2 + (p_y - p_x a)^2 - a r_x  r_y + a r_x (p_y - p_x a) - r_y (p_y - p_x a) \leq \delta^2 a^2 + \delta^2 $$

Working out the square:

$$a^2 r_x^2 + r_y^2 + p_y^2 + p_x^2 a^2 - 2 p_y p_x a - a r_x  r_y + a r_x p_y - p_x r_x a^2 - r_y p_y + r_y p_x a - \delta^2 a^2 - \delta^2 \leq 0$$

Grouping terms:

$$a^2 (r_x^2 + p_x^2 - \delta^2 - p_x r_x) + a(r_x p_y + p_x r_y - r_x r_y - 2 p_y p_x) +  r_y^2 + p_y^2 - r_y p_y  - \delta^2 \leq 0$$

Since $a = (m_y - p_y)/(m_x - p_x)$, multiply by $(m_x - p_x)^2$ to get rid of divisions:

$$(m_y - p_y)^2 (r_x^2 + p_x^2 - \delta^2 - p_x r_x) + (m_y - p_y) (m_x-p_x) (r_x p_y + p_x r_y - r_x  r_y - 2p_y p_x) \dots$$

$$\dots  + (m_x - p_x)^2(r_y^2 + p_y^2 - r_y p_y  - \delta^2) \leq 0$$

Substituting $m_x = \frac{r_x + b_x}{2}$ and $m_y = \frac{r_y + b_y}{2}$:

$$(\frac{r_y + b_y}{2} - p_y)^2 (r_x^2 + p_x^2 - \delta^2 - p_x r_x) + (\frac{r_y + b_y}{2} - p_y) (\frac{r_x + b_x}{2}-p_x) (r_x p_y + p_x r_y - r_x r_y - 2p_y p_x) \dots$$

$$\dots + (\frac{r_x + b_x}{2} - p_x)^2(r_y^2 + p_y^2 - r_y p_y  - \delta^2) \leq 0$$

We can view $p$ and $\delta$ as constants, so the polynomial is of degree $4$ in $r$, $b$, and $\ell$.
\end{proof}

By Lemma~\ref{lem:lowerEnvelopeD} we can thus perform lower envelope queries in $E_S$ in $O(n^{3/4 + \eps})$ time.

\begin{lemma}
\label{lem:typeDDatastructure}
We can build a data structure that, given a query line $\ell$, can find the best type $d$ point on $\ell$. This data structure has size $O(n \log^2 n)$, and answers queries in $O(n^{3/4 + \eps})$ time.
\end{lemma}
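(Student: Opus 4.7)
The plan is to parallel the structure of Lemma~\ref{lem:typecEdgeOptimum}, using Lemma~\ref{lem:lowerEnvelopeD} as the inner primitive. By Lemma~\ref{lem:typeDLowerEnvelopeBest}, the best type $d$ point on $\ell$ corresponds to the MinMax edge $s$ minimizing $E_s(\ell^*)$ among those $s$ whose double-wedge $s^*$ contains $\ell^*$ (equivalently, whose dual segment on the MinMax chain is stabbed by $\ell$). So the query reduces to a \emph{constrained} lower-envelope query: minimize $E_s(\ell^*)$ subject to the semi-algebraic wedge-containment constraint $\ell^* \in s^*$.

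Parameterize each MinMax edge $s$ by $a_s = (r^*_x, r^*_y, b^*_x, b^*_y) \in \R^4$, exactly as in the preceding derivation. The wedge-containment predicate $[\ell^* \in s^*]$ is a Boolean combination of constant-degree polynomial inequalities in $a_s$ and $\ell^*$, and therefore defines a constant-complexity semi-algebraic range in the 4-dimensional parameter space; the error inequality $E_s(\ell^*) \le \delta$ is admissible as just shown. I would build a two-level structure: the outer level is a polynomial partition tree supporting semi-algebraic range searching for the wedge-containment constraint, and the inner level stores, at each canonical subset, the lower-envelope data structure of Lemma~\ref{lem:lowerEnvelopeD} built on the restriction of $E_S$ to that subset. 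A query locates the canonical subsets of MinMax edges whose wedges contain $\ell^*$, runs a lower-envelope query on each, and returns the overall minimum. Since both semi-algebraic predicates live in $\R^4$ and have constant description complexity, a standard multi-level analysis should yield $O(n^{1 - 1/4 + \eps}) = O(n^{3/4 + \eps})$ query time with a polylogarithmic space overhead giving $O(n \log^2 n)$ total space.

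The main obstacle is ensuring that the outer wedge-containment filter does not inflate the query time beyond the $O(n^{3/4 + \eps})$ cost of the inner lower-envelope subroutine. Concretely, I need that at every node the number of child cells that are either crossed by the inner level-set $\Gamma_{\ell^*}(\delta)$ or crossed by the outer wedge-containment range stays bounded by $c' r^{1 - 1/4}$, so that Lemma~\ref{lem:lowerEnvelopeD}'s recurrence $Q(n) = O(r^{3/4})\,Q(n/r) + O(1)$ still drives the analysis. The required crossing bound is the one used in~\cite{agarwal2013semialgRange2} for constant-complexity semi-algebraic ranges on polynomial partitions, but it must be invoked for both predicates simultaneously rather than via a naive outer-then-inner decomposition (which would compose the two $n^{3/4+\eps}$ factors and exceed the target). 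The remaining details (boundary-fuzzy symbolic perturbation, representative-point computations, and choice of the constant $r$) go through exactly as in the proof of Lemma~\ref{lem:lowerEnvelopeD}.
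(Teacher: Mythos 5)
Your proposal has a genuine gap, and it sits exactly at the step you flagged as uncertain, but it is more fundamental than a crossing-bound technicality. The claim that the wedge-containment predicate $[\ell^* \in s^*]$ is a constant-complexity semi-algebraic constraint in the $4$-dimensional parameter space $a_s = (r^*_x, r^*_y, b^*_x, b^*_y)$ is false. The double wedge $s^*$ is the dual of the \emph{bounded} MinMax segment $s$: its apex $m^*$ is determined by $r^*$ and $b^*$, but its angular extent is determined by the two endpoints of $s$ (the MinMax vertices), which are \emph{not} functions of $r^*$ and $b^*$ alone — they depend on the rest of the arrangement. So the stabbing condition ``$\ell$ crosses segment $s$'' genuinely needs the endpoint information. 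Folding those endpoints into the parameter vector adds at least two more real parameters (one per endpoint, since each lies on the supporting line $m$), so the parameter space is at least $6$-dimensional, and Lemma~\ref{lem:lowerEnvelopeD} would then give $O(n^{5/6+\eps})$, not $O(n^{3/4+\eps})$. Your ``simultaneous crossing bound'' idea would not rescue this, because the filter and the level set do not live in the same $4$-dimensional space in the first place.

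The paper's proof avoids this by \emph{not} trying to express segment stabbing semi-algebraically in the function-parameter space. Instead it observes that ``$\ell$ crosses $s$'' is equivalent to ``one endpoint of $s$ is above $\ell$ and the other is below $\ell$'', which are two ordinary halfplane conditions on $2$-dimensional points (the MinMax vertices). It therefore uses two outer levels of \emph{regular} $2$-dimensional partition trees to filter on the endpoints, and only then applies the $4$-dimensional polynomial partition tree of Lemma~\ref{lem:lowerEnvelopeD} for the lower-envelope query on each canonical subset. The outer $2$-dimensional levels contribute only $O(\sqrt{m})$ per subproblem of size $m$, which is strictly dominated by the inner $O(m^{3/4+\eps})$ term; the recurrence $Q(n) = c\sqrt{r}\,Q(n/r) + r\,Q_3(n)$ with $Q_3(n) = O(n^{3/4+\eps})$ then resolves to $O(n^{3/4+\eps})$. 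This cleanly separates the cheap geometric filter from the expensive semi-algebraic envelope query, and it is the step your proposal is missing.
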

\begin{proof}
This is a three-level data structure, where the first two levels are `regular' non-polynomial partition trees~\cite{chanPartitionTrees}, and the last level is our data structure from Lemma~\ref{lem:lowerEnvelopeD}.

More specifically, the level 1 partition tree is built on all MinMax vertices, and can return a set of canonical subsets representing all endpoints of MinMax segments that lie above $\ell$. On each canonical subset we build a level 2 partition tree, which can return a set of canonical subsets representing intersected segments whose other endpoint lies below $\ell$. Lastly, on these canonical subsets we build our level 3 data structure from Lemma~\ref{lem:lowerEnvelopeD} to find the lower envelope of the functions they define. This gives us the lower envelope of all functions defined by segments intersecting $\ell$, which by Lemma~\ref{lem:typeDLowerEnvelopeBest} gives us the best type $d$ point on $\ell$.

The canonical subsets of a partition tree have total size $O(n \log n)$, and since we have two levels of explicit canonical subsets this results in $O(n \log^2 n)$ space. 

Let $Q(n)$ be the query time of our entire multi-level data structure, and let $Q_2(n)$ and $Q_3(n)$ be the query times of the level 2 and level 3 data structure. By Lemma \ref{lem:lowerEnvelopeD}, $Q_3(n) = O(n^{3/4+\eps})$. Consider the level 2 partition tree, and consider a node $u$ containing $n$ points. Query line $\ell$ crosses $c\sqrt{r})$ children of $u$ on which we recurse (for some constant $c$), and we directly query the level 3 data structure on all $O(r)$ cells fully below $\ell$. Therefore the level 2 query time $Q_2(n)$ follows the following recurrence:

$$
Q_2(n) = \begin{cases}
c\sqrt{r}Q_2(n/r) + r Q_3(n)      & \text{for an internal node} \\
Q_3(n) = O(1)                       & \text{for a leaf node}
\end{cases}
$$

Using the Master Theorem we can see that $r Q_3(n)$ dominates the subproblems, and thus $Q_2(n) = r Q_3(n) = O(n^{3/4 + \eps})$. The overall query time $Q(n)$ follows an identical recurrence, and thus $Q(n) = O(n^{3/4 + \eps})$. 
\end{proof}

We need one more level: we have a data structure that can answer queries for a line $\ell$, but we need a data structure that can answer queries for a valid edge $e$.

\begin{lemma}
\label{lem:typeDDatastructureEdge}
We can build a data structure that, given a query edge $e$, can find the best type $d$ point on $e$. This data structure has size $O(n \log^3 n)$, and answers queries in $O(n^{3/4 + \eps})$ time.
\end{lemma}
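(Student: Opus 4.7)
The plan is to add one more level on top of the line-query data structure from Lemma~\ref{lem:typeDDatastructure}, paralleling the construction in Lemma~\ref{lem:typecEdgeOptimum}. The key observation is that the MinMax curve is $x$-monotone, so its edges form a sequence ordered by $x$-coordinate, and the MinMax edges whose $x$-range overlaps with that of a query valid edge $e$ form a contiguous subsequence.

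First I would build a balanced binary search tree on the MinMax edges, ordered along the $x$-monotone MinMax curve. On each canonical subset $C$ at a node of this tree I associate the data structure from Lemma~\ref{lem:typeDDatastructure} built on the MinMax edges in $C$. Since the canonical subsets have total size $O(n \log n)$ and the associated structure uses $O(m \log^2 m)$ space on $m$ edges, the total space is $O(n \log^3 n)$, as required.

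Given a query valid edge $e$ with supporting line $\ell$ and $x$-range $[x_1, x_2]$, I would traverse the BST to locate $O(\log n)$ canonical subsets whose MinMax edges all have $x$-range contained in $[x_1, x_2]$. For any such ``interior'' MinMax edge $s$, the intersection point of $s$ and $\ell$, if it exists, automatically has $x$-coordinate in the $x$-range of $s$, and hence in $[x_1, x_2]$, so it lies on the segment $e$ itself. Querying the associated line-level structure from Lemma~\ref{lem:typeDDatastructure} at $\ell$ therefore returns the best type $d$ point contributed by that canonical subset. At most two additional ``boundary'' MinMax edges, whose $x$-range straddles $x_1$ or $x_2$, need to be inspected directly in $O(1)$ time each, and I would return the candidate with smallest error over all of these $O(\log n)+2$ candidates.

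For the query time, summing $m_i^{3/4 + \eps}$ over the $O(\log n)$ returned canonical subsets with $\sum_i m_i \leq n$ and using concavity of $x \mapsto x^{3/4+\eps}$ bounds the total by $O(n^{3/4+\eps})$ after absorbing the polylogarithmic factor into $\eps$. The main subtlety, and the reason the reduction works so cleanly, is the $x$-monotonicity of the MinMax curve: without it we would need extra geometric filtering to ensure that intersections of interior MinMax edges with the line $\ell$ actually lie on the segment $e$, which would complicate both the space bound and the query time. Since this monotonicity is already established earlier in the paper, the plan reduces to routine multi-level data structuring.
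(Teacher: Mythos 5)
Your construction is essentially the paper's own proof: a balanced binary search tree over the $x$-sorted MinMax curve with the structure of Lemma~\ref{lem:typeDDatastructure} as the associated structure on canonical subsets, queried by decomposing the $x$-range of $e$ into $O(\log n)$ canonical nodes, giving $O(n\log^3 n)$ space. The only (harmless) differences are that the paper indexes by MinMax vertices rather than edges and bounds the query time via the recurrence $Q(n)=Q(n/2)+O(n^{3/4+\eps})$ rather than your Jensen-plus-absorb-the-polylog-into-$\eps$ argument; both yield $O(n^{3/4+\eps})$, and your explicit treatment of the two boundary edges is a detail the paper leaves implicit.
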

\begin{proof}
This data structure is analogous to the one used for type $c$ points in Lemma \ref{lem:typecEdgeOptimum}. We build a balanced binary search tree on the MinMax vertices sorted by $x$-coordinate, and build the data structure from Lemma \ref{lem:typeDDatastructure} on each canonical subset, using $O(n \log^3 n)$ space.

We handle a query identically: for a query edge $e$ with supporting line $\ell$ we find nodes representing the MinMax vertices in the $x$-interval of edge $e$, and query their associate data structure for the best type $d$ point in that $x$-interval. The query time follows the recurrence $Q(n) = Q(n/2) + O(n^{3/4 + \eps})$ which solves to $Q(n) = O(n^{3/4 + \eps})$.
\end{proof}

\begin{lemma}
We can maintain an optimal type $d$ point under restricted insertions in $O(k n^{3/4 + \eps} + q \log n)$ time, where $q$ is the number of valid vertices made invalid by an insertion, using $O(n \log^2 n)$ space. 
\end{lemma}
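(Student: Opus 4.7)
The plan is to mirror the strategy used for type $c$ points in Lemma~\ref{lem:typeCMaintain}. At preprocessing, I would build the data structure of Lemma~\ref{lem:typeDDatastructureEdge} on the MinMax edges so that for any query valid edge we can report the best type $d$ point lying on it in $O(n^{3/4+\eps})$ time. By restriction~1 the MinMax curve is static, hence this structure never needs to be updated during an insertion sequence. On top of this I would maintain, in a min-heap keyed by $\Max$, one ``local best'' type $d$ point per currently valid edge; the root of the heap is the current globally optimal type $d$ point.

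For the insertion of a red line $\ell$, the situation with valid edges is exactly as in Section~\ref{sub:dynamicA}: by Lemma~\ref{lem:lineKIntersections} $\ell$ crosses $O(k)$ valid regions, so only $O(k)$ valid edges are either newly created on $\ell$ or shortened by $\ell$, and some number $q$ of valid edges are destroyed. For every edge that is neither touched nor destroyed by $\ell$, the recorded local optimum remains correct: by restriction~1 the error $\Max(p)$ of any fixed point does not change. For each of the $O(k)$ new or shortened edges I would issue one query to the structure of Lemma~\ref{lem:typeDDatastructureEdge}, costing $O(n^{3/4+\eps})$ time, and insert or update its heap entry in $O(\log n)$ time. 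For each of the $q$ vanishing edges I would delete its heap entry in $O(\log n)$ time. Summing gives $O(k\,n^{3/4+\eps} + q\log n)$ per insertion, and the space is dominated by the underlying edge-query structure.

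The only subtle point, and where I would spend most of the care, is correctness on \emph{shortened} edges. The query of Lemma~\ref{lem:typeDDatastructureEdge} is structured as a binary search tree on MinMax vertices by $x$-coordinate, so it implicitly restricts attention to MinMax edges whose $x$-projections meet the $x$-interval of the query edge, and the inner partition trees further restrict to edges actually crossing the supporting line; thus, simply re-querying with the new (shorter) edge after the insertion automatically accounts for the reduced extent without needing to revisit edges that were not touched by $\ell$. Combined with the observation that a new type $d$ point can only appear on an edge that is new or has been shortened (otherwise it would already have been present, with the same error, before the insertion), this yields the claimed time bound.
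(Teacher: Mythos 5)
Your proposal is correct and follows essentially the same route as the paper: maintain the best type $d$ point per valid edge via the structure of Lemma~\ref{lem:typeDDatastructureEdge}, keep these local optima in a min-heap, re-query only the $O(k)$ new or shortened edges after an insertion, and delete heap entries for the $q$ vanished edges, giving $O(k\,n^{3/4+\eps} + q\log n)$ time per insertion. Your extra discussion of why shortened edges are handled correctly by simply re-querying is additional detail the paper leaves implicit, but it does not change the argument.
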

\begin{proof}
We can do this the exactly how we maintained an optimal type $c$ point in Lemma \ref{lem:typeCMaintain}, by maintaining the best type $d$ point on every valid edge, and using a min-heap to maintain the global optimum. After an insertion, for each of the $O(k)$ new or shortened edges we perform one query on the data structure from Lemma \ref{lem:typeDDatastructureEdge}, in $O(k n^{3/4 + \eps})$ total time. For the $q$ edges that disappear, we only have to remove their type $d$ point from the min-heap in $O(q \log n)$ total time.
\end{proof}

\subsection{Total amortized insertion time}
We can separately maintain the best point of each type $a,b,c,d$ using the above data structures; if an insertion makes $q_1$ valid vertices invalid and makes $q_2$ valid MinMax vertices invalid, then the insertion time is $O((k + q_1)\log^{6+\eps} n + \log n + q_2 \log^{6 + \eps} n + k \sqrt{n} \log n + q_1 \log n + kn^{3/4 + \eps} + q_1 \log n) = O(kn^{3/4+\eps} + q_1 \log^{6 + \eps} + q_2 \log^{6+\eps})$. Over a sequence of $m$ insertions, let $Q_1$ be the sum of all $q_1$'s, so the total number of valid vertices made invalid. Define $Q_2$ similarly. The total time of all these insertions is $O(mkn^{3/4+\eps} + Q_1 \log^{6 + \eps} + Q_2 \log^{6+\eps})$. 

For any insertion $q_1$ can be as large as $O(k^{4/3}n^{2/3} + n)$, since every valid vertex can be made invalid by a single insertion. However, since a vertex can become invalid only once and only $O(k)$ new vertices appear every insertion, $Q_1 = O(k^{4/3}n^{2/3} + n + mk)$. Similarly, although any $q_2$ can be $O(n)$, their sum $Q_2 = O(n)$ as well. Since $Q_2$ is smaller then $Q_1$, we focus only on $Q_1$.

We wish to find a value $m$ such that $O(mkn^{3/4+\eps})$ dominates $O(Q_1\log^{6+\eps})$; then the amortized insertion time would simply be $O(kn^{3/4+\eps})$. Some simple math tells us this holds for $m = \Omega(\frac{(k^{4/3}n^{2/3} + n) \log^{6+\eps}}{kn^{3/4 + \eps}}) = \Omega((\frac{k^{1/3}}{n^{1/12 + \eps}} + \frac{n^{1/4 - \eps}}{k})\log^{6+\eps})$. This finalizes the proof of \cref{thm:dynamic_tight_k}.

\section{\texorpdfstring{$\eps$}{e}-Approximation}
\label{sec:eps-Approximation}

Let $\sopt \in S_k(B\cup R)$ be an optimal valid separator minimizing
$\Max$, and let $\eps \in (0,1)$ be some given threshold. Our goal is
to compute a \emph{$(1 + \eps)$-approximation} of $\sopt$: that is, we
want to find a valid separator $\hat{s}$ with
$\Max(\hat{s}) \leq (1+\eps)\Max(\sopt)$. The main idea is to replace the Euclidean distance function $\dist$ by some convex distance function $\hat{d}$ that approximates $\dist$, and compute a separator $\hat{s}$ that minimizes $\hat{M}(s) = \max_{p \in X(s,B \cup R)} \hat{d}(p,s)$.

\subsection{Convex distance function}
Let $p$ be a point and $s$ be a line, let $t = \Theta(1/\sqrt{\eps})$, and let $T$ be a convex regular $t$-gon
centered at the origin inscribed by a unit disk. See Figure
\ref{fig:unitDiskTgon}. We then define the convex distance function
$\hat{d}(p,s) = \min\{\lambda \mid s \cap (p+\lambda T) \neq \emptyset
\}$ to be the smallest scaling factor for which a scaled copy of $T$
centered at $p$ intersects $s$. Observe that this distance is realized
in a corner $v$ of the $t$-gon; i.e. the $t$-gon scaled by a factor
$\hat{d}(p,s)$ intersects $s$ in a corner point $v$ of the $t$-gon,
see Figure \ref{fig:realizer}. We say $v$ is a \emph{realizer} for the
line~$s$. As $t$ increases, $T$ becomes more circular, and $\hat{d}$ and $\dist$ become more similar. It can be shown that $\dist(p,s) \leq \hat{d}(p,s) \leq (1 + \eps) \dist(p,s)$~\cite{dudley1974metricEntropyEpsilon,har2019proofOfDudley}, and thus $\Max(s) \leq \hat{M}(s) \leq (1+\eps)\Max(s)$. It follows that the separator $\hat{s}$ minimizing $\hat{M}$ is a $(1+\eps)$-approximation of $\sopt$.

\begin{figure}[tb]
\begin{minipage}{0.5\textwidth}
    \centering
    \includegraphics{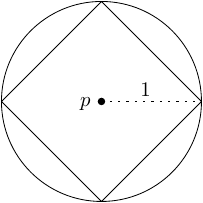}
    \caption{The unit circle and unit $4$-gon}
    \label{fig:unitDiskTgon}
\end{minipage}%
\begin{minipage}{0.5\textwidth}
    \centering
    \includegraphics{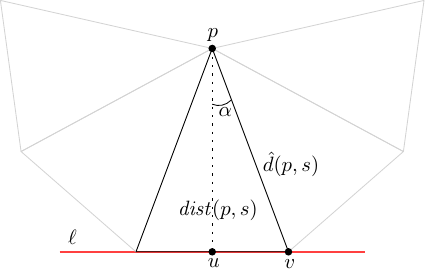}
    \caption{A cone of a $t$-gon around point $p$ aligned with line $s$}
    \label{fig:epsilonProof}
\end{minipage} 
\end{figure}

\begin{figure}[tb]
    \centering
    \includegraphics{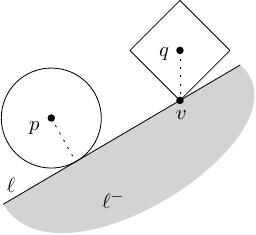}
    \caption{The Euclidean distance from point $p$ to halfspace
      $\ell^-$, and the convex (Manhattan) distance from point $q$ to halfspace
      $\ell^-$ with $t = 4$.
    }
    \label{fig:realizer}
\end{figure}

\subsection{Overview}

Each corner $v$ of the $t$-gon corresponds to some interval $J_v$ of slopes for which it is a realizer. These slope intervals $J_v$, over all corners of the $t$-gon, partition all possible slopes into $t$ intervals. For each slope interval $J_v$ we will compute an optimal valid separator $\hat{s}^v$ with slope in $J_v$, and finally $\hat{s} = \min_v \hat{s}^v$. From now on, we consider one such slope interval $J_v$. Assume w.l.o.g. that $v$ is vertically below the center point of the $t$-gon (we can rotate the plane to achieve this). This means interval $J_v$ is centered at slope $0$, so $J_v = (-\pi / t, \pi /t)$.

The distance between a point and a line is now simply their vertical distance. More formally, let $r = (r_x,r_y) \in R$ be a red point, and let $s : y = mx + c$ be a line with $m \in J_v$. By filling in $r$ in the equation of $s$ we find the signed (vertical) distance between $r$ and $s$ is $d_r(s) = r_y - (m r_x + c)$. Recall red points are misclassified if they lie in the halfplane $s^+$ above $s$. Thus the error of point $r$ is its distance to $s^-$, which is $\hat{M}_r(s) = \max\{0,d_r(s)\}$. Similarly, the error of a blue point $b \in B$ is $\hat{M}_b(s) = \max\{0,d_b(s)\}$ where $d_b(s) = m b_x + c - b_y$. Hence, our goal is to compute a separator $\hat{s}^v$ with $m \in J_v$ that minimizes total error $\hat{M}(s) = \max_{p \in B \cup R} d_p(s)$, while misclassifying at most $k$ points. 

We will be working in the dual. We are only interested in separators with slope in $J_v$, which restricts us to a vertical slab from $x = -\pi / t$ to $x = \pi /t$. Recall that the dual transformation preserves vertical distance between points and lines, so for a dual point $s$ the error $\hat{M}(s)$ is the vertical distance from $s$ to the furthest misclassified line.

We first create a data structure that can quickly
find a valid separator (a dual point) $s$ with $\hat{M}(s) \leq \delta$, for a given value
$\delta$, if it exists. We then use parametric search to find the optimal value $\delta$, and a separator $\hat{s}^v$ with that error.

\subsection{Decision problem for a given \texorpdfstring{$\delta$}{d}}
For a given value $\delta$, we want to quickly find a valid
separator $s$, with slope in $J_v$ and
$\hat{M}(s) \leq \delta$, if it exists. We extend Chan's algorithm from Section \ref{sec:chansAlgorithm} to this end.

Recall that the error of a point is its vertical distance to
the furthest misclassified line, which lies
either on $L'_0(B)$ or $L_0(R)$. Therefore, all points with error at most $\delta$ lie at most $\delta$ below $L'_0(B)$. This can be imagined as moving $L'_0(B)$ down by $\delta$. Let the resulting chain be the \emph{convex $\delta$-chain}. See Figure \ref{fig:deltaChainsDeltaRegion}. Similarly, let the \emph{concave $\delta$-chain} be $L_0(R)$ moved up by $\delta$. All points with error at most $\delta$ must thus lie above the convex $\delta$-chain, and below the concave $\delta$-chain: let the \emph{$\delta$-region} be the intersection of these two regions, a convex polygon of complexity $O(n)$. Since we are interested only in separators with slope in $J_v$, we clip the $\delta$-region to the vertical slab induced by $J_v$, after which it is still convex.

We do not need to consider any points $s$ outside the $\delta$-region, as they either have a slope not in $J_v$ or have error $\hat{M}(s^*) > \delta$. So the question becomes: does there exist a valid point in the $\delta$-region? 

\begin{lemma}
\label{lem:approxConvexConcaveOptima}
If any valid point exists in the $\delta$-region, there must exist a valid point on the intersection of a convex chain (a blue chain or the convex $\delta$-chain) and a concave chain (a red chain or the concave $\delta$-chain).
\end{lemma}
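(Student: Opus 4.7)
My plan is to generalize the argument used in the proof of Lemma~\ref{lem:leftMostRedBlue}. I would start with an arbitrary valid point $p$ inside the $\delta$-region, and let $V$ be the connected component of $S_k(B\cup R)$ containing $p$. Let $W$ be the connected component of $V \cap (\delta\text{-region})$ containing $p$. Because the $\delta$-region is a bounded convex polygon (being the intersection of the two half-regions cut off by the two $\delta$-chains, clipped to the slab induced by $J_v$), $W$ is a bounded planar region.

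By Lemma~\ref{lem:leftMostRedBlue}, the boundary of $V$ consists of red lines on top and blue lines on the bottom, while the boundary of the $\delta$-region consists of the concave $\delta$-chain above and the convex $\delta$-chain below. Hence every upper boundary arc of $W$ lies either on a red line (and thus on some red chain from the concave decomposition of $L_{\leq k}(R^*)$) or on the concave $\delta$-chain---in either case on a concave chain. Symmetrically, every lower boundary arc of $W$ lies on a convex chain (a blue chain or the convex $\delta$-chain).

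I then take the leftmost point $q$ of $W$, breaking ties by smallest $y$-coordinate. I argue that $q$ must lie simultaneously on an upper and a lower boundary arc of $W$: if $q$ were in the interior of $W$ we could shift it leftward; if $q$ lay only on an upper arc we could drop it straight down (staying leftmost while decreasing $y$, contradicting the tie-breaking); and if $q$ lay only on a lower arc, that arc would have to continue to the left of $q$, contradicting leftmostness. So $q$ lies on both a concave chain and a convex chain; since $q \in W$, it is valid, as required.

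The main obstacle I expect is the corner case where the extremum $q$ falls on one of the two vertical slab boundaries bounding $J_v$, in which the leftward-blocking curve is a slab side rather than a chain. I would handle this either by a symbolic perturbation of the slab (inflating it infinitesimally so that $q$ moves strictly into the interior and the argument above applies) or by noting that such a $q$ has slope on the shared boundary of $J_v$ and the adjacent interval $J_{v'}$, so its chain-intersection witness is produced on the opposite side when the algorithm processes $J_{v'}$. Since the overall algorithm iterates over all $t$ slope intervals and returns the best candidate, these boundary artifacts are absorbed automatically.
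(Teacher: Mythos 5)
Your proof is correct and takes essentially the same route as the paper's: it intersects the valid region containing the given point with the $\delta$-region, observes that the result is bounded above by concave chains and below by convex chains, and extracts a convex--concave intersection as the leftmost point of that intersection, exactly in the spirit of Lemma~\ref{lem:leftMostRedBlue}. The only difference is that you explicitly flag and patch the corner case where the extremal point lands on a vertical slab boundary, a detail the paper's own (terser) proof silently ignores.
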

\begin{proof}
Let $s$ be the valid point, and let $C$ be the valid region containing $s$. Let $C'$ be the intersection of $C$ and the $\delta$-region. Both $C$ and the $\delta$-region are bounded on top by concave chains, and bounded on the bottom by convex chains. Therefore, $C'$ is also bounded on top by concave chains, and bounded on the bottom by convex chains. In particular, the leftmost point of $C'$ lies on a convex-concave intersection, and by construction it is valid and lies in (the boundary of) the $\delta$-region, proving the lemma.
\end{proof}
The set of concave-convex intersections consist of $O(k^2)$ red-blue intersections, $O(k)$ intersections of the concave $\delta$-chain with blue chains, $O(k)$ intersections between the convex $\delta$-chain with red chains, and $O(1)$ intersections between the $\delta$-chains. Candidate convex-concave intersections in Figure \ref{fig:deltaRegionCandidatePoints} are marked.

We can calculate all $O(k^2)$ red-blue intersections during preprocessing, among them find the valid point with smallest error $p_{min}$, and simply forget about all others. The $O(k)$ intersections involving the $\delta$-chains can not be preprocessed, as they require a query value $\delta$. Explicitly constructing the $\delta$-chains for a query would take $O(n)$ time, as the envelopes have complexity $O(n)$, so we do it implicitly by just adding $\delta$ to the $y$-coordinate of $L_0(R)$ (or subtracting $\delta$ from the $y$-coordinate of $L'_0(B)$) whenever we need this value in a calculation.

\begin{figure}[tb]
\begin{minipage}{.45\textwidth}
    \centering
    \includegraphics{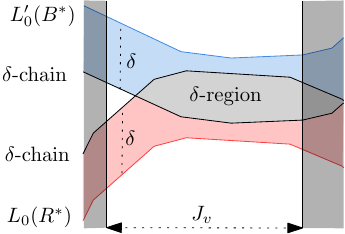}
    \caption{An overview of the geometry: the slab induced by $J_v$, the $\delta$-chains created by moving $L'_0(B)$ down and $L_0(R)$ up by $\delta$, and the $\delta$-region formed by them}
    \label{fig:deltaChainsDeltaRegion}
\end{minipage}
\hspace{0.1\textwidth}
\begin{minipage}{.45\textwidth}
    \centering
    \includegraphics{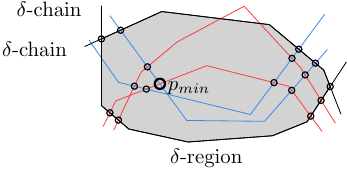}
    \caption{A $\delta$-region, with some red and blue chains. Candidate intersections are marked.}
    \label{fig:deltaRegionCandidatePoints}
\end{minipage}
\end{figure}

\subparagraph{Data structure.}
The data structure consists of three parts:
\begin{itemize}
    \item A concave chain decomposition of $L_{\leq k}(R)$, and a convex chain decomposition of $L'_{\leq k}(B)$, with a chromatic ply data structure for every chain.
    \item The point $p_{min}$, the red-blue intersection with lowest error.
    \item The envelopes $L_0(R)$ and $L'_0(B)$.
\end{itemize}
This can be constructed in $O((n + k^2) \log n)$ time using Chan's method as explained in Section \ref{sec:chansAlgorithm}, and uses $O(n + k^2)$ space.

\subparagraph{Query.}
We answer a query with value $\delta$ as follows:
\begin{itemize}
\item Check if $\hat{M}(p_{\min}) \leq \delta$. If so, return $p_{\min}$. This takes $O(1)$ time.
\item Find the $O(k)$ convex-concave intersections involving the $\delta$-chains. This takes $O(k \log n)$ time. 
\item Build a red ply data structure for the convex $\delta$-chain, and build a blue ply data structure for the concave $\delta$-chain. This takes $O(k \log k)$ time.
\item For each intersection, calculate whether it is valid or not using the chromatic ply data structures on the chains, and calculate its error. This takes $O(k \log k)$ time.
\item Return the valid intersection with lowest error if its error is at most $\delta$, otherwise there exists no point with error at most $\delta$. This takes $O(k)$ time.
\end{itemize}

\begin{lemma}
\label{lem:decisionDelta}
We can build a data structure that, given a query value $\delta \in \R$, can
find a valid separator $s$ with slope in $J_v$ such that $\hat{M}(s^*) \leq \delta$, if it
exists. This data structure takes $O(n + k^2)$ space
and $O((n + k^2)\log n)$ time to build, and queries are answered in $O(k \log n)$ time.
\end{lemma}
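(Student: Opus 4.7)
My plan is to establish the lemma by combining the structural characterization from Lemma~\ref{lem:approxConvexConcaveOptima} with a precomputed concave/convex chain decomposition in the style of Chan's algorithm from Section~\ref{sec:chansAlgorithm}. First I would argue correctness: by Lemma~\ref{lem:approxConvexConcaveOptima}, a valid separator $s$ with $\hat{M}(s^*) \leq \delta$ exists if and only if there is a valid convex–concave intersection point inside the $\delta$-region. These candidate intersections split into two classes: the $O(k^2)$ red–blue intersections in $L_{\leq k}(R) \cap L'_{\leq k}(B)$, which depend only on the input, and the intersections involving either the convex $\delta$-chain or the concave $\delta$-chain, which depend on $\delta$.

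For preprocessing, I would build the concave chain decomposition of $L_{\leq k}(R)$ and the convex chain decomposition of $L'_{\leq k}(B)$, together with a chromatic ply data structure on each chain, in $O((n + k^2) \log n)$ time and $O(n + k^2)$ space using Chan's machinery. I would also compute all $O(k^2)$ red–blue intersections, evaluate $\hat{M}$ and validity at each using the ply structures, and retain only the valid intersection $p_{\min}$ of smallest error, discarding the rest. Finally I would store $L_0(R)$ and $L'_0(B)$ explicitly; all space and preprocessing bounds then follow directly.

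At query time I would first test whether $\hat{M}(p_{\min}) \leq \delta$ in $O(1)$ time and, if so, return $p_{\min}$. Otherwise I must enumerate the $\delta$-dependent candidates. Crucially I never materialize the $\delta$-chains themselves, which have complexity $\Theta(n)$; instead I work with $L_0(R)$ and $L'_0(B)$ and add or subtract $\delta$ inside each vertical comparison. Since the convex $\delta$-chain is convex and each of the $O(k)$ concave red chains is concave, they meet in at most two points per chain, yielding $O(k)$ candidate intersections; the concave $\delta$-chain against the $O(k)$ blue chains is symmetric, and the two $\delta$-chains intersect a constant number of times. Each of these $O(k)$ crossings can be located in $O(\log n)$ time by a binary search within the edge sequence of the relevant chain, for a total of $O(k \log n)$. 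I would then build red/blue ply data structures for the two $\delta$-chains in $O(k \log k)$ time, and for each of the $O(k)$ candidate points evaluate validity and error in $O(\log k)$ time via the ply structures, returning the best valid candidate whose error is at most $\delta$.

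The main obstacle I anticipate is precisely the avoidance of a linear per-query cost that a naive handling of the $\delta$-chains would incur, since their complexity is $\Omega(n)$. The key observation that makes the plan work is that convexity–concavity reduces the number of bichromatic crossings between a $\delta$-chain and the full $k$-level decomposition to $O(k)$, even though the $\delta$-chain itself has $\Theta(n)$ edges; exploiting the monotone structure of each individual chain lets me replace linear scans by binary searches. Putting everything together, the query cost is dominated by the $O(k \log n)$ intersection search, the space remains $O(n + k^2)$, and the preprocessing matches Chan's $O((n + k^2) \log n)$ bound, as claimed.
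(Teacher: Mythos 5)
Your proposal is correct and follows essentially the same route as the paper: the same preprocessing (Chan's chain decompositions with chromatic ply structures, the precomputed best valid red--blue intersection $p_{\min}$, and the envelopes $L_0(R)$, $L'_0(B)$) and the same query procedure that handles the $\delta$-chains implicitly, locates the $O(k)$ $\delta$-dependent convex--concave intersections in $O(k\log n)$ time, and checks validity and error via the ply structures. The only difference is that you spell out the binary-search and intersection-counting details slightly more explicitly than the paper does, which is fine.
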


In fact we can parallelize these queries with $k$ processors; this will be useful later for parametric search.

\subparagraph{Parallel query with $k$ processors}
We answer a query with value $\delta$ as follows:
\begin{itemize}
\item Check if $M(p_{\min}) \leq \delta$. If so, return $p_{\min}$. This step does not need to be parallelised. $O(1)$ time.
\item Find the $O(k)$ convex-concave intersections involving the $\delta$-chains. Since all intersections are independent, with $k$ processors this takes $O(\log n)$ time.
\item Build a red ply data structure for the convex $\delta$-chain, and build a blue ply data structure for the concave $\delta$-chain. Since the ply data structure simply consists of two sorted lists, this comes down to sorting $k$ values, which takes $O(\log k)$ time with
  $k$ processors~\cite{cole88paral_merge_sort}. 
\item For each intersection, calculate whether it is valid or not using the chromatic ply data structures on the chains, and calculate its error. Again these calculations are independent, so this takes $O(\log k)$ time.
\item Return the valid point with lowest error if its error is at most $\delta$, otherwise there exists no point with error at most $\delta$. With a simple divide and conquer approach this takes $O(\log k)$ time.
\end{itemize}

\begin{lemma}
\label{lem:decisionDeltaPara}
Using the data structure of Lemma \ref{lem:decisionDelta}, with $k$ processors, we can answer queries in $O(\log n)$ time.
\end{lemma}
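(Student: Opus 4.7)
The plan is to parallelize each step of the sequential query procedure from Lemma~\ref{lem:decisionDelta}, and verify that with $k$ processors the dominant step runs in $O(\log n)$ time. I would first note that the preliminary comparison $\hat{M}(p_{\min}) \leq \delta$ is constant-time and needs no parallelization. The substantive work then consists of four tasks: (i) locate the $O(k)$ convex-concave intersections involving the two $\delta$-chains, (ii) build chromatic ply data structures on each $\delta$-chain, (iii) evaluate validity and error at each candidate intersection, and (iv) select the valid candidate with minimum error.

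For (i), I would observe that the intersections between the convex $\delta$-chain and the $O(k)$ concave red chains (and symmetrically between the concave $\delta$-chain and the $O(k)$ convex blue chains) are independent, so one processor can be assigned per chain. Locating a single intersection reduces to a binary search on the chain against the $\delta$-chain; since each chain has complexity $O(n)$, this costs $O(\log n)$ sequential time per processor, hence $O(\log n)$ in parallel.

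For (ii), each chromatic ply data structure on a $\delta$-chain is just a sorted list of endpoints of $O(k)$ intervals induced by the opposite-colored chains. Sorting $k$ items with $k$ processors takes $O(\log k)$ time by Cole's parallel merge sort~\cite{cole88paral_merge_sort}. For (iii), each of the $O(k)$ candidate intersections can be handled independently; validity and error each require one $O(\log k)$ ply query, so with one processor per intersection this step runs in $O(\log k)$ depth. For (iv), selecting the minimum-error valid candidate among $O(k)$ values is a standard parallel tree reduction, again in $O(\log k)$ time.

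The main obstacle to check is simply that step (i) does not require anything more clever than per-chain binary search — and it does not, because each $\delta$-chain is explicit (up to a $y$-shift of $L_0(R)$ or $L'_0(B)$) and every red/blue chain is maintained by the data structure of Lemma~\ref{lem:decisionDelta}. Summing the depths, the whole parallel query runs in time dominated by (i), giving the claimed $O(\log n)$ bound.
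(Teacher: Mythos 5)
Your proposal matches the paper's own proof step for step: same preliminary check, same per-chain independence for the intersections, same Cole merge-sort for the ply structures, same independent validity/error evaluation, and same parallel reduction for the minimum. You add a bit of useful detail (explicitly noting one processor per chain and that locating an intersection is a binary search of depth $O(\log n)$), but the argument is essentially the one in the paper.
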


\subsection{Finding optimal \texorpdfstring{$\delta$}{d} (parametric search)}
In Lemma \ref{lem:decisionDelta} we give a decision algorithm: for a given value $\delta$, does there exist a valid point with error at most $\delta$? Using \emph{parametric search}~\cite{megiddo1983parametric, agarwal1998parametricSearch} we can turn this decision algorithm into an optimisation algorithm: what is the lowest value $\delta^*$ for which there exists a valid point? Parametric search requires that the decision algorithm (1) behaves discontinuously at the optimal value $\delta^*$, and (2) only computes roots of polynomials in $\delta$ of small degree to govern its control flow. Clearly condition (1) holds for our problem, since for any $\delta < \delta^*$ there does not exist a valid point with error at most $\delta$, while for any $\delta \geq \delta^*$ there does. To see that condition (2) holds, observe that our algorithm finds intersections between convex and concave chains, sorts lists of points, and binary searches on lists or chains. All these steps yield low degree polynomials. 

Parametric search requires a parallel decision algorithm that uses $p$ processors and runs in $T_p$ parallel steps, as well as a sequential algorithm that runs in $T_s$ time, and computes $\delta^*$ in $O(p T_p + T_p T_s \log p)$. Lemma \ref{lem:decisionDelta} gives a sequential algorithm running in $T_s = O(k \log n)$ time (after building the data structure in $O((n + k^ 2)\log n)$ time) and Lemma \ref{lem:decisionDeltaPara} gives a parallel algorithm running in $T_p = O(\log n)$ time with $p = k$ processors. This yields $O(k \log n + \log n k \log n \log k) = O(k \log^2 n \log k)$ time for the parametric search. Including the time to build the data structure this yields an $O((n + k^2)\log n + k \log^2 n \log k) = O((n + k^2)\log n)$ time algorithm for finding $\delta^*$.

\begin{lemma}
\label{lem:optimalDelta}
We can find a valid separator $\hat{s}^v$ with slope in $J_v$ that minimizes $\hat{M}(\hat{s}^v)$ in $O((n + k^2) \log n)$ time.
\end{lemma}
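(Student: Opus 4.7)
The plan is to turn the decision procedure of Lemma \ref{lem:decisionDelta} into an optimization algorithm via Megiddo's parametric search \cite{megiddo1983parametric,agarwal1998parametricSearch}. First I would build the data structure of Lemma \ref{lem:decisionDelta} once in $O((n+k^2)\log n)$ time. This preprocessing (the concave/convex chain decompositions, their chromatic ply structures, the point $p_{\min}$, and the envelopes $L_0(R), L'_0(B)$) does not depend on $\delta$, so it can be shared by every subsequent decision query, whether sequential or parallel.

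Next I would apply parametric search with $T_s = O(k \log n)$ for the sequential oracle (Lemma \ref{lem:decisionDelta}) and $T_p = O(\log n)$ parallel steps on $p = k$ processors for the generic algorithm being simulated (Lemma \ref{lem:decisionDeltaPara}). Before doing so, I need to confirm the two standard prerequisites. The first, discontinuity at $\delta^\ast$, is immediate: for $\delta < \delta^\ast$ no valid point of error at most $\delta$ exists, while for $\delta \geq \delta^\ast$ one does. The second, that every branch of the decision algorithm is governed by the sign of a constant-degree polynomial in $\delta$, I would verify piece by piece. The $\delta$-chains are vertical translates of the fixed envelopes by $\pm\delta$, so each intersection of a red/blue chain with a $\delta$-chain reduces to solving a degree-one equation in $\delta$; the sort that builds the ply data structures on the $\delta$-chains compares $O(k)$ values each linear in $\delta$; and the final comparison of an error against the threshold $\delta$ is again linear in $\delta$. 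All comparisons therefore have constant-degree algebraic character, as required.

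Plugging these parameters into the standard parametric-search bound $O(pT_p + T_p T_s \log p)$ gives $O(k\log n + \log n \cdot k\log n \cdot \log k) = O(k \log^2 n \log k)$ to locate $\delta^\ast$. Adding the preprocessing cost of $O((n+k^2)\log n)$ dominates (since $k \log^2 n \log k = O((n+k^2)\log n)$), matching the claimed bound. To output the separator itself and not merely its value, I would finish by invoking the sequential decision algorithm one last time with $\delta = \delta^\ast$, which returns a valid point realizing that error; this is a single $O(k\log n)$ call and does not affect the asymptotics.

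The step I expect to be the most delicate is the bookkeeping for the algebraic condition: the binary searches on the chains, the tests that decide which side of a chain a candidate intersection lies on, and the comparisons used to pick the valid intersection of smallest error must all be phrased as sign tests of low-degree polynomials in $\delta$. This is not conceptually hard, but it must be spelled out carefully because parametric search silently fails if any comparison hides a non-algebraic dependence on $\delta$. Since every primitive of Lemma \ref{lem:decisionDelta} manipulates objects that depend linearly on $\delta$ (the $\delta$-chains are rigid vertical shifts of $L_0(R)$ and $L'_0(B)$), I expect this verification to go through cleanly.
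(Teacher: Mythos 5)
Your proposal is correct and follows essentially the same route as the paper: build the data structure of Lemma~\ref{lem:decisionDelta} once, then run Megiddo's parametric search with $T_s=O(k\log n)$, $T_p=O(\log n)$, $p=k$, verifying discontinuity at $\delta^\ast$ and that all comparisons reduce to constant-degree (indeed linear) polynomials in $\delta$, giving $O(k\log^2 n\log k)$ for the search plus $O((n+k^2)\log n)$ for preprocessing. Your piece-by-piece check of the algebraic condition and the explicit final decision call at $\delta=\delta^\ast$ to extract the separator are slightly more detailed than the paper but otherwise match.
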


Recall that this is for one slope interval $J_v$. Doing this for all $t = \Theta(1/\sqrt{\eps})$ intervals results in the following:

\approximationAlgorithm*

\subsection{Maintaining an approximate separator}
\label{sub:Maintaining_an_approximate_separator}
Recall the data structure from Lemma \ref{lem:decisionDelta} which, for a given slope interval $J_v$ and value $\delta$, can find if there exists a valid separator in $J_v$ with error at most $\delta$. Also recall Lemma \ref{lem:semiOnlineChains}, which tells us we can maintain a concave chain decomposition of the $\leq k$ level of a set of lines. By combining the two, we can maintain an approximate optimal solution to the k-Mis MinMax problem.

We can maintain the data structure from Lemma \ref{lem:decisionDelta} in the semi-online setting in $O(k \log^3 n)$ update time: we maintain the chains, chromatic ply data structures, and point $p_{\min}$ using Lemma \ref{lem:semiOnlineIntersections}, and additionally maintain the envelopes of $R$ and $B$. After every insertion we perform the parametric search from Lemma \ref{thm:2d_approximation_algorithm} to find a new approximate optimum. The only difference is that we now have $O(k \log n)$ chains, instead of $O(k)$, which changes the runtime slightly. The $\delta$-chains now intersect $O(k \log n)$ other chains, resulting in a sequential time $T_s = O(k \log^2 n)$, and a parallel time that remains $T_pO(\log n)$ but now using $p = O(k \log n)$ processors. Filling in the runtime formula for parametric search now gives $O(p T_p + T_p T_s \log p) = O(k \log n \log n + \log n k \log^2 n \log \log k) = O(k \log^3 n \log \log k)$ time.

Doing this for all $t = \Theta(1/\sqrt{\eps})$ slope intervals $J_v$ results in the following:

\dynamicApproximation*

\bibliography{references}

\clearpage
\appendix

\end{document}